\documentclass[a4paper,UKenglish,cleveref, autoref, thm-restate]{lipics-v2021}
\pdfoutput=1
\usepackage{amsmath,amssymb}
\usepackage{amsthm}
\usepackage[utf8]{inputenc}
\usepackage{todonotes}

\usepackage{thmtools}
\usepackage{chngcntr}

\usepackage{tikz}
\usepackage{float}
\usepackage{cancel}
\usepackage{algorithm}
\usepackage{algpseudocode}

\lstset{emph={%  
		while,do,int,goto,done,each,for,all,set,fi,then,od,or, endwhile, =, :=, ;, +, -, *, >, <,
		push, pop, pass, end, repeat, until%
	},emphstyle={\bfseries}%
}

\usetikzlibrary{arrows,automata}

\def\N{\mathbb{N}}

\def\ta{\mathtt{a}}
\def\tb{\mathtt{b}}
\def\tc{\mathtt{c}}

\def\tx{\mathtt{x}}

\DeclareMathOperator{\Subseq}{SubSeq}
\DeclareMathOperator{\Univ}{Univ}
\DeclareMathOperator{\PerfUniv}{PUniv}
\DeclareMathOperator{\PUniv}{PUniv}

\DeclareMathOperator{\letters}{alph}

\DeclareMathOperator{\ar}{ar}

\def\r{\operatorname{r}}

\DeclareMathOperator{\al}{alph}

\def\nth#1{#1$^{\text{th}}$}

\theoremstyle{definition}
\newtheorem{problem}[definition]{Problem}

% ARCHIVE / PAPER VERSION CONTROLER
\newif\ifpaper
%\papertrue % SET TRUE FOR PAPER VERSION
\paperfalse % SET FALSE FOR ARCHIVE VERSION
\nolinenumbers

\bibliographystyle{plainurl}% the mandatory bibstyle

\title{$k$-Universality of Regular Languages} %TODO Please add

\titlerunning{$k$-Universality of Regular Languages} %TODO optional, please use if title is longer than one line

\author{Duncan Adamson}{Leverhulme Centre for Functional Material Design, University of Liverpool, UK }{d.a.adamson@liverpool.ac.uk}{0000-0003-3343-2435}{}
\author{Pamela Fleischmann}{Department of Computer Science, Kiel University, 
Germany}{fpa@informatik.uni-kiel.de}{https://orcid.org/0000-0002-1531-7970}{}
\author{Annika Huch}{Department of Computer Science, Kiel University, Germany}{stu216885@mail.uni-kiel.de}{}{}
\author{Tore Koß}{Department of Computer Science, University of Göttingen, Germany}{tore.koss@cs.uni-goettingen.de}{https://orcid.org/0000-0001-6002-1581}{}
\author{Florin Manea}{Department of Computer Science, University of Göttingen, Germany}{florin.manea@cs.uni-goettingen.de}{https://orcid.org/0000-0001-6094-3324}{}
\author{Dirk Nowotka}{Department of Computer Science, Kiel University, 
Germany}{dn@informatik.uni-kiel.de}{https://orcid.org/0000-0002-5422-2229}{}

%\author{}{Dummy University Computing Laboratory, [optional: Address], Country \and My second affiliation, Country \and \url{http://www.myhomepage.edu} }{johnqpublic@dummyuni.org}{https://orcid.org/0000-0002-1825-0097}{(Optional) author-specific funding acknowledgements}%TODO mandatory, please use full name; only 1 author per \author macro; first two parameters are mandatory, other parameters can be empty. Please provide at least the name of the affiliation and the country. The full address is optional

%\author{Joan R. Public\footnote{Optional footnote, e.g. to mark corresponding author}}{Department of Informatics, Dummy College, [optional: Address], Country}{joanrpublic@dummycollege.org}{[orcid]}{[funding]}

\authorrunning{D.~Adamson, P.~Fleischmann, A.~Huch, T.~Koß, F.~Manea and D.~Nowotka} %TODO mandatory. First: Use 
%abbreviated first/middle names. Second (only in severe cases): Use first author plus 'et al.'

\Copyright{Duncan Adamson, Pamela Fleischmann, Annika Huch, Tore Koß, Florin Manea and Dirk Nowotka} %TODO mandatory, please use full first names. LIPIcs license is "CC-BY";  http://creativecommons.org/licenses/by/3.0/

\ccsdesc[100]{}
%\ccsdesc[100]{\textcolor{red}{Replace ccsdesc macro with valid one}} %TODO mandatory: Please choose ACM 2012 classifications from https://dl.acm.org/ccs/ccs_flat.cfm 

\keywords{String Algorithms, Regular Languages, Finite Automata, Subsequences} %TODO mandatory; please add comma-separated list of keywords

\category{} %optional, e.g. invited paper

\relatedversion{} %optional, e.g. full version hosted on arXiv, HAL, or other respository/website
%\relatedversion{A full version of the paper is available at \url{...}.}

\supplement{}%optional, e.g. related research data, source code, ... hosted on a repository like zenodo, figshare, GitHub, ...

%\funding{}
%\acknowledgements{}
%\nolinenumbers

%Editor-only macros:: begin (do not touch as author)%%%%%%%%%%%%%%%%%%%%%%%%%%%%%%%%%%
\EventEditors{Satoru Iwata and Naonori Kakimura}
\EventNoEds{2}
\EventLongTitle{34th International Symposium on Algorithms and Computation (ISAAC 2023)}
\EventShortTitle{ISAAC 2023}
\EventAcronym{ISAAC}
\EventYear{2023}
\EventDate{December 3-6, 2023}
\EventLocation{Kyoto, Japan}
\EventLogo{}
\SeriesVolume{283}
\ArticleNo{3}
%%%%%%%%%%%%%%%%%%%%%%%%%%%%%%%%%%%%%%%%%%%%%%%%%%%%%%

%%%%%%%%%%%%%%%%%%%%%%%%%%%%%%%%%%%%%%%%%%%%%%%%%%%%%%%%%%%%%%%%%%%%%%%%%
%%%%%%%%%%%%%%%%%%%%%%%%%%%%%%%%%%%%%%%%%%%%%%%%%%%%%%%%%%%%%%%%%%%%%%%%%

\begin{document}

\hideLIPIcs

\maketitle

\begin{abstract}
A subsequence of a word $w$ is a word $u$ such that $u = w[i_1] w[i_2]  \dots w[i_{k}]$, for some set of indices 
$1 \leq i_1 < i_2 < \dots < i_k \leq \lvert w\rvert$.
A word $w$ is $k$-subsequence universal over an alphabet $\Sigma$ if every word in $\Sigma^k$ appears in $w$ as a subsequence.
In this paper, we study the intersection between the set of $k$-subsequence universal words over some alphabet $\Sigma$ and
regular languages over $\Sigma$.
We call a regular language $L$ \emph{$k$-$\exists$-subsequence universal} if there exists a $k$-subsequence universal word in
$L$, and \emph{$k$-$\forall$-subsequence universal} if every word of $L$ is $k$-subsequence universal.
We give algorithms solving the problems of deciding if a given regular language, represented by a finite automaton recognising 
it, is \emph{$k$-$\exists$-subsequence universal} and, respectively, if it is \emph{$k$-$\forall$-subsequence universal}, for a given $k$. 
The algorithms are FPT w.r.t.~the size of the input alphabet, and their run-time does not depend on $k$; they run in polynomial time in the number $n$ of states of the input automaton when the size of the input alphabet is $O(\log n)$. Moreover, we show that the problem of deciding if a given regular language is \emph{$k$-$\exists$-subsequence universal} is NP-complete, when the language is over a large alphabet. Further, we provide algorithms for counting the number of $k$-subsequence universal words (paths) accepted by a given deterministic (respectively, nondeterministic) finite automaton, and ranking an input word (path) within the set of $k$-subsequence universal words accepted by a given finite automaton. 
\looseness=-1

\end{abstract}

\section{Introduction}
%%%%%%%%%%%%%%%%%%%%%%%%%%%%%%%%%%%%%%%%%%%%%%%%%%%%%%%%%%%%%%%%%%%%%%
Words and subsequences are two fundamental combinatorial objects.
Informally, a subsequence of a word $w$ is a word $u$ that can be obtained by deleting some of $w$'s letters while preserving the order of the rest. For instance, $\mathtt{taunt}$ and $\mathtt{salty}$ are subsequences of $\mathtt{automatauniversality}$, while $\mathtt{trauma}$ is not since these letters do not occur in the correct order.
Subsequences are a heavily studied object within computer science \cite{barker2020scattered,day2021edit,fleischmann2022nearly,halfon2017decidability,KimKH22,kosche2021absent,lothaire,mateescu2004subword,SchnoebelenV23,simon2003words,tronicek2003common,zetzsche2016complexity} and beyond, with applications in a wide number of fields including bioinformatics \cite{han2020novel,shikder2019openmp}, database theory \cite{artikis2017complex,FrochauxK23,Kleest-Meissner22,Kleest-Meissner23}, and modelling concurrency \cite{shaw1978software}.
A survey of combinatorial pattern matching algorithms for subsequences has been provided by Kosche et al. \cite{Kosche2022SubsequenceSurvey}, highlighting a series of recent results for problems on finding subsequences in words as well as their applications and connections to other areas of computer science, to which we refer the reader for further details and references.
%. Hence, for more details on this topic and the motivation for studying it, we refer the reader to \cite{Kosche2022SubsequenceSurvey} and the references therein.

This paper considers \emph{$k$-subsequence universal} words.
A word $w$ is $k$-subsequence universal over an alphabet $\Sigma=\{1,\ldots,\sigma\}$ if $w$ contains every word of length $k$ over $\Sigma$ as a subsequence. 
These words were first defined by Karandikar and Schnoebelen \cite{karandikar2016height,schnoebelen2019height} as \emph{$k$-rich words}, however more recent work has used the term \emph{$k$-subsequence} (or {\em scattered factor}) {\em universality} \cite{adamson2023words,Goettingen2023words,barker2020scattered,day2021edit,fleischmann2021scattered,fleischmann2023alphabetafactorization,kosche2021absent,SchnoebelenV23}, which we use here. The study of these words follows from the seminal work by Simon \cite{Simon72} where a congruence - nowadays known as {\em Simon's congruence} - is introduced.
Two words $w,v$ are $k$-congruent, denoted $w\sim_k v$, if $w$ and $v$ share the same set of subsequences up to length $k$. As such, $k$-subsequence universal words are those which are $k$-congruent to the word $(1\cdots \sigma)^k$. 

Simon's congruence relation is well studied \cite{fleischer2018testing,fleischmann2022nearly,fleischmann2023alphabetafactorization,simon2003words,tronicek2003common,zetzsche2016complexity}, with recent asymptotically optimal algorithms for testing if two words are $k$-congruent \cite{barker2020scattered} and for computing the largest $k$ for which two words are $k$-congruent \cite{gawrychowski2021simons}, as well as for pattern matching under Simon's congruence \cite{KimKH22}. 
Indeed, besides the usage of $k$-subsequence universal words in~\cite{karandikar2016height,schnoebelen2019height} in a context related to the study of the height of piecewise testable languages and the logic of subsequence, the idea of universality itself is quite important in formal languages, automata theory, but also in combinatorics. In this context, the universality problem \cite{HolzerK11} is whether a given language $L$  (over an alphabet $\Sigma$, given as an automaton) is equal to $\Sigma^{\ast}$. This problem for various classes of languages and language accepting/generating formalisms is studied in, e.g., \cite{Rampersad:2012,KrotzschMT17,GawrychowskiRSS17} and the references therein. The universality problem was considered for contiguous factors (substrings) of words \cite{martin1934,Bruijn46} and  partial words \cite{ChenKMS17,GoecknerGHKKKS18}, as well. In that context, one analyses, the (partial) words $w$ over an alphabet $\Sigma$ which have exactly one occurrence of each string of length $\ell$ over $\Sigma$ as a substring. De Bruijn sequences~\cite{Bruijn46} fulfil this property and have many applications in computer science or combinatorics, see \cite{ChenKMS17,GoecknerGHKKKS18} and the references therein. While it is a perfectly valid problem to investigate (partial) words where each substring occurs exactly once, in the case of subsequence universality this is a trivial restriction, as in each long-enough word there will be subsequences occurring more than once \cite{barker2020scattered}.\looseness=-1

In a similar direction, the work of works of Zetzsche \cite{zetzsche2016complexity}, Karandikar et al. \cite{karandikar2016state}, and Bachmeier et al. \cite{bachmeier2015finite} compare the \emph{downward closure} of languages. The downward closure of a word $w$ is the set of all subwords of $w$. By this definition, two words are $k$-congruent if the subset of the downward closures of each word containing those words of length at most $k$ are equivilent. In \cite{bachmeier2015finite,karandikar2016state,zetzsche2016complexity}, the downward closure of a language $L$ is defined as the set of all words appearing as a subsequence of at least one word in $L$. Most relevant to our work is that of Karandikar et al. \cite{karandikar2016state}, who show that the problem of deciding if the downward closure of some language $L$, defined by an automaton $A$, is a subset of another language $L'$, defined by an automaton $A'$, is CoNP-complete.% Our work strengthens this work by showing that this complexity holds even when one of the languages is the language of all words of length at most $k$.
\looseness=-1

Coming closer to the topic of this work, we recall the works by Barker et al. \cite{barker2020scattered}, Day et al. \cite{day2021edit}, and Schnoebelen and Veron \cite{SchnoebelenV23}, which directly address $k$-subsequence universal words.
In \cite{barker2020scattered}, the authors show that it is possible to determine in linear time (1) whether a word is $k$-subsequence universal and (2) the shortest $k$-subsequence universal prefix of a given word. Additionally, they show that the minimal set of factors $w_1, w_2, \dots, w_{\ell}$ of a word $w$,  such that $w_1 w_2 \cdots w_{\ell}$ is $k$-subsequence universal and the exponent $i$ such that $w^i$ is $k$-subsequence universal can be determined efficiently.
Further, \cite{day2021edit} proposes a set of algorithmic results for computing the minimum number of edit operations (insertion, deletions, substitutions) to apply to a word $w$ in order to make it $k$-subsequence universal; in general, their algorithms run in $O(\vert w \vert k)$ time, for $k\leq |w|$ (the problems are trivial, otherwise). Interestingly, the problems approached in that paper can be seen as determining the minimum number $\Delta$ such that the (finite) language containing the words found at edit distance at most $\Delta$ from $w$ contains a $k$-universal word. Finally, \cite{SchnoebelenV23} presents an algorithm computing the largest $k$ for which a word, given in a compressed form, is $k$-universal. \looseness=-1

Another paper which is also strongly related to our work is \cite{KimHKS22} (as well as its follow-up \cite{KimHKS23}). In this paper, the authors investigate the language of words which are $k$-congruent to $u$, for a given word $u$, called the $k$-closure of $u$. They show that this language is regular, and effectively construct a finite automaton recognising it (of size exponential in $\sigma$, the size of the input alphabet). Now, testing whether another given finite (or regular) language contains a word which is $k$-congruent to $u$ can be reduced to deciding the emptiness of the intersection between the $k$-closure of $u$ and the given language. 

Our work builds on \cite{day2021edit,KimHKS22}, as well as on the works whose main object is the downward closure of languages (see \cite{zetzsche2016complexity} and the references therein), and investigates the problem of efficiently detecting $k$-subsequence universal words within regular languages. In other words, we are interested in identifying the words of a given regular language $L$ which are $k$-subsequence universal, i.e., in the $k$-closure of the target word $(1\cdots \sigma)^k$. 
There is, however, a fundamental difference between the problems considered here and those of  \cite{KimHKS22,KimHKS23}. The input of the problems studied here is a regular language $L$ and a number $k$ (given in binary representation, similarly to \cite{day2021edit}), and we want to detect the words of $L$ which are $k$-universal, without having to explicitly write those words or the target word $(1\cdots \sigma)^k$ (whose length is at least $k\sigma$, so exponential in the size of the binary representation of $k$, our input). On the other hand, in the setting of \cite{KimHKS22,KimHKS23} we are explicitly given a target word $w$ for which we construct the finite automaton recognising its $k$-closure; applying this approach to our setting would lead to dealing explicitly with the target word $w=(1\cdots \sigma)^k$, so we would directly have an exponential blow-up both at this step and when the automaton is constructed. Hence, the problem discussed in this paper extends significantly the one of \cite{day2021edit} by looking at the intersection between the language of $k$-subsequence universal words and arbitrary regular languages, rather than a very particular class of finite languages. Moreover, it addresses a highly-relevant particular case of the theory presented in \cite{KimHKS22,KimHKS23}, by considering the class of $k$-subsequence universal words, with the interesting property that this case can be succinctly specified, and with the hope that more direct and efficient algorithms can be obtained in this setting, without having to go through the general framework. 

Going more into the details of our approach, we start our investigation by defining two notions for $k$-subsequence universality of regular languages. A regular language $L$ is {\em existence $k$-subsequence universal} ($k$-$\exists$-subsequence universal) if there exists at least one $k$-subsequence universal word in $L$, and it is {\em universal $k$-subsequence universal}  ($k$-$\forall$-subsequence universal) if every word of $L$ is $k$-subsequence universal. We assume that regular languages are given by finite automata which recognises them, so, canonically, we will call an automaton $k$-$\exists$-subsequence universal (respectively, $k$-$\forall$-subsequence universal) if the language accepted by it is $k$-$\exists$-subsequence universal (respectively, $k$-$\forall$-subsequence universal).
%First, we ask is to determine if a given finite automaton is existence $k$-subsequence universal, or universal subsequence universal.

Alongside the categorisation problems (given an automaton, decide whether the language it recognises is $k$-$\exists$-subsequence universal or $k$-$\forall$-subsequence universal), we consider the problems of {\em counting} and {\em ranking} the number of $\ell$-length $k$-subsequence universal words accepted by a given finite automaton $\mathcal{A}$.
The counting problem asks for the total number of $\ell$-length $k$-subsequence universal words accepted by $\mathcal{A}$.
The ranking problem takes a word $w$ as input and asks for the number of $k$-subsequence universal words accepted by $\mathcal{A}$ that are lexicographically smaller than $w$.
Both of these problems have been heavily studied for other classes of words, including cyclic words \cite{adamson2022ranking,adamson2023words,Adamson2021,Fredricksen1978,gilbert1961symmetry,Kociumaka2014,Sawada2017} and Gray codes \cite{Fredricksen1978,Kociumaka2014,savage1997survey}.

\subparagraph{Our Contributions.} In Section~\ref{prelims} we introduce the novel notions of  $k$-$\exists$-subsequence universality or $k$-$\forall$-subsequence universality for regular languages and finite automata.

In Section \ref{npc}, we give algorithms solving the problems of deciding if the language accepted by a given finite automaton with $n$ states is \emph{$k$-$\exists$-subsequence universal} and, respectively, if it is \emph{$k$-$\forall$-subsequence universal}, for a given $k$. 
Afterwards, for a given $k$, we give a polynomial time algorithm solving the problems of deciding if the language accepted by a given finite automaton with $n$ states is  \emph{$k$-$\forall$-subsequence universal}, and  a fixed parameter tractable algorithm (with respect to the size $\sigma$ of the input alphabet) for deciding if it is \emph{$k$-$\exists$-subsequence universal}.
%  for a given $k$. These algorithms are fixed parameter tractable with respect to $\sigma$, the size of the input alphabet.
If we have additionally $\sigma\in O(\log n)$, they run in polynomial time, and their run-time does not depend at all on $k$. Note that one could easily devise solutions for both these problems using the framework of \cite{KimHKS22,KimHKS23}, but their complexity would have been exponential both in $\log k$ (the size of the representation of $k$ in our input) and in $\sigma$. Moreover, we show that, if no bound is placed on the size of the input alphabet, the problem of deciding if a given regular language is \emph{$k$-$\exists$-subsequence universal} is NP-complete. The NP-hardness of this problem follows from \cite{KimHKS22}; it is worth noting that, on the one hand, the problem is hard even if $k=1$, but also, on the other hand, that the hardness proof is indeed based on the fact that the input alphabet is large (equal to the number of states in the input automaton). Showing that this problem is in NP, as well as our algorithms, requires a series of combinatorial insights on the structure of $k$-universal words accepted by finite automata.   \looseness=-1
%These algorithms are fixed parameter tractable with respect to $\sigma$, the size of the input alphabet.
%If we have additionally $\sigma\in O(\log n)$, they run in polynomial time, and their run-time does not depend at all on $k$. Note that one could easily devise solutions for both these problems using the framework of \cite{KimHKS22,KimHKS23}, but their complexity would have been exponential both in $\log k$ (the size of the representation of $k$ in our input) and in $\sigma$. Moreover, we show that, if no bound is placed on the size of the input alphabet, the problem of deciding if a given regular language is \emph{$k$-$\exists$-subsequence universal} is NP-complete. The NP-hardness of this problem follows from \cite{KimHKS22}; it is worth noting that, on the one hand, the problem is hard even if $k=1$, but also, on the other hand, that the hardness proof is indeed based on the fact that the input alphabet is large (equal to the number of states in the input automaton). Showing that this problem is in NP, as well as our algorithms, requires a series of combinatorial insights on the structure of $k$-universal words accepted by finite automata.   \looseness=-1

Further, in Section \ref{sec:counting}, building on the aforementioned understanding of the combinatorial properties of $k$-universal words accepted by finite automata, we provide algorithms for counting the number of $k$-subsequence universal words (respectively, paths) accepted by a given deterministic (respectively, non-deterministic) finite automaton, and ranking an input word within the set of $k$-subsequence universal words (paths) accepted by a given deterministic (respectively, non-deterministic) finite automaton. Again, this approach extends non-trivially the approach from \cite{adamson2023words}, where problems related to counting and ranking subsequence universal words (unrestricted by any regular membership constraint) were approached for the first time.\looseness=-1

\section{Preliminaries}\label{prelims}
%%%%%%%%%%%%%%%%%%%%%%%%%%%%%%%%%%%%%%%%%%%%%%%%%%%%%%%%%%%%%%%%%%%%%%
Let  $\N = \{1,2,\ldots\}$ denote the natural numbers and set $\N_0 = \N
\cup \{0\}$ as well as $[n]=\{1,\ldots,n\}$ and $[i,n]=\{i, i+1, \ldots, n\}$ for all $i,n\in\N_0$ with $i \leq n$.

An \emph{alphabet} $\Sigma=\{1,2,\ldots,\sigma\}$ is a finite set of symbols, called
\emph{letters} (w.l.o.g., we can assume that the letters are integers). A \emph{word} $w$ is a finite sequence of letters from a given alphabet and its length $|w|$ is the number of $w$'s letters. For $i \in
[|w|]$ let $w[i]$ denote $w$'s \nth{$i$} letter.  The set of all
finite words (aka strings) over the alphabet $\Sigma$, denoted by $\Sigma^{\ast}$, is the free monoid generated by $\Sigma$ with
concatenation as operation and the neutral element is the empty word
$\varepsilon$, i.e., the word of length $0$. Let $\Sigma^n$ denote all words in $\Sigma^{\ast}$ exactly of length $n\in\N_0$ and $\Sigma^{\leq n}$ the set of all words of $\Sigma^{\ast}$ up to length $n\in\N_0$.
Set $\letters(w) = \{\ta \in \Sigma \mid \exists i \in [|w|]: w[i] = \ta \}$ as $w$'s alphabet. 
%and for each $\ta \in \Sigma$ set $|w|_{\ta} = |\{i \in [|w|] \mid w[i]=\ta \}|$.
For $u,w\in\Sigma^{\ast}$, $u$ is called a \emph{factor}
of $w$, if $w = xuy$ for some words $x,y\in\Sigma^{\ast}$. If $x = \varepsilon$ (resp.,
$y = \varepsilon$) then $u$ is called a \emph{prefix} (resp., \emph{suffix}) of
$w$.  For $1\leq i\leq j\leq|w|$ 
define the factor from $w$'s \nth{$i$} letter to the \nth{$j$} letter by  $w[i,j]=w[i]\cdots w[j]$.
Futher, given a pair of indices $i  < j$, $w[j, i] = \varepsilon$.
% For $w\in\Sigma^{\ast}$ and $n\in\N_0$ define inductively $w^0=\varepsilon$ and $w^n=ww^{n-1}$.
% Moreover, set $w^R=w[|w|]w[|w|-1]\cdots w[1]$ for $w\in\Sigma^{\ast}$.
Let $<$ be an order relation on $\Sigma$ (e.g., the natural order on integers). We extend this order relation to the lexicrographical order on $\Sigma^{\ast}$ in the following way: the word $u$ is lexicographically smaller than the word $w$ ($u<w$) iff either $u$ is a prefix of $w$ or there exists $x,y_1,y_2\in\Sigma^{\ast}$ and $\ta,\tb\in\Sigma$ with $u=x\ta y_1$, $w=x\tb y_2$ and $\ta<\tb$.\looseness=-1

% \todofda{I think it would be useful to use some notation for $n$-length (or length at most $n$) words accepted by a given automaton $\mathcal{A}$, i.e. $\mathcal{A} \cap \Sigma^n$. I am not sure if there is a standard, but maybe something like $\mathcal{A}^n = \{w \in \Sigma^n \mid \delta(q_0, w) \in F \}$.
% Similarly, for words of length at most $n$, possible $\mathcal{A}^{\leq n}(w \in \Sigma^{*} \mid) \delta(q_0, w) \in F$ and $\mid w \mid \leq n$.}

% \begin{itemize}
% \item regular expression: each $\ta\in\Sigma$ is a regular expression, if $r_1,r_2$ are regular expressions then also $r_1|r_2,r_1r_2$, $(r_1)^{\ast}$
% \end{itemize}

As we are interested in investigating the $k$-subsequence universality of regular languages, we firstly introduce the basic concepts related to subsequences. We then present the definitions for the transformation of these notions to the domain of regular languages and finite automata.

\begin{definition}
	Let $w \in \Sigma^*$ and $n \in \N_0$. A word $u \in \Sigma^*$ is called \emph{subsequence} of $w$ ($u \in \Subseq(w)$) if there exist $v_1, \ldots, v_{n+1} \in \Sigma^*$ such that $w = v_1 u[1] v_2 u[2] \cdots v_n u[n] v_{n+1}$. Set $\Subseq_k(w) = \{u \in \Subseq(w) \mid \vert u \vert = k\}$.
	% Two words $w,v \in \Sigma^*$ are \emph{Simon congruent} w.r.t. $k\in\N_0$ ($w \sim_k v$) if $\Subseq_{\ell}(w) = \Subseq_{\ell}(v)$ for all $\ell\leq k$. Given $w\in\Sigma^{\ast}$, the word $u\in\Sigma^{\ast}$ is called {\em shortlex normal form} of $w$ w.r.t. $\sim_k$ if $u\sim_k w$ and $u$ is 
	% the lexicographically smallest word among the shortest words in $w$'s congruence class $[w]_{\sim_k}$.
\end{definition}

\begin{example}
	Subsequences of $\mathtt{automatauniversality}$ are $\mathtt{auto}$, $\mathtt{tomata}$, $\mathtt{salty}$,  $\mathtt{mate}$, and $\mathtt{atom}$ while $\mathtt{star}$ and $\mathtt{alien}$ are not because their letters do not occur in the correct order.
\end{example}

In \cite{barker2020scattered}, the authors investigated words which have, for a given $k\in\N_0$, all words from $\Sigma^{k}$ as subsequence, namely $k$-subsequence universal words. Note that this notion is similar to the one of richness introduced and investigated in \cite{KarandikarKS15,karandikar2016height}. We stick here to the notion of $k$-subsequence universality since our focus are regular languages and thus the well-known notion of the universality of automata and formal languages, i.e., $L(\mathcal{A})=\Sigma^{\ast}$ for a given finite automaton $\mathcal{A}$, is close to the one of subsequence universality of words.

\begin{definition}\label{def:subsequence}
	A word $w \in \Sigma^*$ is called \emph{$k$-subsequence universal} (w.r.t. $\Sigma$), for $k \in \N_0$, if	
	$\Subseq_k(w) = \Sigma^k$. If the context is clear we briefly call $w$ $k$-universal. The universality-index $\iota(w)$  is the largest $k$ such that $w$ is $k$-universal.
	
\end{definition}

We denote the set of $k$-universal words in a given set $\mathcal{M}\subseteq\Sigma^{\ast}$ by $\Univ_{\mathcal{M}, k}$. Thus, the set of all $k$-universal words over a given alphabet $\Sigma$ is denoted $\Univ_{\Sigma^*, k}$.

\begin{example}\label{prelims:exampleuniversality}
	Consider the word $w=\mathtt{baaababb} \in \{\ta,\tb\}^*$. We have $\vert \Subseq_3 (\mathtt{baaababb})\vert = \vert \{\mathtt{aaa}, \mathtt{aab}, \mathtt{aba}, \mathtt{abb}, \mathtt{baa}, \mathtt{bab}, \mathtt{bba}, \mathtt{bbb}\} \vert = 8 = 2^3$. Thus, $\mathtt{baaababb} \in \Univ_{\{\ta,\tb\}^*, 3}$. Since $\mathtt{abba} \notin \Subseq_4(\mathtt{baaababb})$, it follows that $\mathtt{baaababb} \notin \Univ_{\{\ta,\tb\}^*, 4}$ and $\iota(\mathtt{baaababb}) = 3$.
\end{example}

Further, we recall the {\em arch factorisation} by Hébrard \cite{hebrard1991algorithm} which factorises words uniquely.

\begin{definition}
The {\em arch factorisation} of $w \in \Sigma^*$ is given by $w = \ar_1(w) \cdots \ar_k(w) \r(w)$ for $k \in \mathbb{N}_0$ with
	$\iota(\ar_i(w))=1$ and $\ar_i(w)[\vert \ar_i(w) \vert] \notin \letters(\ar_i(w)[1, \vert \ar_i(w) \vert - 1 ])$ for all $i \in [k]$, as well as $\letters(\r(w)) \subsetneq \Sigma$.
	The words $\ar_i(w)$ are the \emph{arches} and $\r(w)$ is the \emph{rest} of $w$.
% 	The modus of $w$ is the concatenation of the unique last letters of the arches, i.e., $\m(w)=\ar_1(w)[\vert\ar_1(w)\vert] \cdots\ar_k(w)[\vert\ar_k(w)\vert]$.
% 	The \emph{inner of the $i^{th}$ arch of $w$} is defined as the prefix of 
% 	$\ar_i(w)$ such that $\ar_i(w) = \inner_i(w) 
% 	\m(w)[i]$ holds.
\end{definition}

\begin{example}\label{prelims:exampleuniversalitycondt}
	Continuing Example~\ref{prelims:exampleuniversality}, we have the arch factorisation $w=\mathtt{(ba) \cdot (aab) \cdot (ab) \cdot b}$ where the parantheses denote the three arches and the rest $\mathtt{b}$.
% 	Further, $\m(\mathtt{baaababb}) = \mathtt{abb}$.
\end{example}

A proper subset of the $k$-universal words is given by all the words with an empty rest
introduced in \cite{fleischmann2021scattered} as the set of {\em perfect $k$-universal} words. 

\begin{definition}
	We call a word $w \in \Sigma^*$ \emph{perfect $k$-universal} if $\iota(w)=k$ and $\r(w) = \varepsilon$. The set of all these words with $\letters(w)=\Sigma$ is denoted by $\PerfUniv_{\Sigma^\ast,k}$.
\end{definition}

\begin{example}
	The word $\mathtt{baaababb}$ from Examples~\ref{prelims:exampleuniversality} and~\ref{prelims:exampleuniversalitycondt} is not perfect $3$-universal since it has $\iota(\mathtt{baaababb}) = 3$ but $\r(\mathtt{baaababb}) = \tb \neq \varepsilon$. To give a positive example, consider $\mathtt{abcbbaccbbaacb} \in \PUniv_{\{\ta,\tb,\tc\}^*,4}$ since its arch factorisation is  $\mathtt{(abc) \cdot (bbac) \cdot (cbba) \cdot (acb)}$.
\end{example}

% One frequently used result is the following which characterised the universality of words by their number of arches.

\begin{theorem}[\cite{barker2020scattered}]
	Let $w \in \Sigma^{\geq k}$ with $\al(w) = \Sigma$. Then we have $\iota(w)=k$ iff $w$ has exactly $k$ arches.
\end{theorem}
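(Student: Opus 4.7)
My plan is to derive both directions of the equivalence from a single unified claim: if $w \in \Sigma^{\geq k}$ with $\al(w) = \Sigma$ has exactly $m$ arches in its arch factorisation $w = \ar_1(w)\cdots \ar_m(w)\r(w)$, then $\iota(w) = m$. Since the arch factorisation is unique, the two implications of the theorem follow immediately by specialising to $m = k$.

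For the lower bound $\iota(w) \geq m$, I would use a direct embedding argument. Given an arbitrary $u = u_1 u_2 \cdots u_m \in \Sigma^m$, the $1$-universality of each arch guarantees $u_i \in \al(\ar_i(w))$, so matching the letter $u_i$ inside the factor $\ar_i(w)$ realises $u$ as a subsequence of $\ar_1(w)\cdots \ar_m(w)$, and hence of $w$.

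For the upper bound $\iota(w) < m+1$, I would exhibit an explicit length-$(m+1)$ word that fails to appear as a subsequence. Let $c_i$ denote the last letter of $\ar_i(w)$, so that the arch condition $c_i \notin \al(\ar_i(w)[1,|\ar_i(w)|-1])$ is in force. Pick any $a \in \Sigma \setminus \al(\r(w))$, which is nonempty since $\al(\r(w)) \subsetneq \Sigma$. I claim that $u := c_1 c_2 \cdots c_m a$ is not a subsequence of $w$. To verify this, I would run the greedy leftmost matching of $u$ against $w$ and show, by induction on $i$, that the match of $c_i$ is forced to the last position of $\ar_i(w)$: the arch condition precludes any earlier match inside $\ar_i(w)$, and by the inductive hypothesis the pointer into $w$ sits at the boundary between $\ar_{i-1}(w)$ and $\ar_i(w)$. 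After matching $c_m$, the remaining window of $w$ is exactly $\r(w)$, which contains no $a$; correctness of greedy matching for subsequence membership then yields $u \notin \Subseq(w)$.

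The step I expect to require the most care is the upper bound, specifically the inductive verification that the greedy matches of the letters $c_i$ align precisely with the arch boundaries of $w$. This hinges on using the full arch condition (each $c_i$ occurs only at the last position of its arch), not just $1$-universality, so I would make sure to state this carefully. Once both bounds are established, $\iota(w) = m$ follows, completing the proof in both directions.
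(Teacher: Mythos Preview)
The paper does not actually prove this theorem; it is stated with a citation to \cite{barker2020scattered} and no proof is given in the present paper. Your proposal is correct and is the standard argument: the lower bound via matching the $i$th letter inside the $i$th arch, and the upper bound via the canonical witness $c_1 c_2 \cdots c_m a$ with $c_i$ the last letter of $\ar_i(w)$ and $a \in \Sigma \setminus \al(\r(w))$, together with the greedy leftmost-match analysis, are precisely how this result is established in the cited source.
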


In the remainder of this section, we introduce the basic definitions we need to define the $k$-universality of regular languages. For basic notions on finite automata, we refer to \cite{HopcroftU79}. 
%
%
%
% \begin{definition}\label{def:finiteautomata}
A {\em non-deterministic finite automaton (NFA)} $\mathcal{A}$ is a tuple $(Q,\Sigma,q_0,\delta,F)$ with the finite set of states $Q$ (of cardinality $n\in\N$), an initial state $q_0\in Q$, the set of final states $F\subseteq Q$, an input alphabet $\Sigma$, and a transition function $\delta:Q\times\Sigma\rightarrow 2^{Q}$, where $2^{Q}$ is the powerset of $Q$. If we have $|\delta(q,\ta)|= 1$ for all $q\in Q, \ta\in \Sigma$, then $\mathcal{A}$ is called {\em deterministic} (DFA). We call a sequence $\pi=(q_0,\ta_1,q_1,\ta_2,\ldots,\ta_{\ell},q_\ell)$ an $\ell$-length path in $\mathcal{A}$ iff $q_i\in Q$ for all $i\in[0,\ell]$ and $q_{i+1}\in \delta(q_i,\ta_{i+1})$, for all $i\in[0,\ell-1]$.  
The word $w_{\pi}=\ta_1\cdots\ta_\ell$ is the {\em word (label) associated to $\pi$}.
A path is simple if it does not contain the same state twice. Moreover, a state $q\in Q$ is called accessible (respectively, co-accessible) in $\mathcal{A}$ if there exists a path connecting $q_0$ to $q$ (respectively, $q$ to a final state). A path is called {\em accepting} if $q_{\ell}\in F$ holds. Define the language of $\mathcal{A}$, i.e., the set of words {\em accepted} by $\mathcal{A}$, by $L(\mathcal{A})=\{w\in\Sigma^{\ast}\mid \exists \mbox{ accepting path }\pi\mbox{ in }\mathcal{A}:\,w=w_{\pi}\}$. For abbreviation, we set $\mathcal{A}_n=L(\mathcal{A})\cap\Sigma^n$ and $\mathcal{A}_{\leq n}=L(\mathcal{A})\cap\Sigma^{\leq n}$. Note that the class of languages accepted by NFAs is equal to the class of languages accepted by DFAs and it is equal to the class of regular languages. Moreover, for every word $w\in L(\mathcal{A})$ there exists exactly one (in the deterministic case) or a set of (in the non-deterministic case) associated path(s). For a word $w$ accepted by a finite automaton, let $\pi_w$ denote one accepting path labelled with $w$; in the case when there are multiple such paths, we simply choose one of them. Since we usually are interested in only one path for a word $w\in L(\mathcal{A})$, we refer to it as $\pi_w$.

\begin{definition}\label{def:reverse_transition_function}
	For a given NFA $\mathcal{A}$, the \emph{reverse transition function} $\Delta(q, \tx)$ returns the set of states in $Q$ with a transition labelled by $\tx$ to $q$, i.e, $\Delta(q, \tx) = \{q' \in Q \mid \delta(q', \tx) = q\}$.
\end{definition}

Now, we can define the $k$-subsequence universality of a regular language $L$. Here, we distinguish whether at least one or all words of $L$ are $k$-universal w.r.t.~Definition~\ref{def:subsequence}. Note that we always take the minimal alphabet $\Sigma$ such that $L\subseteq \Sigma^*$ as a reference when considering the $k$-universality of words from $L$.

\begin{definition}
Let $L$ be a regular language. $L$ is called {\em existence $k$-subsequence universal} ($k$-$\exists$-universal) for some $k\in\N$, if there
exists a $k$-universal word $w\in L$. $L$ is called {\em universal $k$-subsequence universal} ($k$-$\forall$-universal) for some $k\in\N$, if all $w\in L$ are $k$-universal. \looseness=-1
\end{definition}

If a regular language $L$, accepted by some finite automaton $\mathcal{A}$ (NFA or DFA), is $k$-$\exists$-universal (respectively, $k$-$\forall$-universal) then we also say that the automaton $\mathcal{A}$ is $k$-$\exists$-universal (respectively, $k$-$\forall$-universal). Further, we also say that a {\em path $\pi$ in $\mathcal{A}$ is $k$-universal} if the label of $\pi$, namely $w_{\pi}$, is $k$-universal. As an example, a $2$-$\exists$-universal NFA $\mathcal{A}$, which is not $3$-$\exists$-universal, and a $1$-$\forall$-universal NFA $\mathcal{B}$ which is not $2$-$\forall$-universal are shown in Figure~\ref{img:universal_automaton} (note that $\mathcal B$ recognises exactly every permutation of $\ta\tb\tc$).
Based on this definition we define two associated decision problems.

\begin{figure}
	\centering
	\begin{minipage}[c]{0.3\textwidth}
		\scalebox{0.7}
		{\begin{tikzpicture}
			[->,>=stealth',shorten >=1pt,auto,node distance=1.6cm,semithick]
			\tikzstyle{every state}=[initial text=$ $]
			\node[initial,state] (A)                    {$q_0$};
			\node[state]         (B) [right of=A]       {$q_1$};
			\node[state,accepting] (D) [right of=B]     {$q_2$};
			
			\path (A) edge [loop above] node {$\mathtt{a}, \mathtt{b}$} (A)
			edge              node {$\mathtt{c}$} (B)
			%		(B) edge [loop above] node {$\mathtt{b}, \mathtt{c}$} (B)
			(B) edge              node {$\mathtt{a}$} (D)
			(D) edge [loop above] node {$\mathtt{b,c}$} (D);
			\node at (-1,1) {$\mathcal{A}$};
		\end{tikzpicture}}
	\end{minipage}	
	\begin{minipage}[c]{0.4\textwidth}
		\scalebox{0.7}
		{\begin{tikzpicture}[->,>=stealth',shorten >=1pt,auto,node distance=1.6cm,
			semithick]
			\tikzstyle{every state}=[initial text=$ $]
			\node[initial,state] (A)                    {$q_0$};
			\node[state]         (B) [below left of=A]       {$q_1$};
			\node[state]         (C) [below left of=B]       {$q_2$};
			\node[state,accepting] (D) [below of=C]       {$q_3$};
			\node[state]         (E) [below of=B]       {$q_4$};
			\node[state,accepting] (F) [below of=E]       {$q_5$};
			\node[state]         (q6) [below of=A]        {$q_6$};
			\node[state]         (q7) [below of=q6]       {$q_7$};
			\node[state,accepting] (q8) [below of=q7]       {$q_8$};
			\node[state]         (q9) [below right of=q6]       {$q_9$};
			\node[state,accepting] (q10) [below of=q9]       {$q_{10}$};
			\node[state]         (q11) [above right of=q9]       {$q_{11}$};
			\node[state]         (q12) [below of=q11]       {$q_{12}$};
			\node[state,accepting] (q13) [below of=q12]       {$q_{13}$};
			\node[state]         (q14) [below right of=q11]       {$q_{14}$};
			\node[state,accepting] (q15) [below of=q14]       {$q_{15}$};
			
			\path (A) edge node {$\mathtt{a}$} (B)
			          edge node {$\mathtt{b}$} (q6)
			          edge node {$\mathtt{c}$} (q11)
 			(B)	edge node {$\mathtt{b}$} (C)
			    edge node {$\mathtt{c}$} (E)
			(C) edge node {$\mathtt{c}$} (D)
			(E) edge node {$\mathtt{b}$} (F)
			(q6) edge node {$\mathtt{a}$} (q7)
			     edge node {$\mathtt{c}$} (q9) 
			(q7) edge node {$\mathtt{c}$} (q8)
			(q9) edge node {$\mathtt{a}$} (q10)
			(q11) edge node {$\mathtt{a}$} (q12)
			      edge node {$\mathtt{b}$} (q14)
			(q12) edge node {$\mathtt{b}$} (q13)
			(q14) edge node {$\mathtt{a}$} (q15);
			\node at (-3,0) {$\mathcal{B}$};
		\end{tikzpicture}}
	\end{minipage}
	\caption{A $2$-$\exists$-universal NFA $\mathcal{A}$ and a $1$-$\forall$-universal  NFA $\mathcal{B}$.}
	\label{img:universal_automaton}
\end{figure}
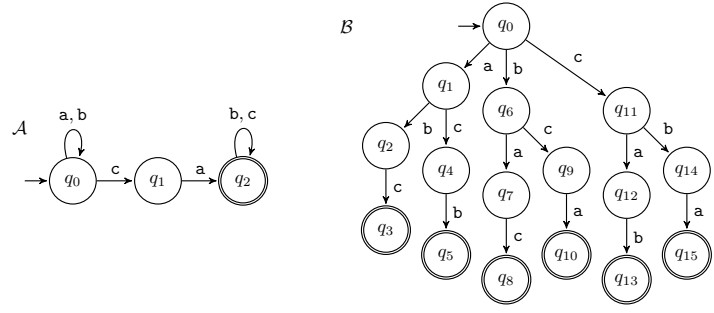

\begin{problem}\label{prob:esubfa}
The {\em existence subsequence universality problem for regular languages ($k$-ESU)} is to decide for a regular language, given by a finite automaton $\mathcal{A}$ recognising it, and $k\in\N$, given in binary representation, whether $L$ is $k$-$\exists$-universal. 
\end{problem}

\begin{problem}\label{prob:asubfa}
The {\em universal subsequence universality problem for finite automata ($k$-ASU)} is to decide for a regular language, given by a finite automaton $\mathcal{A}$ recognising it, and $k\in\N$, given in binary representation, whether $L$ is $k$-$\forall$-universal.
\end{problem}

We finish this section by introducing the {\em rank} of a word in order to enumerate the $k$-universal words accepted by a given finite automaton.

% \todofpa{the rank is independent of the length, cf Signal}
\begin{definition}
	\label{def:rank}
	The \emph{rank} of a word $w$ within the set $\Univ_{\mathcal{M}, k}$ is the number of words in $\Univ_{\mathcal{M}, k}$ lexicographically smaller than $w$, i.e. $rank(w) = \lvert \{ v \in \Univ_{\mathcal{M}, k} \mid v < w\}\rvert$.
\end{definition}

\begin{remark}
	By Definition~\ref{def:subsequence}, % and \ref{def:finiteautomata} 
	the set of all $k$-universal words accepted by $\mathcal{A}$ is given by $\Univ_{L(\mathcal{A}), k}$, the set of all $n$-length  $k$-universal words accepted by $\mathcal{A}$ is given by  $\Univ_{\mathcal{A}_n, k}$, and the set of all $k$-universal words of length at most $n$ accepted by $\mathcal{A}$ is given by $\Univ_{\mathcal{A}_{\leq n}, k}$.
\end{remark}

The computational model we use is the standard unit-cost RAM with logarithmic word size: for an input of size $n$, each memory word can hold $\log (n)$ bits. Arithmetic and bitwise operations with numbers in $[1,n]$ are, thus, assumed to take $O(1)$ time. Numbers larger than $n$, with $\ell$ bits, are represented in $O(\ell/\log n)$ memory words, and working with them takes time proportional to the number of memory words on which they are represented. In all the problems, we assume that we are given a number $k$, binary encoded, and one finite automaton $\mathcal A$, specified as the set of states, set of input letters, set of transitions, and initial and final states. The size of the input is, as such, the size of the binary encoding of $k$, which is $\lceil\log_2 k\rceil$, plus the size $S$ of the encoding of $\mathcal A$ (which is lower bounded by the number of edges in the graph associated to the automaton). So, one memory word can hold $\log (S+\log_2 (k))$ bits.

For a more detailed general discussion on this model see, e.\,g.,~\cite{crochemore}. 

Some of our algorithms run in exponential time and use exponential space w.r.t. the size $n$ of the input. Following the literature dealing with such exponential algorithms (see, e.g., \cite{FominKTV08}), we will use the $O^*$-notation. By definition, for functions $f$ and $g$ we write $f(n) = O^*(g(n))$ if $ f(n)=O(g(n)n^{O(1)})$. In other words, the $O^*$-notation hides polynomial factors, just as the $O$-notation hides constants. Using this notation, we can assume that our single-exponential time and single-exponential space algorithms run on a RAM model where arithmetic and bitwise operations with single exponential numbers (w.r.t. the size $n$ of the input) as well as accessing the memory-words (given their address, like in an usual RAM) are assumed to take $O^*(1)$ time. Working with such a computational model allows us to analyse the actual algorithms rather than the various intricacies of the computational model.

In particular, when expressing the complexity of our algorithms, we get functions which are exponential in $\sigma$ (the size of the alphabet) but polynomial in the number $n$ of states of the input NFA or, for the enumeration algorithms, in the length $m$ of the enumerated strings or in the value of the input number $k$ (the parameter of the considered problems). For clarity of the exposure, although we use the $O^*$-notation, we will explicitly write the dependency on $n, m,$ and $k$, and only hide the polynomial dependency on $\sigma$.

% \section{General}
% %%%%%%%%%%%%%%%%%%%%%%%%%%%%%%%%%%%%%%%%%%%%%%%%%%%%%%%%%%%%%%%%%%%%%%
% \input{general}

\section{Decision Problems}\label{npc}
%%%%%%%%%%%%%%%%%%%%%%%%%%%%%%%%%%%%%%%%%%%%%%%%%%%%%%%%%%%%%%%%%%%%%%

In this section, we consider the decision problems $k$-ESU and $k$-ASU. We begin with a series of combinatorial observations.

\begin{remark}
    \label{lem:direct_paths_sufficent_every_word}
     % Let $\mathcal{A}$ be a non-deterministic finite automaton, then 
     Every path acccepted by  an NFA $\mathcal{A}$ is $k$-universal iff every simple accepting path in it is $k$-universal.
     	In one direction, if $\mathcal{A}$ accepts any path that is not $k$-universal,  then not every path accepted by $\mathcal{A}$ is $k$-universal.
	Otherwise,  the set of paths induced by every word accepted by $\mathcal{A}$ must contain, as a subsequence, some non-cyclic path.
	Therefore, if each non-cyclic path is $k$-universal, then every path is.
\end{remark}

\begin{lemma}
    \label{lem:reaching_every_state_further}
    Let $q, q' \in Q$ be a pair of states in the NFA $\mathcal{A}$ such that there exists a path $\pi$ from $q$ to $q'$ where the final transition in $\pi$ is labelled $\tx \in \Sigma$.
    Then, there exists some path $\pi'$ of length at most $n=|Q|$ from $q$ to $q'$ where the final transition in the path is labelled $\tx$.
    % there either exists some path from $q$ to $q'$ of length at most $n$ containing the symbol $\tx$, or there exists no paths from $q$ to $q'$ containing the symbol $\tx$.
\end{lemma}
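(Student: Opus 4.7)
The plan is to apply the standard cycle-removal (pigeonhole) trick, but in a way that carefully preserves the final transition of the path. First, I would split $\pi$ just before its last step: write $\pi = (q = q_0, \ta_1, q_1, \ldots, \ta_\ell, q_\ell = q')$ with $\ta_\ell = \tx$, set $q'' = q_{\ell-1}$ (so by construction $q' \in \delta(q'', \tx)$), and let $\pi_0 = (q_0, \ta_1, q_1, \ldots, \ta_{\ell-1}, q_{\ell-1})$ be the prefix path from $q$ to $q''$. The reason to isolate the last edge is precisely that no subsequent manipulation will touch it, so the ``final transition labelled $\tx$'' property will be preserved for free.

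Next, I would shrink $\pi_0$ to a simple path in the underlying state-graph of $\mathcal{A}$. While the state-sequence $q_0, q_1, \ldots, q_{\ell-1}$ contains a repetition, i.e.\ indices $0 \le i < j \le \ell-1$ with $q_i = q_j$, I splice out the cyclic segment $q_i, \ta_{i+1}, \ldots, \ta_j, q_j$ to obtain a strictly shorter path from $q$ to $q''$ that still uses only transitions already present in $\mathcal{A}$ and whose endpoints are unchanged. Iterating this reduction terminates with a simple path $\pi_0'$ from $q$ to $q''$, which visits at most $|Q| = n$ distinct states and therefore consists of at most $n-1$ transitions.

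Finally, I would append the transition $(q'', \tx, q')$ to $\pi_0'$ to obtain the desired path $\pi'$ from $q$ to $q'$ whose last transition is labelled $\tx$ and whose total length is at most $(n-1) + 1 = n$.

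I do not expect a genuine obstacle here; the only subtlety is to guarantee that the shortening procedure never modifies the final edge, which is handled by isolating it before invoking pigeonhole. The sole corner case worth mentioning is $q'' = q$, where $\pi_0'$ is the trivial empty path and $\pi'$ consists of a single $\tx$-transition, still within the stated bound.
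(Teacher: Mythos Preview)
Your proposal is correct and mirrors the paper's proof almost exactly: the paper also isolates the penultimate state $\hat{q}$ (your $q''$), invokes the fact that any reachable state can be reached by a path of length at most $n-1$, and then appends the final $\tx$-transition to obtain a path of length at most $n$. The only difference is that you spell out the cycle-removal argument explicitly, whereas the paper simply asserts the $n-1$ bound.
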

\begin{proof}
    % Assume, for the sake of contradiction, that the shortest path $P$ from $q$ to $q'$ containing a transition labelled $x$ has length greater than $n$.
    % there exists some path $P$ of length greater than $n$
    Let $\hat{q} \in Q$ be the state in $\pi$ before $q'$, i.e., the state such that $q'\in \delta(\hat{q}, \tx) $.
    As there are at most $n$ states in $\mathcal{A}$, given any pair of states $q_1, q_2 \in Q$, if $q_2$ can be reached from $q_1$, then there must exist a path from $q_1$ to $q_2$ of length at most $n - 1$.
    % in $\mathcal{A}$ the longest path in $\mathcal{A}$ that does not contain a cycle must have a length of at most $n - 1$.
    % As $p$ can be reached from $q$, by Lemma \ref{lem:maximum_path_length}, there must exist some path $P'$ of length at most $n - 1$ from $p$ to $q$.
    In particular, there exists a path of length $n-1$ from $q$ to $\hat{q}$. Extending this path by the transition 
    from $\hat{q}$ reading $\tx$ to $q'$, we obtain a path $\pi'$ from $q$ to $q'$ of length at most $n$.
\end{proof}

\begin{lemma}
	\label{lem:ksu_min_length}
	For an NFA $\mathcal{A}$ with $n$ states, if there is a $k$-universal word accepted by $\mathcal A$, then there is a $k$-universal word accepted by $\mathcal{A}$ of length at most $ kn\sigma - (n-1)(k-1)$. 
\end{lemma}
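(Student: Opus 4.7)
The plan is to take an accepting path $\pi_w$ for an arbitrary $k$-universal word $w\in L(\mathcal A)$ and surgically replace long sub-paths by short detours supplied by Lemma~\ref{lem:reaching_every_state_further}, while keeping the resulting word $k$-universal. First, I would consider the arch factorisation $w=\ar_1(w)\cdots\ar_k(w)\r(w)$, and denote by $s_0=q_0,s_1,\ldots,s_k$ the states of $\pi_w$ at the arch boundaries, and by $q_\ell\in F$ the final accepting state of $\pi_w$. Inside each arch $\ar_i(w)$, I would mark the states $q_1^{(i)},q_2^{(i)},\ldots,q_\sigma^{(i)}=s_i$ visited right after each letter of $\Sigma$ is read for the first time in that arch; the $\sigma$ incoming transitions at these states are labelled by distinct letters $\tx_1^{(i)},\ldots,\tx_\sigma^{(i)}$ forming a permutation of $\Sigma$.

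The key observation, which accounts for the $(n-1)(k-1)$ saving in the bound, is that $\tx_1^{(i)}$ is the very first letter of the $i$-th arch, so the transition $s_{i-1}\xrightarrow{\tx_1^{(i)}}q_1^{(i)}$ is already present in $\mathcal A$ and contributes length exactly $1$ to the rebuilt $i$-th arch. For each $j\in[2,\sigma]$, I would invoke Lemma~\ref{lem:reaching_every_state_further} to replace the sub-path from $q_{j-1}^{(i)}$ to $q_j^{(i)}$ by one of length at most $n$ ending with a transition labelled $\tx_j^{(i)}$. Finally, $\r(w)$ would be replaced by any shortest path from $s_k$ to $q_\ell$, of length at most $n-1$ (such a path exists because $\pi_w$ itself contains one).

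Concatenating these pieces yields an accepting path $\pi'$ in $\mathcal A$ of length at most
\[
k\bigl(1+(\sigma-1)n\bigr)+(n-1)=kn\sigma-(n-1)(k-1),
\]
so the associated word $w'=w_{\pi'}$ lies in $L(\mathcal A)$ and respects the target length. The only non-routine point, and the step I expect to require the most care, is checking that $w'$ is still $k$-universal. For this, I would observe that $w'$ splits into $k$ consecutive blocks corresponding to the reconstructed arches (plus a short suffix coming from the shortest path to $q_\ell$), and each such block contains every letter of $\Sigma$, since its $\sigma$ sub-segments end in the $\sigma$ distinct letters $\tx_1^{(i)},\ldots,\tx_\sigma^{(i)}$. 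Any word $u\in\Sigma^k$ can thus be embedded as a subsequence by picking its $j$-th letter inside the $j$-th block, giving $\Subseq_k(w')=\Sigma^k$, i.e., $w'\in\Univ_{L(\mathcal A),k}$.
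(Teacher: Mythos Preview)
Your proposal is correct and follows essentially the same approach as the paper: both proofs use the arch factorisation, mark the first occurrences of each letter within each arch, and replace the sub-paths between consecutive first occurrences (and the rest) by short detours, arriving at the identical bound $k(1+(\sigma-1)n)+(n-1)=kn\sigma-(n-1)(k-1)$. Your write-up is in fact slightly more explicit than the paper's in two places: you spell out why the very first transition of each reconstructed arch has length exactly $1$ (the paper encodes this implicitly by writing $\ar_i(w)=a_{i,1}u_{i,1}a_{i,2}\cdots u_{i,\sigma-1}a_{i,\sigma}$ with no filler before $a_{i,1}$), and you justify the $k$-universality of $w'$, which the paper simply asserts.
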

\begin{proof}
Let $w$ be a $k$-universal word accepted by $\mathcal{A}$ and for all $i\in [k]$ let $a_{i,1},\ldots a_{i,\sigma}\in\Sigma$ be the letters of $\Sigma$ in the order they occur in $\ar_i(w)$, that is $a_{i,1}=\ar_i(w)[1]$ and $a_{i,j} =\ar_i(w)[\ell]$ such that $j-1 = |\al(\ar_i(w)[1,\ell-1])|<|\al(\ar_i(w)[1,\ell])|=j$ for $1<j\leq\sigma$.
Then $\ar_i(w) = a_{i,1}u_{i,1}a_{i,2}\cdots u_{i,\sigma-1}a_{i,\sigma}$ where $u_{i,j}\in\Sigma^\ast$ is a word such that there is a path $\pi_{u_{i,j}}$ in $\mathcal A$ labeled with $u_{i,j}$. For the sake of readability we denote the suffix of $w$ starting after $\ar_k(w)$ by $u_{k+1,1}$ in this proof. By \Cref{lem:reaching_every_state_further}, $u_{i,j}$ is accepted by an automaton with at most $n$ states obtainable from $\mathcal A$ by changing the initial (respectively, final) state of $\mathcal{A}$ to the starting (respectively, end) state of $\pi_{u_{i,j}}$. 
Hence we can find a word $v_{i,j}$ of length at most $n-1$ which is the label of a path starting and ending in the same state as $\pi_{u_{i,j}}$. 
Let $w'$ be the word obtained from $w$ by replacing every $u_{i,j}$ by $v_{i,j}$. Then $w'$ is a $k$-universal word accepted by $\mathcal A$ of length $\lvert w'\rvert = \sum_{i=1}^{k}\left(\sum_{j=1}^\sigma|a_{i,j}| + \sum_{j=1}^{\sigma-1}|v_{i,j}|\right) + |v_{k+1,1}| \leq k\left(\sigma + (\sigma-1)(n-1)\right) + n-1 = kn\sigma - (k-1)(n-1)$ (cf. Figure~\ref{img:worstcaseexample}).
\end{proof} 

    \begin{figure}
	\scalebox{0.8}
	{\begin{tikzpicture}[->,>=stealth',shorten >=1pt,auto,node distance=2.5cm,
		semithick]
		\tikzstyle{every state}=[initial text=$ $]
		
		\node[state,initial] (q1) {$q_{1}$};
		\node[state] (q2) [right of=q1] {$q_{2}$};
		\node[state] (q3) [right of=q2] {$q_{3}$};
		\node[state,accepting] (q4) [right of=q3] {$q_{4}$};
		
		\path 
		(q1) edge node {$\mathtt{a}$} (q2)
		(q2) edge node {$\mathtt{a}$} (q3)
		(q3) edge node {$\mathtt{a}$} (q4)
		(q4) edge [bend right,above] node {$\Sigma \setminus \{\mathtt{a}\}$} (q1);
		\node at (-1.5,1) {$\mathcal{A}$};
	\end{tikzpicture}}
	\caption{Note that the shortest $k$-universal word accepted by $\mathcal{A}$ has length $(\sigma - 1) k n$.}
	%  The worst case scenario in Theorem~\ref{thm:counting_everything} happens for such an automaton $\mathcal{A}$.
	\label{img:worstcaseexample}
\end{figure}
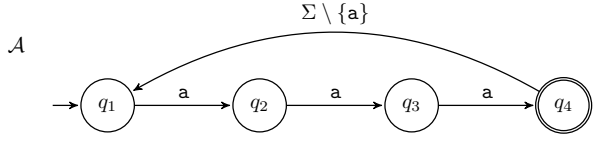

We now move to the main results of this section and provide algorithms for solving the $k$-ESU and $k$-ASU problems. First,  we show that $k$-ASU can be solved in $O(n^3 \sigma)$ time. Secondly, we provide a fixed parameter tractable (FPT) algorithm w.r.t.~$\sigma$ for solving $k$-ESU. The time complexity of this algorithms is polynomial for $\sigma\in O(\log n)$ and does not depend at all on $k$. We complete this section by showing that $k$-ESU is NP-complete in the general case, justifying the need for an FPT algorithm.

% Next, we approach the $k$-ASU problem. Once again, we show a preliminary lemma.
% \begin{lemma}\label{lem:ASUPFA}
	% For a given NFA $\mathcal{A}$ with $n$ states and $|\Sigma|=\sigma$, we can compute in $O^*(n^3 2^\sigma)$ time the smallest universality index of a word accepted by $\mathcal{A}$. 
	% \end{lemma}

% \input{proofs/lem-asupfa}

% \begin{proposition}
	%	$k$-ASU $\in P$.
	% \end{proposition}

\begin{theorem}
	For a given NFA $\mathcal{A}$ with $n$ states and $|\Sigma|=\sigma$ and a natural number $k\in\N$, we can decide $k$-ASU in $O(n^3 \sigma)$ time.
\end{theorem}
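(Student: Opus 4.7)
The plan is to use \Cref{lem:direct_paths_sufficent_every_word} together with the arch factorisation to reduce the question to a shortest-path problem. Since removing a cycle from a path yields a subsequence of its label, and universality index is monotone under subsequences (as $\Subseq(u)\subseteq\Subseq(v)$ whenever $u$ is a subsequence of $v$), the minimum arch count over accepting paths in $\mathcal{A}$ equals the minimum over simple accepting paths. A simple accepting path has length at most $n-1$ and hence at most $(n-1)/\sigma$ arches; so if $k\sigma > n-1$, the answer is ``no'' whenever $L(\mathcal{A}) \neq \emptyset$, checkable in $O(n^2\sigma)$ via reachability. Otherwise $k\leq (n-1)/\sigma\leq n$, so $k$ is polynomially bounded in $n$.

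In the non-trivial case, I compute the minimum arch count of an accepted word by a layered reachability computation. First, I precompute, for each $a\in\Sigma$, the all-pairs reachability relation in the sub-automaton $\mathcal{A}_a$ obtained from $\mathcal A$ by deleting every $a$-labelled transition, using a Floyd--Warshall-style transitive closure in $O(n^3)$ per letter and $O(n^3\sigma)$ in total. From these relations, I extract the set $Z\subseteq Q$ of states from which $F$ is reachable by a non-$1$-universal path, obtained as the union over $a\in\Sigma$ of the states that can reach $F$ in $\mathcal{A}_a$, using an additional $O(n^2\sigma)$ time.

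Next, I build a directed graph $H$ on $Q$ whose edges mark pairs $(p,q)$ such that there exists a minimal arch path from $p$ to $q$ in $\mathcal A$, so that an accepting path with exactly $j$ arches corresponds to a walk $q_0=v_0\to v_1\to\cdots\to v_j$ in $H$ with $v_j\in Z$. The minimum number of arches of an accepted word is then the shortest distance from $q_0$ to $Z$ in $H$, computable by BFS, and the answer is ``yes'' iff this distance is at least $k$. The edges of $H$ are identified by iterating over candidates for the arch's ``last new letter'' $a\in\Sigma$ and each predecessor $r$ of the endpoint along an $a$-transition, and checking whether there is a path from the source to $r$ in $\mathcal{A}_a$ whose label uses every letter of $\Sigma\setminus\{a\}$.

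The main obstacle is performing this last check without resorting to explicit tracking of subsets of $\Sigma$, which would incur a $2^\sigma$ blow-up. This is addressed by carefully combining the precomputed all-pairs reachability relations in the $\mathcal{A}_a$'s with the structural property that a minimal arch ends exactly at the first position where the label's alphabet becomes $\Sigma$; the ``uses every letter'' condition is then decomposed into polynomially many reachability queries that sequentially ``insert'' an occurrence of each required letter along the path. The total running time stays at $O(n^3\sigma)$, dominated by the all-pairs reachability preprocessing across the $\sigma$ sub-automata, with the BFS in $H$ and the comparison with $k$ adding only $O(n^2)$.
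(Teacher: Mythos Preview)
Your high-level plan matches the paper's: reduce $k$-ASU to computing the minimum universality index over $L(\mathcal{A})$ via a shortest-path computation in an auxiliary graph on $Q$, with target set the states that can reach $F$ along a non-$1$-universal path (your $Z$ is exactly the paper's $Q'$). Your preliminary reduction to $k\leq n$ is also fine.

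The gap is in the construction of the edge set of $H$. You want an edge $(p,q)$ whenever there is a path from $p$ to $q$ whose label is a minimal arch, i.e.\ you need a path from $p$ to some $r\in\Delta(q,a)$ inside $\mathcal{A}_a$ whose label uses \emph{every} letter of $\Sigma\setminus\{a\}$. But deciding ``does there exist a path between two given states whose label contains every letter'' is precisely (an instance of) the $1$-ESU problem, which the paper shows to be NP-hard when $\sigma$ is unbounded (\Cref{thm:NP_hard_1_universal}, via the Hamiltonian-path reduction). Your proposed fix --- ``decompose the uses-every-letter condition into polynomially many reachability queries that sequentially insert each required letter'' --- cannot work in general: sequentially inserting letters only certifies one specific ordering as a subsequence, while the set-coverage condition ranges over all $(\sigma-1)!$ orderings, and no polynomial scheme covers them all unless P$=$NP. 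So the step ``build $H$'' as you state it is not achievable in $O(n^3\sigma)$ time.

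The paper avoids this obstacle by relaxing the edge relation. Instead of requiring the label to \emph{be} an arch, it only requires that some letter $a$ occur exactly once, at the very end: $R_a=\{(p,q):\exists r\in\Delta(q,a)\text{ reachable from }p\text{ in }\mathcal{A}_a\}$ and $R=\bigcup_a R_a$. This is computable in $O(n^3\sigma)$ from your own reachability preprocessing. Each $R$-edge then has label of universality index \emph{at most} $1$ (possibly $0$), so a length-$\ell$ walk in $G=(Q,R)$ from $q_0$ to $Q'$ yields an accepted word with index at most $\ell$; conversely, the arch factorisation of any accepted word with index $\ell$ produces a length-$\ell$ walk in $G$ ending in $Q'$. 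Hence the shortest $q_0$--$Q'$ distance in $G$ equals $\min_{w\in L(\mathcal{A})}\iota(w)$, without ever testing ``uses every letter''. This relaxation is the missing idea in your proposal.
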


\begin{proof}
	Let $\mathcal A=(Q,\Sigma,q_0,F,\delta)$ be an NFA with $|Q|=n$. 
	If $k>n$ then there exists at least one word $w\in L(\mathcal{A})$ with $|w|<k$ and $L(\mathcal{A})$ is not $k$-universal. Thus, assume $k\leq n$. The main idea is to check, whether there exists $w\in L(\mathcal{A})$ with $\iota(w)=\ell<k$ since, if so, we know that not all words accepted by $\mathcal{A}$ are $k$-universal.
	Let $\mathcal{P}(q,q')$ be the set of all paths from $q$ to $q'$ and let $\letters(\pi)$ be the set of all (letter-)labels
	of the edges appearing on a path $\pi$ in $\mathcal{A}$.
	
	Define for all $\ta\in\Sigma$ the relation
	\[
	R_{\ta}=\{(q,q')\in Q\times Q|\, \exists q''\in Q, \exists \pi\in\mathcal{P}(q,q''):\letters(\pi)\subseteq \Sigma\backslash\{\ta\}\land q'\in\delta(q'',\ta)\},
	\]
	% 	1. We define a relation $R_\ta$ on the set of states: $q R_\ta q'$ iff
	% 	(1) there exists a state $q''$ and a path $\pi$ from $q$ to $q''$, such that the path $\pi$ does not include any edge labelled with $\ta$, \textbf{and}
	% 	(2) $q'\in \delta(q'',\ta)$.
	i.e., $(q,q')\in R_{\ta}$ iff there exists a path from $q$ to $q'$ where exactly only the last transition is labelled with $\ta$ (while all the others are labelled with other letters).
	
	Note that for a fixed $\ta\in\Sigma$, $R_\ta$ can be constructed in polynomial time: we first take $\mathcal A$ and eliminate all transitions labelled with $\ta$ in $O(n^2)$ time, in the worst case (as ${\mathcal A}$ is an NFA). Then, in $O(n^3)$ time we can determine all pairs of states $(q_1,q_2)$ such that there is a path in this modified automaton from $q_1$ to $q_2$ containing only letters of $\Sigma\backslash\{\ta\}$ as labels. Then, for each pair $(q_1,q_2)$ connected by a path in this modified automaton, we follow one edge labelled with $\ta$ from $q_2$, and identify, in this way, the pair $(q_1,q_3)$ such that $(q_1,q_3)\in R_\ta$. Set $R=\bigcup_{\ta\in\Sigma}R_{\ta}$. Overall, this entire process can be implemented in time $O(n^3 \sigma)$.
	
	Define now the set 
	\[
	Q'=\{q\in Q | \exists q'\in F, \exists \pi\in\mathcal{P}(q,q'):\,\letters(\pi)\subsetneq \Sigma\}. 
	\]
	This can be computed, trivially, in $O(n^2 \sigma)$ overall, using the relations $R_{\ta}$, for $\ta\in\Sigma$, defined above. 
	
	Further, define a directed graph $G$, which has the set of vertices equal to $Q$, and there is an edge from $q$ to $q'$ if and only if $(q,q')\in R$.
	
	Now, we show that there exists a word in $L(\mathcal{A})$ with $\iota(w)=\ell$ if and only if the shortest path from $q_0$ to a state from $Q'$, in the graph $G$, is of length $\ell$.
	
	If there exists a word $w\in L(\mathcal{A})$ with $\iota(w)=\ell$, then, using the definition of the arch factorisation, $w=w_1\ta_1w_2\ta_2\cdots w_\ell\ta_\ell w_{\ell+1}$, where, for all $i\leq \ell+1$, we have $\iota(w_i)=0$, and, for all $i\leq \ell$,  we have $\iota(w_i\ta_i)=1$. Now, this means, clearly, that there are states $q_1,\ldots, q_\ell$ such that $(q_{i-1},q_i)\in R_{\ta_i}$, for all $i\leq \ell$, and $q_\ell$ is in $Q'$ (being connected to a final state by a path labelled by the word $w_{\ell+1}$). So, there exists a path of length $\ell$ from $q_0$ to $q_\ell\in Q'$ in the graph $G$. 
	
	For the converse, if the shortest path from $q_0$ to a state from $Q'$, in the graph $G$, has length $\ell$, then there exists a sequence of states $q_1,\ldots, q_\ell$ such that $(q_{i-1},q_i)\in R_{\ta_i}$ for some $\ta_i\in \Sigma$, for all $i\leq \ell$, and $q_\ell$ is in $Q'$. Thus, for all $1\leq i\leq \ell$, there exists a word $w_i$ which labels a path from $q_{i-1}$ to $q_i$, such that $w_i$ does not contain the letter $\ta_i$, and a word $w_\ell+1$ which connects $q_\ell$ to a final state of ${\mathcal A}$ and $\iota(w_{\ell+1})=0$. Therefore, we immediately obtain that the word $w=w_1\ta_1\ldots w_\ell\ta_\ell w_{\ell+1}$ is in the language accepted by ${\mathcal A}$. Moreover, as $\iota(w_i\ta_i)\leq 1$, for all $1\leq i\leq \ell$ and $\iota(w_{ell+1})=0$, we get that $\iota(w)\leq \ell$. However, if $\iota(w)<\ell$, by the implication above we would obtain that the shortest path in ${\mathcal A}$ has length strictly smaller than $\ell$. Thus, $\iota(w)=\ell$. 
	
	So, we can conclude that the minimum $\ell$ for which there exists a word $w$ accepted by ${\mathcal A}$ with $\iota(w)=\ell$ can be determined by finding the shortest path between $q_0$ and a state of $Q'$, in the graph $G$. This step can clearly be implemented in $O(n^3 \sigma)$ time. To answer the instance of $k$-ASU defined by the NFA ${\mathcal A}$ and the integer $k$, it is enough to check whether $k$ is greater or equal to the length of this minimum $\ell$ for which there exists a word $w$ accepted by ${\mathcal A}$ with $\iota(w)=\ell$. 
	
	This concludes our proof.
\end{proof}

Worth noting, the proof of the previous theorem states that, for a given NFA ${\mathcal A}$, we can compute in polynomial time the smallest natural number $\ell$ for which there exists a word $w$ accepted by ${\mathcal A}$ with $\iota(w)=\ell$. This seems an interesting result as such, not only in the context of the previous theorem.

We now consider the $k$-ESU problem.

\begin{lemma}\label{lem:ESUPFA}
For a given NFA $\mathcal{A}$ with $n$ states and $|\Sigma|=\sigma$, we can decide in $O^*(n^3 2^\sigma)$ time whether $\mathcal{A}$ accepts words whose universality index is arbitrarily large. If the answer is negative, then we can compute the largest universality index of a word accepted by $\mathcal{A}$. 
\end{lemma}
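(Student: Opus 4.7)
The plan is to reduce the problem to reachability and cycle-detection in an auxiliary product NFA $\mathcal{A}' = (Q \times 2^{\Sigma}, \Sigma, (q_0, \emptyset), \delta', F \times 2^{\Sigma})$ that simulates $\mathcal{A}$ while tracking Hébrard's arch factorisation of the prefix read so far. Transitions are defined so that from $(q, S)$, reading $\ta \in \Sigma$, we move to $(q', S \cup \{\ta\})$ whenever $q' \in \delta(q, \ta)$ and $S \cup \{\ta\} \neq \Sigma$, and to $(q', \emptyset)$ whenever $q' \in \delta(q, \ta)$ and $S \cup \{\ta\} = \Sigma$, calling the latter type of transition \emph{arch-closing}. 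Since the second component of a state in $\mathcal{A}'$ is determined by the word read so far (by the uniqueness of the arch factorisation), every accepting path of $\mathcal{A}$ on a word $w$ lifts canonically to an accepting path of $\mathcal{A}'$ traversing exactly $\iota(w)$ arch-closing edges, and conversely.

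The key claim I would prove is the following characterisation: the set $\{\iota(w) \mid w \in L(\mathcal{A})\}$ is unbounded if and only if $\mathcal{A}'$ contains a strongly connected component that is reachable from $(q_0, \emptyset)$, co-reachable to $F \times 2^{\Sigma}$, and contains at least one arch-closing edge. The ``if'' direction is immediate by pumping the cycle inside the SCC through the arch-closing edge to produce accepted words with arbitrarily many arches. The ``only if'' direction uses a pigeonhole argument: all arch-closing edges have targets of the form $(q, \emptyset)$, so there are only $n$ possible targets, hence any accepting path of $\mathcal{A}'$ with more than $n$ arch-closing edges contains two such edges sharing a common target $(q, \emptyset)$; the subpath between them is a closed walk at $(q, \emptyset)$ containing an arch-closing edge, so the SCC of $(q, \emptyset)$ contains an arch-closing edge and is reachable/co-reachable by construction.

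Given the characterisation, the algorithm runs as follows. First, construct $\mathcal{A}'$; then compute its SCCs using Tarjan's algorithm and trim states not reachable from $(q_0, \emptyset)$ or not co-reachable to $F \times 2^{\Sigma}$; finally, scan the remaining arch-closing edges and check whether any lies inside an SCC. If yes, report \emph{unbounded}. Otherwise, every arch-closing edge is a cross-edge in the condensation DAG, and a standard longest-path DP on this DAG returns the maximum number of arch-closing edges on any path from $(q_0, \emptyset)$ to $F \times 2^{\Sigma}$; by the lift property, this is exactly the largest $\iota(w)$ for $w \in L(\mathcal{A})$.

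For complexity, $\mathcal{A}'$ has $n \cdot 2^\sigma$ states and at most $O(n^2 \sigma \cdot 2^\sigma)$ transitions, since each of the at most $n^2 \sigma$ transitions of $\mathcal{A}$ is replicated across the $2^\sigma$ possible values of $S$. The SCC computation, trimming, and longest-path DP all run in time linear in the size of $\mathcal{A}'$, yielding total runtime $O^*(n^2 \cdot 2^\sigma)$, comfortably within the claimed $O^*(n^3 \cdot 2^\sigma)$ bound. The main obstacle is to make the characterisation of unboundedness airtight — in particular, verifying the pigeonhole step and ensuring that the correspondence between accepting paths of $\mathcal{A}$ and $\mathcal{A}'$ survives nondeterministic branching, which it does precisely because the $S$-component of an $\mathcal{A}'$-path is a deterministic function of its label.
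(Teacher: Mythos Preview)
Your proposal is correct and takes a genuinely different route from the paper's proof.

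The paper proceeds in two separate phases. First, it checks directly whether $\mathcal{A}$ has an accessible and co-accessible state $q$ admitting a loop whose label contains every letter of $\Sigma$; this is done by a BFS over pairs $(q',R')$ with $R'\subseteq\Sigma$ (no resetting), for each start state $q$. If no such ``universal loop'' exists, the paper argues that every accepting path has a \emph{normal form} $((q'_0,\beta_{q'_0}^2)^\circ,\ta'_1,(q'_1,\beta_{q'_1}^2)^\circ,\ldots)$ with at most $n$ distinct $q'_i$, where $\beta_q$ is a fixed witness for the unique maximal alphabet $V_q$ of a $q$-loop; it then runs an $n$-round dynamic program over an $n\times 2^\sigma$ matrix $M[q][V]$ to find the maximum number of arches along such normal-form paths.

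Your approach replaces both phases by a single product construction: you build $\mathcal{A}'$ on $Q\times 2^\Sigma$ with the rest-alphabet resetting to $\emptyset$ at arch completion, and then reduce \emph{unboundedness} to the existence of a reachable, co-reachable SCC containing an arch-closing edge, and the \emph{bounded maximum} to a weighted longest-path computation on the condensation DAG (weight $1$ on arch-closing inter-SCC edges, $0$ otherwise). Your pigeonhole argument on the $n$ possible targets $(q,\emptyset)$ of arch-closing edges is tight and replaces the paper's normal-form analysis entirely. The gain is conceptual cleanliness and a slightly sharper bound, $O^*(n^2\,2^\sigma)$ rather than $O^*(n^3\,2^\sigma)$, since Tarjan's SCC algorithm and the DAG DP are linear in the size of $\mathcal{A}'$. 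The paper's approach, by contrast, keeps the reasoning inside $\mathcal{A}$ and yields as a by-product the structural observation that squaring a maximal-alphabet loop at each state already attains the optimum---something your argument does not make explicit but also does not need.
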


\begin{proof}
Assume that $\mathcal{A}=(Q,\Sigma,q_0,\delta,F)$, as defined in Section \ref{prelims}. Let us assume w.l.o.g.~that $\mathcal{A}$ contains only accessible and co-accessible states.

Our approach is based on a series of observations. We first make these observations, and then provide an algorithm proving the statement of the lemma.

Note first that if there exists a state $q\in Q$ (which is both accessible and co-accessible, by assumption) such that there is a path in $\mathcal{A}$ from $q$ to $q$ labelled with a word which contains all letters of $\Sigma$, then we can immediately decide that  $\mathcal{A}$ accepts words whose universality index is arbitrarily large. Indeed, to obtain an accepted word (accepting path), whose universality index is at least $\ell$, we follow the path from $q_0$ to $q$, then follow $\ell$ times the path which contains all letters of $\Sigma$ going from $q$ to $q$, and finally follow a path connecting $q$ to a final state.

Secondly, assume that there exists no state $q\in Q$ such that there is a path in $\mathcal{A}$ from $q$ to $q$ labelled with a word which contains all letters of $\Sigma$. In this setting we note that, for each state $q\in Q$, there exists a unique maximal (w.r.t. inclusion) subset $V_q\subsetneq \Sigma$ such that there exists a path from $q$ to $q$ labelled with a word $\beta_q$ with $\letters(\beta_q)=V_q$. Indeed, if two such different maximal sets $V'_q$ and $V''_q$ would exist, witnessed by the words $w'$ and $w''$, then we can follow the path labelled with $w'w''$, which goes from $q$ to $q$, and $\letters(w'w'')=V'_q\cup V''_q$ (and this includes both $V'_q$ and $V''_q$). 

Finally, consider an accepting path $\pi=(q_0,\ta_1,q_1,\ta_2,\ldots,\ta_{\ell},q_\ell)$ of $\mathcal{A}$. We can rewrite the path $\pi$ as follows:
\begin{itemize}
\item Find the rightmost occurrence of $q_0$ in $\pi$; let us assume that this is the \nth{$h$} state on this path, namely $q_h$. Replace the subpath connecting the initial occurrence of $q_0$ to $q_h$ with the loop $(q_0, \ta_1\cdots\ta_h)^{\circ}$ (where we use $\circ$ to emphasise this is a loop). Now the path is rewritten as $((q_0, \ta_1\cdots\ta_h)^\circ, \ta_{h+1},q_{h+1},\ldots,\ta_{\ell},q_\ell)$. If $q_0$ appears only once in $\pi$, then the path is rewritten as $((q_0, \varepsilon)^\circ, \ta_{1},q_{1},\ldots,\ta_{\ell},q_\ell)$. 
\item Now, we repeat the procedure for the path $(q_{h+1},\ta_{h+2},\ldots,\ta_{\ell},q_\ell)$ (respectively, if $q_0$ appeared only once on $\pi$, for the path $(q_{1},\ldots,\ta_{\ell},q_\ell)$). 
\end{itemize}
After completing this process, one obtains a {\em normal-form representation} of the path $\pi$ as $((q'_0, \alpha_1)^\circ, \ta'_{1}, (q'_{1},\alpha_2)^\circ, \ta'_2, \ldots,\ta'_{r},(q'_r,\alpha_{r+1})^\circ )$, with $r\leq n$, as the states $q'_0=q_0, q'_1, \ldots, q'_r$ are pairwise distinct; moreover, $\alpha_i\in \Sigma^*$ for all $i\in [r]$. 

Let us come back now to the case when there exists no state $q\in Q$ such that there is a path in $\mathcal{A}$ from $q$ to $q$ labelled with a word which contains all letters of $\Sigma$. 
Consider now all accepting paths $((q'_0, \alpha_1)^\circ, \ta'_{1}, (q'_{1},\alpha_2)^\circ, \ta'_2, \ldots,\ta'_{r},(q'_r,\alpha_{r+1})^\circ )$ (given in normal-form representation) where the states $q'_0,\ldots, q'_r$ and the letters $\ta'_1,\ldots,\ta'_r$ are fixed, and the loops $\alpha_1,\ldots,\alpha_r$ are variable. We are interested in how we can, for $i\in [r+1]$, choose $\alpha_i\in V_{q_{i-1}}^\ast$ in order to maximise the universality index of the resulting word. 
We claim that it is enough to take $\alpha_i=\beta^2_{q_{i-1}}$ (where the words $\beta_q$ were defined above). Indeed, this holds because the arch decomposition of the word labelling the path $((q'_0, \alpha_1)^\circ, \ta'_{1}, (q'_{1},\alpha_2)^\circ, \ta'_2, \ldots,\ta'_{r},(q'_r,\alpha_{r+1})^\circ )$ is done greedily from left to right, and we are thus interested in packing as many arches as possible in each of the prefixes $((q'_0, \alpha_1)^\circ, \ta'_{1}, (q'_{1},\alpha_2)^\circ, \ta'_2, \ldots,\ta'_{i},(q'_i,\alpha_{i+1})^\circ )$, for all $i$. 
So, we begin by noting that the alphabet of the word labelling $((q'_0, \alpha_1)^\circ)$ is always included in $V_{q'_0}$, and is indeed equal to $V_{q'_0}$ for, e.g., $\alpha_1=\beta^2_{q'_0}$. 
Now, let us consider $i\geq 1$ and the path $((q'_0, \alpha_1)^\circ, \ta'_{1}, (q'_{1},\alpha_2)^\circ, \ta'_2, \ldots,\ta'_{i},(q'_i,\alpha_{i+1})^\circ )$. 
We note that, as we do not have $1$-universal loops, the loop $(q'_i,\alpha_{i+1})^\circ$ can at most complete the final (and previously incomplete) arch of $((q'_0, \alpha_1)^\circ, \ta'_{1}, (q'_{1},\alpha_2)^\circ, \ta'_2, \ldots,\ta'_{i-1},(q'_{i-1},\alpha_{i})^\circ )$; after this potential completion, we can only try to add as many new letters as we can in the last arch. So, we will define $\alpha_{i+1}$ as a power of the word $\beta_{q'_{i}}$, that labels the loop containing $q'_{i}$ which adds the most new letters to the constructed word, and moreover it is enough to only use $\beta_{q'_i}$ twice in this power (i.e., $\alpha_{i+1}=\beta^2_{q_{i}}$). 
To see that this holds, note that if the rest of the word labelling the path $((q'_0, \alpha_1)^\circ, \ta'_{1}, (q'_{1},\alpha_2)^\circ, \ta'_2, \ldots,\ta'_{i-1},(q'_{i-1},\alpha_{i})^\circ )$ is completed to a new arch by a repetition $\beta^f_{q'_i}$ for some $f>2$, then this is done by the first $\beta_{q'_i}$ from this repetition. Then, the suffix $\beta^{f-1}_{q'_i}$ adds as many new letters to the alphabet of the rest of the resulting word as a single occurrence of  $\beta_{q'_i}$. This completes the proof of our claim. 

In conclusion: among all accepting paths $((q'_0, \alpha_1)^\circ, \ta'_{1}, (q'_{1},\alpha_2)^\circ, \ta'_2, \ldots,\ta'_{r},(q'_r,\alpha_{r+1})^\circ )$ (given in normal-form representation) the path $((q'_0, \beta^2_{q'_0})^\circ, \ta'_{1}, (q'_{1},\beta^2_{q'_1})^\circ, \ta'_2, \ldots,\ta'_{r},(q'_r,\beta^2_{q'_r})^\circ )$ has the highest universality index.

Based on these observations, we can now state the main idea of our algorithm. We first preprocess the input automaton to get rid of the states which are not accessible and of the states which are not co-accessible. Then, we check whether there exists a state $q\in Q$ such that there is a path in $\mathcal{A}$ from $q$ to $q$ labelled with a word which contains all letters of $\Sigma$; if yes, we simply decide that there exists $k$-universal accepted words, for any $k$. If no such state exists, we compute, by dynamic programming, the path $((q'_0, \beta^2_{q'_0})^\circ, \ta'_{1}, (q'_{1},\beta^2_{q'_1})^\circ, \ta'_2, \ldots,\ta'_{r},(q'_r,\beta^2_{q'_r})^\circ )$ with the highest universality index, for all states $q_r\in Q$. 

In the following we describe this algorithm in more details, by highlighting its main phases.

{\em Preprocessing:} We determine the accessible states of $\mathcal{A}$ by running a graph traversal algorithm on $\mathcal{A}$ starting from the state $q_0$. The not-accessible states are removed. We determine the co-accessible states of $\mathcal{A}$ by running a graph traversal algorithm on the graph of $\mathcal{A}$, with the direction of all edges inverted, starting from the final states. The not-co-accessible states of $\mathcal{A}$ are removed. The complexity of this step is $O(n^3)$ in the worst case. Note that for this step it is enough to consider a simplified form of the graph associated to $\mathcal{A}$, where we only see, for each two states, if there is at least one edge between them or not; as such, the total size of the graph $\mathcal{A}$ is $O(n^2)$). From now on, we assume that all states of $\mathcal{A}$ are both accessible and co-accessible.

{\em Universal Loops:} In this step, we determine whether there exists a state $q$ of $\mathcal{A}$ such that there is a path in $\mathcal{A}$ from $q$ to $q$ labelled with a word which contains all letters of $\Sigma$. This is done as follows. For each state $q\in Q$, we run the following process. We define $L$ to be a queue, initially containing the element $(q,\emptyset)$, and $S$ be a set which is initially empty. As long as $L$ is not empty, we pop the first element from $L$, let it be $(q',R')$, and update $S$ to $S\cup\{(q',R')\}$. Now, for all transitions leaving $q$ of the form $q''\in \delta(q',\ta)$, if $(q'',R'\cup\{\ta\})\not\in S$, we insert $(q'',R'\cup\{\ta\})$ into $L$. When $L$ is empty, we are done, and we simply check if $(q,\Sigma)$ is contained in $S$. If $(q,\Sigma)$ is contained in $S$ then there exists a path from $q$ to $q$ which is $1$-universal. Otherwise, there is no such path. Moreover, we also compute and store the set $V_q$ of maximal cardinality for which $(q,V_q)$ belongs to $S$.

After running the above process for all states in $Q$, if we have identified a state $q$ for which there exists a path from $q$ to $q$ which is $1$-universal, then we simply conclude that $\mathcal{A}$ accepts words whose universality index is arbitrarily large. Otherwise, we continue with the next phase.

The complexity of this phase is $O^*(n^3 2^{\sigma})$, as the numbers of transitions leaving a state $q$ is upper bounded by $n\sigma$. 

{\em No Universal Loop:} In this case, we will use a $n\times 2^\sigma$ matrix $M[\cdot][\cdot]$ (and an auxiliary matrix $M'$ of the same size). Our algorithm has at most $n$ iterations, and we will maintain the property that, before iteration $\ell+1$ (for $\ell\geq 0$),  $M[q'_r][V]$ is the maximum among the number of arches of some path with the normal-form $((q'_0, \beta^2_{q'_0})^\circ, \ta'_{1}, (q'_{1},\beta^2_{q'_1})^\circ, \ta'_2, \ldots,\ta'_{r},(q'_r,\beta^2_{q'_r})^\circ )$, with $q'_0=q_0$ and $r\leq \ell$, such that the rest $\r(u)$ of the word $u$ labelling this path has the alphabet $V$, or  $M[q'_r][V]=-1$ if no such path exists.

To begin with, before the first iteration of the loop, we initialize $M[q][V]=-1$ for all $q$ and $V$. Then we set $M[q_0][V_{q_0}]=0$. 

In the \nth{$\ell$} iteration, we copy $M$ into the matrix $M'$. We then go through all $q$ and $V$ and, if $M[q][V]\neq -1$, we do the following: 
for all letters $a\in \Sigma$, for all $q'\in \delta(q,a)$, set $V'=V\cup \{a\}$ and $\delta=0$. If $V'=\Sigma$, we reset $V'=\emptyset $ and set $\delta=1$. 
Then, we compute $V''=V'\cup V_{q'}$. If $V''=\Sigma$ (note that this can only happen when $V'\neq \emptyset$) we reset $V''=\emptyset $ and set $\delta=1$. Finally, we compute $V'''=V''\cup V_{q'}$ and set $M'[q][V''']=\max\{M'[q'][V''],M[q][V]+\delta\}$.

We stop this process after $n$ iterations.

Clearly, before the first iteration of our algorithm, the property we intend to maintain holds. Then, in the \nth{$i$} iteration, we try to extend the already constructed path ending in state $q$, with rest $V$, by reading first one letter and reaching state $q'$, and then by reading $\beta^2_{q'}$, and compute the universality index of this resulting path. So, the property is maintained in the \nth{$\ell$} iteration. 

The complexity of this phase is $O^*(n^3  2^\sigma)$, if all the above steps are implemented na\"ively. 

Now, the highest universality index of a word accepted by $\mathcal{A}$ is the largest value $M[q][V]$ for $q\in F$, $V\subseteq \Sigma$.

{\em Conclusion:} The algorithm consisting of the three phases above decides whether $\mathcal{A}$ accepts words whose universality index is arbitrarily large, or, if this is not the case, computes the largest universality index of a word accepted by $\mathcal{A}$. Its time complexity is $O^*(n^3 2^\sigma)$.
\end{proof}

As a corollary of Lemma \ref{lem:ESUPFA}, we get the following theorem.

\begin{theorem}\label{thm:ESUPFA}
For a given NFA $\mathcal{A}$ with $n$ states and $|\Sigma|=\sigma$ and a natural number $k\in\N$, we can decide $k$-ESU in $O^*(n^3 2^\sigma)$ time.
\end{theorem}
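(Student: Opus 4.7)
The plan is to derive Theorem~\ref{thm:ESUPFA} as a direct consequence of Lemma~\ref{lem:ESUPFA}. The key observation is that $k$-$\exists$-universality is a monotone property with respect to the universality index: if $\mathcal{A}$ accepts a word $w$ with $\iota(w) \geq k$, then in particular $w$ witnesses $k$-$\exists$-universality, and conversely, any $k$-universal word has $\iota(w) \geq k$. So, letting $k^{\ast}$ denote the largest universality index of any word in $L(\mathcal{A})$ (treating $k^{\ast}=\infty$ when the indices are unbounded), $\mathcal{A}$ is $k$-$\exists$-universal if and only if $k \leq k^{\ast}$.

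I would therefore proceed as follows. First, I invoke the algorithm from Lemma~\ref{lem:ESUPFA} on the input automaton $\mathcal{A}$; this runs in $O^*(n^3 2^\sigma)$ time. If the algorithm reports that $\mathcal{A}$ admits words of arbitrarily large universality index, then I immediately answer yes, since in this case there exists a word in $L(\mathcal{A})$ whose universality index is at least $k$ (indeed, the construction in the proof of Lemma~\ref{lem:ESUPFA} iterates a $1$-universal loop through some accessible and co-accessible state $k$ times). Otherwise, the algorithm returns the value $k^{\ast}$ explicitly, and I simply answer yes iff $k \leq k^{\ast}$, and no otherwise.

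The only subtlety to check is that the comparison between $k$ (given in binary representation, of size $\lceil \log_2 k \rceil$) and $k^{\ast}$ can be performed within the claimed time budget. This is not an obstacle: by Lemma~\ref{lem:ksu_min_length}, if $\mathcal{A}$ accepts any $k^{\ast}$-universal word, then it accepts one of length at most $k^{\ast} n \sigma - (n-1)(k^{\ast}-1)$, and hence, when $k^{\ast}$ is finite, the value produced by Lemma~\ref{lem:ESUPFA} is bounded by a polynomial in $n$ and $\sigma$ (it counts arches along a path whose normal-form representation has at most $n$ distinct states). Thus $k^{\ast}$ fits in $O(\log n + \log \sigma)$ bits, and the comparison with $k$ takes time linear in the size of the binary encoding of $k$, which is already part of the input size. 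Combining everything, the total running time is $O^*(n^3 2^\sigma)$, establishing the theorem. The main conceptual point is just recognising that Lemma~\ref{lem:ESUPFA} already does all the combinatorial work; there is no genuine obstacle remaining beyond the bookkeeping described above.
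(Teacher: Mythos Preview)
Your proposal is correct and follows essentially the same approach as the paper: invoke Lemma~\ref{lem:ESUPFA}, answer yes if universality indices are unbounded, and otherwise compare $k$ against the computed maximum $k^\ast$. Your additional remark that $k^\ast \leq n$ (hence the final comparison is cheap) is a welcome bit of bookkeeping the paper leaves implicit.
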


\begin{proof}
We first use Lemma \ref{lem:ESUPFA} to test whether $\mathcal{A}$ accepts words whose universality index is arbitrarily large. If yes, then the answer for the given instance of $k$-ESU is positive. If $\mathcal{A}$ does not accept words whose universality index is arbitrarily large, we compute the largest universality index $\ell$ of a word accepted by $\mathcal{A}$. If $\ell\geq k$, then the answer for the given instance of $k$-ESU is positive. Otherwise, the answer for the given instance of $k$-ESU is negative.
\end{proof}

%Next, we approach the $k$-ASU problem. Once again, we show a preliminary lemma.
%\begin{lemma}\label{lem:ASUPFA}
%For a given NFA $\mathcal{A}$ with $n$ states and $|\Sigma|=\sigma$, we can compute in $O^*(n^3 2^\sigma)$ time the smallest universality index of a word accepted by $\mathcal{A}$. 
%\end{lemma}
%
%\input{proofs/lem-asupfa}
%
%
%\begin{theorem}\label{thm:ASUPFA}
%For a given NFA $\mathcal{A}$ with $n$ states, and input alphabet of size $\sigma$, and a natural number $k$,  $k$-ASU is decidable in ${O}^*(n^3 2^\sigma)$ time.
%\end{theorem}
%
%\input{proofs/thm-asupfa}
%
%After the submission of this conference paper, we realized that the algorithmic problems solved in Lemma \ref{lem:ASUPFA} and Theorem \ref{thm:ASUPFA} can actually be solved, in fact, in polynomial time. These results will be included in an extended version of this paper.

We conclude this section by showing that $k$-ESU is actually NP-complete, and it is NP-hard even for $k=1$. Clearly, in the light of our previous results from Theorem~\ref{thm:ESUPFA}, for this problem to be NP-hard we need to consider the case of an input alphabet $\sigma\in \Omega(\log n)$. 

\begin{remark}
Note that an automaton accepts a word $w$ which is $k$-universal for $k>n = \vert Q \vert$ iff there exists a state $q$, which is both accessible and co-accessible, and a loop from $q$ to $q$ labelled with a $\ell$-universal word, for some $\ell\geq 1$. Indeed, for the 
	left-to-right implication, we look at the states $q_i$ reached by reading the first $i$ arches of the word $w$, for $i\in [k]$. Clearly, there will be some $i<j$ such that $q_i=q_j$, and the subpath between $q_i$ and $q_j$ on the path labelled with $w$ is $\ell$-universal for some $\ell\geq 1$. The other implication is immediate.
\end{remark}

\begin{theorem} \label{thm:NP_hard_1_universal}
	$k$-ESU is NP-complete.
\end{theorem}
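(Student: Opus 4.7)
The plan is to establish NP-completeness in two parts: NP-hardness by invoking prior work, and NP-membership by a case split on the size of $k$ relative to the number of states $n$, using Lemma~\ref{lem:ksu_min_length} together with the Remark preceding the theorem to exhibit a polynomial-size certificate in each case.

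For hardness, I would directly cite the construction from \cite{KimHKS22}, which already yields NP-hardness even for $k=1$, provided the input alphabet is allowed to be of size roughly $n$; as noted in the discussion preceding the statement, this is unavoidable given the FPT result of Theorem~\ref{thm:ESUPFA}. The heart of the argument is therefore to show NP-membership. The main difficulty is that $k$ is encoded in binary, so it may be exponential in the input size, and consequently any directly accepted $k$-universal word may, by Lemma~\ref{lem:ksu_min_length}, be exponentially long. So the certificate cannot simply be a witness word in full, and we need to split cases based on whether $k$ is at most polynomial in the input.

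Case 1 ($k \leq n$). Here $k$ is polynomial in the size of the input automaton, and Lemma~\ref{lem:ksu_min_length} tells us that if $\mathcal A$ accepts a $k$-universal word, then it accepts one of length at most $kn\sigma - (n-1)(k-1) = O(n^2\sigma)$. The certificate is such a word $w$; verification consists of simulating $\mathcal A$ on $w$ (polynomial in $|w|$ and $n$) and computing $\iota(w)$ via the arch factorisation in linear time \cite{barker2020scattered}, then checking $\iota(w) \geq k$.

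Case 2 ($k > n$). Here I would invoke the Remark preceding the theorem: $\mathcal A$ accepts some $k$-universal word with $k > n$ if and only if there exist a state $q$, both accessible and co-accessible, and a loop from $q$ to $q$ whose label is $\ell$-universal for some $\ell \geq 1$. Pumping such a loop arbitrarily many times yields accepted words of arbitrarily large universality index, hence $k$-universal words for any $k$. So it suffices to guess and verify such a loop, and for it to be a succinct certificate I need to bound its length. Adapting the argument of Lemma~\ref{lem:ksu_min_length} to the looping setting (apply Lemma~\ref{lem:reaching_every_state_further} between the first occurrences of each letter of $\Sigma$, and contract the remaining return path to $q$), one sees that if a $1$-universal loop at $q$ exists, then one of length at most $n\sigma$ exists. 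The full certificate is therefore: a state $q$, a simple path from $q_0$ to $q$ (length $\leq n-1$), a loop at $q$ of length at most $n\sigma$ whose label contains every letter of $\Sigma$, and a simple path from $q$ to some final state (length $\leq n-1$). All of these have polynomial size, and verification amounts to checking that the pieces are valid paths in $\mathcal A$ and that the alphabet of the loop's label equals $\Sigma$, both polynomial.

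The main obstacle I anticipate is making the length bound on the $1$-universal loop in Case 2 fully rigorous: one has to carefully adapt the shortening argument of Lemma~\ref{lem:ksu_min_length}, which was phrased for accepting paths starting at $q_0$, to the setting where the path both starts and ends at the distinguished state $q$, without losing the property that every letter of $\Sigma$ is used. This should go through by essentially the same replacement-of-inter-letter-subpaths argument, but it is the only step where some care is needed; everything else (the case split, verification, and invocation of the hardness result) is routine.
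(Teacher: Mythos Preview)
Your proposal is correct and follows essentially the same route as the paper: the same case split on $k\le n$ versus $k>n$, the same use of Lemma~\ref{lem:ksu_min_length} to bound the witness length when $k\le n$, the same invocation of the preceding Remark together with an $n\sigma$ length bound on a $1$-universal loop when $k>n$, and the same appeal to \cite{KimHKS22} for hardness. The only cosmetic difference is that the paper phrases the NP algorithm by first nondeterministically searching for the $1$-universal loop and then falling back to guessing a full witness word, whereas you separate the two cases up front; the underlying argument is identical.
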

\begin{proof}
	We begin by showing that this problem is in NP.

	To solve $k$-ESU in NP-time, we first check if the automaton $\mathcal A$ contains a state $q$, which is both accessible and co-accessible, and a path from $q$ to $q$ labelled with a $x$-universal word. First, we traverse (deterministically) the graph of $\mathcal{A}$ from $q_0$ to determine the accessible states, and also from the final states using the inverted edges to determine the co-accessible states. For each of these states, we non-deterministically guess a $1$-universal word of length at most $n\sigma$ and see if this takes us from $q$ to $q$ (by Lemma~\ref{lem:ksu_min_length} we have that, if there exists a $1$-universal word accepted by an altered version of the automaton $\mathcal{A}$, where $q$ is the unique initial and final state, then there exists one such word of length at most $n\sigma$). If we successfully guessed this word, and $k > n$, then we simply accept the input automaton; otherwise, we reject. If $k \leq n$ and no such altered autonaton exists, then follwoing Lemma \ref{lem:ksu_min_length} any $k$-subsequence universal word must have length at most $k n \sigma$. Therefore, we can check every word in $\Sigma^{k n \sigma}$ to determine if any word is both $k$-subsequence universal and accepted by $\mathcal{A}$. If such a word is found we accept the input automaton, otherwise we reject.
	
	The fact that the problem is NP-hard follows from the proof of Theorem 3 from \cite{KimHKS22}. For completeness (as the respective proof is not given in the accessible version of the paper), we sketch here a reduction from the Hamiltonian Path Problem. The high level idea behind this reduction is to take a graph $G = (V, E)$ containing $n$ vertices $v_1,\ldots, v_n$ and construct an automaton $\mathcal{A}$ containing $n^2 + 2$ states, with a unique starting state $q_0$, a unique failure state $q_f$, and a set of $n^2$ states labeled $q_{i, j}$ for every $i, j \in [1, n]$.
The state  $q_{i, j}$ is used to represent visiting the vertex $v_j$ at the $i^{th}$ step of some path in $G$. % $i$-length paths ending at vertex $j$.
With this in mind, a transition exists from $q_{i, j}$ to $q_{\ell, k}$ if and only if $\ell = i + 1$, and $(j, k)$ is an edge in $G$.
In order to map the paths accepted by this automaton directly to the paths in $G$, each transition is labelled by the index $k$ corresponding to the end state of the edge, i.e., the transition between $q_{i, j}$ and $q_{i + 1, k}$ is labelled by $k$.
With this construction, the path in $\mathcal{A}$ labelled by the word $w$ corresponds directly to some path of length $\vert w \vert$ in $G$. 

As a Hamiltonian path must have length exactly $n$, for every $j \in [1, n]$ the state $q_{n, j}$ is marked as an accepting state, and every transition from $q_{n, j}$ leads to the failure state $q_{f}$.
From this construction, any $1$-universal word must correspond exactly to some permutation of the alphabet $[n]$, representing a Hamiltonian path in $G$.
In the other direction if no such word exists, then there does not exist any such path.
\end{proof}

In conclusion, we showed that $k$-ASU is solvable in polynomial time whereas $k$-ESU is NP-complete.

\section{Counting and Ranking}
\label{sec:counting}

% \label{sec:counting}

In this section we discuss the problems of counting and ranking efficiently $k$-universal words from a regular language, given as a DFA, or $k$-universal paths in the case when the respective language is given as an NFA. Note, that there is a one-to-one correspondence between paths and words in the case of DFAs, so, for the sake of simplicity, from now on we will simply talk about counting and ranking paths accepted by the finite automata we are given as input.

Here, we define the problem of \emph{counting} and \emph{ranking} problems, for a given automaton $\mathcal{A}$, and universality index $k$. The counting problem is defined as the problem of determining the number of $k$-universal words accepted by $\mathcal{A}$, equivalent to determining the size of $\Univ_{L(\mathcal{A}), k}$. For a given length $m \in \mathbb{N}$, the problem of counting the number of $k$-universal words of length exactly (respectively, at most) $m$ is the problem of determining the size of $\Univ_{\mathcal{A}_m, k}$ (respectively $\Univ_{\mathcal{A}_{\leq m}, k}$. The \emph{rank} of a word $w \in \mathcal{A}_n$ is the number of $k$-universal words accepted by $\mathcal{A}$ that are smaller than $w$, i.e. the size of  $\{v < w \mid v \in \Univ_{L(\mathcal{A}), k} \} $. For a given length $m$, the \emph{rank} of $w$ within the set of $k$-universal words of length exactly (respectively, at most) $m$ is the problem of determining the size of the set $\vert \{ v < w \mid v \in \Univ_{\mathcal{A}_m, k} \} \vert$ (respectively, $\vert \{ v < w \mid v \in \Univ_{\mathcal{A}_{\leq m}, k} \} \vert$).

Recall that we are operating on a DFA or NFA $\mathcal{A}$ with $n$ states and an alphabet of size $\sigma > 1$ (the case $\sigma=1$ is trivial). We are also given as input the natural numbers $k$ and $m$ in binary representation: we are interested in counting (ranking) the $k$-universal words of length $m$ contained in $L(\mathcal{A})$. It is important to know that these problems are only interesting for $k\leq m/\sigma$; otherwise, there are no $m$-length $k$-universal words. However, an additional difficulty related to this problem, compared to the case of the decision problems discussed in Section \ref{npc}, is that we need to do arithmetics with large numbers. In general, if $M$ is an upper bound on the value of the numbers that we need to process and $\omega$ is the size of the memory word of our model, then each arithmetic operation requires at most $O(\frac{ \log M }{\omega})$ time to complete. In the cases we approach here, $M\leq (n\sigma)^{m+1}$ (a crude upper bound on the total number of paths of length at most $m$ in $\mathcal{A}$), so each arithmetic operation requires at most $O(\frac{(m+1) \log (n\sigma) }{\omega})$, that is $O^*(m)$ time. 

This section is laid out as follows. First, we consider the problems of counting the number of $k$-universal accepting  paths of the finite automaton $\mathcal{A}$. For a given $m \in \N$ this is split into two cases, counting  the number of $k$-universal accepting paths of $\mathcal{A}$ of length exactly $m$, and counting the number of $k$-universal accepting paths of $\mathcal{A}$ of length at most $m$. We show that both can be computed in $O^*(m^2 n^2 k 2^{\sigma})$ time.
% in $O(m^2 n^2 k \sigma 2^{\sigma})$ time.
Additionally, we show that the number of $k$-universal accepting paths of $\mathcal{A}$ can be computed in 
$O^*(n^4 k^2 2^{\sigma})$ time.
% $O(n^4 k^2 \sigma^2 2^{\sigma})$ time.

We extend our counting results to the ranking setting, showing that a path $\pi$ (respectively, word $w$) can be ranked within the set of $k$-universal paths (respectively, words) of length exactly $m$ accepted by the NFA (respectively, DFA) ${\mathcal A}$ in 
% $O(m^2 n^2 k \sigma 2^{\sigma})$ 
$O^*(m^2 n^2 k 2^{\sigma})$ 
% time, of length at most $m$ in $O(m^2 n^2 k \sigma 2^{\sigma})$ time, and of any length in $O(n^4 k^2 \sigma^2 2^{\sigma})$ time.
time, of length at most $m$ in $O^*(m^2 n^2 k 2^{\sigma})$ time, and of any length in $O^*(n^4 k^2 2^{\sigma})$ time.
% Note that these match the counting results, and indeed the counting results provide a weak lower bound to the ranking problem, as the total number of $k$-universal words accepted by $\mathcal{A}$ can be counted by computing the rank of the lexicographically largest word.
% Therefore, the time complexity of any ranking algorithm must be greater than or equal to the complexity of counting the number of $k$-universal words.

The main tool used in this section is the $n \times (m + 1) \times k \times 2^{\sigma}$ size table $T$, which we refer to as the \emph{path table of length $m$} for a given $m\in\N_0$.
% The high level idea for a deterministic finite automaton $\mathcal{A}$ is to construct an $n \times m \times k \times 2^{\sigma}$ size table $T$, which we refer to as the \emph{path table}.
Each entry in the table $T$ is indexed by a state $q \in Q$, a length $\ell \in [0, m]$, the number of arches $c \in [0, k-1]$, and a subset of symbols $\mathcal{R} \subset \Sigma$. The entry $T[q, \ell, c, \mathcal{R}]$ contains the number of $\ell$-length paths starting at the state $q_0$ and ending in the state $q$ such that the words induced by the paths contain each $c$ arches, and the alphabet of the rest  is $\mathcal{R}$. Thus,
in the context of the arch factorisation, we are interested in all words belonging to paths in $\mathcal{A}$ which are {\em not yet} $k$-universal.
Formally, we have $T[q, \ell, c, \mathcal{R}] = \sum_{w \in \Sigma^\ell, \letters(\r(w)) = \mathcal R, \iota(w)=c} \vert \mathcal{P}(w, q) \vert$, where $\mathcal{P}(w, q)$ is the set of paths from $q_0$ to $q$ in $\mathcal{A}$ labelled by $w$.
Note, that for DFAs, we have $|\mathcal{P}(w,q)|=1$.
The words which have at least $k$ arches in their arch factorisation are captured in the auxiliary $n \times m + 1$ size table $U[q, \ell]$, which we refer to as the \emph{universal words table}. Each entry in $U$ is indexed by a state $q \in Q$ and length $\ell \in [0, m]$ with the entry $U[q, \ell]$ containing the number of $\ell$-length paths ending at $q$ which are $k$-universal, i.e., $\iota(w)\geq k$.

%\begin{remark}
%In the case of NFA, we have $q\in\delta(q_0,w)$ both in $T[q,\ell,c,\mathcal{R}]$ and $U[q,\ell]$.
%\end{remark}

The remainder of this section provides the combinatorial and technical tools needed to construct the tables $T$ and $U$.
Theorems \ref{thm:counting_exactly_ell_k_subsequence_universal_words} and \ref{thm:counting_everything}, and Corollary \ref{col:counting_up_to_ell_k_subsequence_univeral_words} summarise the main complexity results of this section.

\begin{lemma}
    \label{lem:making_a_path_longer}
    Let $\pi$ be an $(\ell - 1)$-length path in the NFA $\mathcal{A}$ ending at $q$ and corresponding to the word $w_{\pi}$, such that $\iota(w_{\pi}) = c$ and $\letters(\r(w_{\pi})) = \mathcal{R}$. % containing $c$ arches and the set of symbols $\mathcal{R}$ in the post arch suffix.
    Then the word $w_{\pi'}$ corresponding to the path $\pi'$ formed by following a transition labelled $\tx\in\Sigma$ from $q$ either:
    \begin{itemize}
        \item has an empty rest $(\r(w_{\pi'}) = \varepsilon)$ if $\mathcal{R} \cup \{\tx\} = \Sigma$ and hence $\iota(w_{\pi'}) = c + 1$, or% and $\r(w_{p'}) = \varepsilon$, or
        \item has a rest equal to $\mathcal{R} \cup \{ \tx \}$ $(\letters(\r(w_{\pi'})) =  \mathcal{R} \cup \{\tx\})$ if $\mathcal{R} \cup \{\tx\} \subsetneq \Sigma$ and hence $\iota(w_{\pi'}) = c$. % and $\letters(\r(w_{p'})) =  \mathcal{R} \cup \{x\}$. %if $\mathcal{R} \cup \{x\} \neq \Sigma$.
    \end{itemize}
\end{lemma}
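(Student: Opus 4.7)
The plan is to invoke the uniqueness of the arch factorisation and perform a direct case analysis on whether appending $\tx$ completes a new arch. First, I would write $w_\pi$ in its arch factorisation as $w_\pi = \ar_1(w_\pi) \cdots \ar_c(w_\pi) \r(w_\pi)$, where $\letters(\r(w_\pi)) = \mathcal{R}$ and where, by the defining property of the rest, $\mathcal{R} \subsetneq \Sigma$. Since $w_{\pi'} = w_\pi \tx$ and since arches are identified greedily from the left as the shortest $1$-universal prefixes, the first $c$ arches of $w_\pi$ must coincide with the first $c$ arches of $w_{\pi'}$. Hence $w_{\pi'} = \ar_1(w_\pi) \cdots \ar_c(w_\pi) (\r(w_\pi)\tx)$, and the entire analysis reduces to understanding the suffix $\r(w_\pi)\tx$.

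I would then split into the two cases. If $\mathcal{R} \cup \{\tx\} = \Sigma$, then since $\mathcal{R} \subsetneq \Sigma$ the letter $\tx$ is not in $\mathcal{R}$, so $\tx$ is the last letter of $\r(w_\pi)\tx$ and does not occur in the prefix $\r(w_\pi)$. Moreover $\letters(\r(w_\pi)\tx) = \Sigma$, so $\r(w_\pi)\tx$ is $1$-universal. These are exactly the defining conditions of an arch, hence $\ar_{c+1}(w_{\pi'}) = \r(w_\pi)\tx$, the rest $\r(w_{\pi'}) = \varepsilon$, and by the theorem relating $\iota$ to the number of arches we conclude $\iota(w_{\pi'}) = c+1$. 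In the other case $\mathcal{R} \cup \{\tx\} \subsetneq \Sigma$, the factor $\r(w_\pi)\tx$ has alphabet $\mathcal{R} \cup \{\tx\} \subsetneq \Sigma$, so it is not $1$-universal and cannot form the next arch of $w_{\pi'}$. Therefore $\r(w_{\pi'}) = \r(w_\pi)\tx$, giving $\letters(\r(w_{\pi'})) = \mathcal{R} \cup \{\tx\}$ and $\iota(w_{\pi'}) = c$.

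The only subtle point is ensuring that, in the first case, $\tx$ genuinely qualifies as the final letter of a new arch, i.e.\ that it does not already occur earlier in $\r(w_\pi)$. This is guaranteed precisely by the condition $\letters(\r(w_\pi)) = \mathcal{R} \subsetneq \Sigma$ built into the arch factorisation; without it one could have $\tx \in \mathcal{R}$ and the identification $\ar_{c+1}(w_{\pi'}) = \r(w_\pi)\tx$ would fail. Beyond this observation, the argument is a direct consequence of the greedy, left-to-right nature of the arch factorisation, and the NFA/DFA distinction plays no role since the lemma speaks only about the label $w_\pi$ of the path.
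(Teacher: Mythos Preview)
Your proof is correct and follows essentially the same approach as the paper's: both arguments reduce to analysing the suffix $\r(w_\pi)\tx$ and split into the two cases according to whether $\mathcal{R}\cup\{\tx\}=\Sigma$. Your version is in fact slightly more careful, explicitly justifying that the first $c$ arches are preserved and that $\tx\notin\mathcal{R}$ so that $\r(w_\pi)\tx$ genuinely satisfies the arch condition on its last letter; the paper's proof leaves these points implicit.
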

\begin{proof}
	In the first case, if $\letters(\r(w_{\pi})) = \Sigma \setminus \{ \tx \}$ then $\r(w_{\pi}) \tx$ is an arch, as it contains every symbol from $\Sigma$ at least once, and hence the arch factorisation of $w_{\pi'}$ contains $c + 1$ arches, and an empty rest, i.e. $\iota(w_{\pi'}) = c + 1$ and $\r(w_{\pi'}) = \varepsilon$.
	Otherwise, $\r(w_{\pi}) \tx$ contains the set of letters $\mathcal{R} \cup \{\tx\}$, and does not complete a new arch.
	Hence the arch factorisation of $w_{\pi'}$ contains $c$ arches and the rest contains the letters $\mathcal{R} \cup \{\tx\}$, i.e., $\iota(w_{\pi'}) = c$ and $\al(\r(w_{\pi'})) = \al(w_{\pi}) \cup \{ \tx \}$.
	% Note that $\tx$ may already be in the set $\mathcal{R}$ without contradiction.
\end{proof}

Lemma \ref{lem:making_a_path_longer} provides the outline of the dynamic programming approach used to compute the table $T$.
Starting with the $0$-length path corresponding to the empty word $\varepsilon$, the value of $T[q, \ell, c, \mathcal{R}]$ is computed from the values of $T[q', \ell - 1, c', \mathcal{R}']$, allowing an efficient computation of the table.
Corollary \ref{lem:path_prefixes} rewrites this in terms of computing the number of paths of length $\ell$ ending at state $q$ corresponding to words with $c$ arches and the set of symbols $\mathcal{R}$ of $w$'s rest. The proof is analogous to the one of Lemma~\ref{lem:making_a_path_longer}.

\begin{corollary}
    \label{lem:path_prefixes}
    Let $\pi$ be an $\ell$-length path in the NFA $\mathcal{A}$ with $\iota(w_{\pi}) = c$, and $\al(\r(w_{\pi})) = \mathcal{R}$.
    Now, we distinguish the cases whether $\mathcal{R}$ is empty:
    \begin{description}
    \item[$\mathcal{R}=\emptyset$:]  $\iota(w_{\pi}[1, \ell - 1]) = c - 1$ and $\al(\r(w_{\pi}[1, \ell - 1])) = \Sigma \setminus \{ w_{\pi}[\ell] \}$,
    \item[$\mathcal{R}\neq\emptyset$:] $\iota(w_{\pi}[1, \ell - 1]) = c$ and 
    either $\al(\r(w_{\pi}[1, \ell - 1])) = \al(r(w_{\pi})) \setminus \{w_{\pi}[\ell] \}$ or $\al(\r(w_{\pi}[1, \ell - 1])) = \al(\r(w_{\pi}))$. In this case we also have $\mathcal{R} = \al(r(w_{\pi}[1, \ell - 1])) \cup \{w_{\pi}[\ell] \}$.
    \end{description}
%     Then, either \\
%     if $\mathcal{R} = \emptyset$, then $\iota(w_{\pi}[1, \ell - 1]) = c - 1$ and $\al(\r(w_{\pi}[1, \ell - 1])) = \Sigma \setminus \{ w_{\pi}[\ell] \}$,\\
%     or\\
%     if $\mathcal{R} \neq \emptyset$, then $\iota(w_{\pi}[1, \ell - 1]) = c$ and \\
%     either $\al(\r(w_{\pi}[1, \ell - 1])) = \al(r(w_{\pi})) \setminus \{w_{\pi}[\ell] \}$ or $\al(\r(w_{\pi}[1, \ell - 1])) = \al(\r(w_{\pi}))$,\\
%     and hence $\mathcal{R} = \al(r(w_{\pi}[1, \ell - 1])) \cup \{w_{\pi}[\ell] \}$.
\end{corollary}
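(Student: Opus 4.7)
The plan is to invert \Cref{lem:making_a_path_longer}: I would read off how the arch-factorisation of $w_\pi$ must have arisen from the arch-factorisation of its length-$(\ell-1)$ prefix upon appending the final letter $\tx = w_\pi[\ell]$. Concretely, let $\pi'$ denote the prefix of $\pi$ of length $\ell-1$, so that $w_\pi = w_{\pi'}\tx$, and apply \Cref{lem:making_a_path_longer} to $\pi'$ together with the outgoing transition labelled $\tx$. This immediately relates $(\iota(w_{\pi'}), \al(\r(w_{\pi'})))$ to $(\iota(w_\pi), \al(\r(w_\pi))) = (c, \mathcal{R})$, and the remaining work is just to invert the mapping in each of the two cases stated in the corollary.

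For the case $\mathcal{R} = \emptyset$, I would first observe that the second bullet of \Cref{lem:making_a_path_longer} can never produce an empty rest (it always gives rest alphabet $\al(\r(w_{\pi'})) \cup \{\tx\}$, which contains $\tx$). Hence the first bullet must apply, yielding $\al(\r(w_{\pi'})) \cup \{\tx\} = \Sigma$ and $\iota(w_{\pi}) = \iota(w_{\pi'}) + 1$. Rearranging gives $\iota(w_{\pi'}) = c-1$ and $\al(\r(w_{\pi'})) = \Sigma \setminus \{\tx\}$, as required.

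For the case $\mathcal{R} \neq \emptyset$, the first bullet of \Cref{lem:making_a_path_longer} is impossible (it would force $\mathcal{R} = \emptyset$), so the second bullet applies, giving $\iota(w_{\pi'}) = c$ and $\al(\r(w_\pi)) = \al(\r(w_{\pi'})) \cup \{\tx\}$, which is exactly the claimed identity $\mathcal{R} = \al(\r(w_{\pi'})) \cup \{\tx\}$. To close out the two listed sub-possibilities, I would split on whether $\tx \in \al(\r(w_{\pi'}))$: if yes, then $\al(\r(w_{\pi'})) = \mathcal{R}$; if no, then $\al(\r(w_{\pi'})) = \mathcal{R} \setminus \{\tx\}$.

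I do not foresee any real obstacle: the argument is a straightforward inversion of \Cref{lem:making_a_path_longer}, and the only thing to verify is that its two bullets are characterised precisely by whether the new rest is empty, which matches the case-split of the corollary. The mild point worth stating explicitly is that the disjunction $\mathcal{R} \cup \{\tx\} = \Sigma$ versus $\mathcal{R} \cup \{\tx\} \subsetneq \Sigma$ in the forward lemma translates cleanly into the dichotomy on $\mathcal{R}$ used here, so no further case analysis is needed.
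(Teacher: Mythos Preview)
Your proposal is correct and is exactly the approach the paper intends: the paper simply states that the proof is analogous to that of \Cref{lem:making_a_path_longer}, and your argument carries this out by applying that lemma to the length-$(\ell-1)$ prefix and inverting the two cases. The only implicit point you might make explicit is that in the $\mathcal{R}=\emptyset$ case one has $\ell\geq 1$ and hence $c\geq 1$, so $c-1$ is well-defined; otherwise nothing is missing.
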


Now, we formally establish the dynamic programming approach to calculate the value of $T[q,\ell,c,\mathcal{R}]$ from the already computed cells of the table. For better readability, we use $\mathcal{P}[q,\ell,c,\mathcal{R}]$ as the set of all $\ell$-length paths from $q_0$ to $q$ where the associated word has $c$ arches and the alphabet of its rest is $\mathcal{R}$. Notice that we have $|\mathcal{P}[q,\ell,c,\mathcal{R}]|=T[q,\ell,c,\mathcal{R}]$.

\begin{lemma}
    \label{lem:number_of_paths}
    Let $\mathcal{A}$ be an NFA and assume that $T[q_0,0,0,\emptyset]$ is given.
    Notice that given $q\in Q$, the combination $(q',\tx)\in Q\times\Sigma$ only contributes to $T[q,\ell,c,\mathcal{R}]$ if we have $q'\in\Delta(q,\tx)$. Thus, we have for all $\ell\geq 1$    
    $$ 
    T[q, \ell, c, \mathcal{R}] =
        \sum\limits_{\overset{\tx \in \Sigma,}{q'\in\Delta(q,\tx)}} \begin{cases}
            0 & \mbox{if }\tx \notin \mathcal{R} \text{ and } \mathcal{R} \neq \emptyset,\\
            0 & \mbox{if }\mathcal{R} = \emptyset, c = 0,\\
            T[q', \ell - 1, c - 1, \Sigma \setminus \{\tx\}] & \mbox{if } \mathcal{R} = \emptyset, c > 0, \\
                T[q', \ell - 1, c, \mathcal{R} \setminus \{\tx\}] 
                + T[q', \ell - 1, c, \mathcal{R}]
                 & \mbox{if }\mathcal{R} \neq \emptyset, x \in \mathcal{R}.
        \end{cases}
    $$
\end{lemma}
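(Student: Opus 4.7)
The plan is to establish the recurrence by partitioning the set $\mathcal{P}[q, \ell, c, \mathcal{R}]$ according to the last letter $\tx$ read on the path and the state $q'$ immediately preceding $q$. Every path $\pi$ of length $\ell \geq 1$ ending in $q$ arises uniquely as an extension of some $(\ell-1)$-length path $\pi'$ from $q_0$ to a state $q' \in \Delta(q, \tx)$ by a single $\tx$-labelled transition. Hence the computation of $T[q, \ell, c, \mathcal{R}]$ reduces to determining, for each pair $(q', \tx)$ with $q' \in \Delta(q, \tx)$, which prefix-cells $T[q', \ell-1, c', \mathcal{R}']$ contribute and with what multiplicity.

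Next, I would analyse the four cases on the right-hand side of the recurrence by applying \Cref{lem:path_prefixes} to the prefix $w_{\pi}[1, \ell-1]$ of the label of $\pi$. If $\mathcal{R} = \emptyset$ and $c = 0$, then $w_\pi$ has no arches and empty rest, forcing $w_\pi = \varepsilon$, which contradicts $\ell \geq 1$; so this case contributes $0$. If $\mathcal{R} = \emptyset$ and $c > 0$, the $c$-th arch of $w_\pi$ must close precisely at position $\ell$, so the corollary yields a prefix with $c-1$ arches and rest-alphabet $\Sigma \setminus \{\tx\}$, contributing $T[q', \ell-1, c-1, \Sigma \setminus \{\tx\}]$. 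If $\mathcal{R} \neq \emptyset$ and $\tx \notin \mathcal{R}$, then \Cref{lem:making_a_path_longer} applied to any extendable prefix produces a new rest-alphabet equal either to $\emptyset$ (if an arch closes) or to a set containing $\tx$ (otherwise); neither matches $\mathcal{R}$, so the contribution is $0$.

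The remaining and most delicate case is $\mathcal{R} \neq \emptyset$ with $\tx \in \mathcal{R}$. Here \Cref{lem:path_prefixes} tells us the prefix has $c$ arches and its rest-alphabet $\mathcal{R}'$ satisfies $\mathcal{R} = \mathcal{R}' \cup \{\tx\}$, which forces $\mathcal{R}' \in \{\mathcal{R}, \mathcal{R} \setminus \{\tx\}\}$. I would then verify that these two subcases are disjoint (a given prefix has a unique rest-alphabet) and that each one genuinely extends to a path in $\mathcal{P}[q, \ell, c, \mathcal{R}]$ upon appending $\tx$: starting from rest-alphabet $\mathcal{R}$ one stays in rest-alphabet $\mathcal{R}$ (since $\tx$ already occurs there), while starting from rest-alphabet $\mathcal{R} \setminus \{\tx\}$ one adds $\tx$ back to obtain rest-alphabet $\mathcal{R}$. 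Summing over both sub-options yields $T[q', \ell-1, c, \mathcal{R} \setminus \{\tx\}] + T[q', \ell-1, c, \mathcal{R}]$, and summing over all admissible $(q', \tx)$ gives the stated recurrence. The main obstacle is precisely this disjointness-and-completeness check in the fourth case; the other three follow directly from the corollary.
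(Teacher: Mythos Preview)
Your proposal is correct and follows essentially the same approach as the paper: both arguments partition $\mathcal{P}[q,\ell,c,\mathcal{R}]$ by the last transition $(q',\tx)$, invoke \Cref{lem:path_prefixes} to identify the possible prefix parameters $(c',\mathcal{R}')$, and use \Cref{lem:making_a_path_longer} for the forward direction and the zero cases. Your treatment is slightly more explicit about disjointness and completeness in the fourth case, but the underlying argument is identical.
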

\begin{proof}
	From Lemma \ref{lem:making_a_path_longer}, for every path $\pi \in \mathcal{P}[q, \ell - 1, c, \mathcal{R}]$, corresponding to the word $w_{\pi}$, and symbol $\tx$, there exists some $\ell$-length path $\pi'$ ending at some state $q' \in \delta(q,x)$ corresponding to the word $w_{\pi'} = w_{\pi} \tx$ such that:
	\begin{itemize}
		\item $\pi' \in \mathcal{P}[q', \ell, c + 1, \emptyset]$, if $\mathcal{R} = \Sigma \setminus \{ \tx \}$, or
		\item $\pi' \in \mathcal{P}[q', \ell, c, \mathcal{R} \cup \{ x \}]$ if $\mathcal{R} \neq \Sigma \setminus \{ \tx \}$.
	\end{itemize}
	
	In the other direction, from Lemma \ref{lem:path_prefixes}, every path $\pi' \in \mathcal{P}[q, \ell, c, \mathcal{R}]$ corresponding to the word $w_{\pi'}$ must contain an $(\ell - 1)$-length prefix corresponding to a word with either:
	\begin{itemize}
		\item $c - 1$ arches, and the set of symbols $\Sigma \setminus \{\tx\}$ in the rest of the word, if $\mathcal{R} = \emptyset$, or
		\item $c$ arches, and rest of the word containing the set of symbols $\mathcal{R}$ or $\mathcal{R} \setminus \{ \tx \}$, if $\mathcal{R} \neq \emptyset$.
	\end{itemize}
	Therefore, if $\mathcal{R} = \emptyset$, the size of $\mathcal{P}[q, \ell, c, \mathcal{R}]$ is equal to $\sum_{x \in \Sigma} \sum_{q' \in \Delta(q, \tx)} T[q', \ell - 1, c - 1, \Sigma \setminus \{ \tx \}]$.
	Similarly, if $\mathcal{R} \neq \emptyset$, the size of $\mathcal{P}[q, \ell, c, \mathcal{R}]$ is equal to $\sum_{x \in \mathcal{R}} \sum_{q' \in \Delta(q,\tx)} T[q', \ell - 1, c, \mathcal{R} \setminus \{x\}] + T[q', \ell - 1, c, \mathcal{R}]$.
	
	Note that if $\mathcal{R} \neq \emptyset$, and $\tx \notin \mathcal{R}$ for some $\tx\in\Sigma$, there is no path in $\mathcal{P}[q, \ell, c, \mathcal{R}]$ where the final transition is labelled $\tx$. %, as otherwise $\tx$ must appear in the alphabet of the rest of the corresponding word.
	Similarly , $\mathcal{P}[q, \ell, 0, \emptyset] = \emptyset$ for any $\ell \geq 1$.
	%, $\mathcal{R} = \emptyset$ and $c = 0$, then there are no paths in $\mathcal{P}[q, \ell, c, \mathcal{R}]$.
	Otherwise, one of the two above cases must apply, depending on the value of $\mathcal{R}$. This concludes the proof.
\end{proof}

Following Lemma \ref{lem:number_of_paths}, the table $T$ can be constructed via dynamic programming. % approach.
As a base case, note that the only length $0$ path in this automaton starting at $q_0$ is the empty path, which must also end at state $q_0$ and contains $0$ arches and no symbols in the alphabet of the rest.
Therefore, $T[q, 0, c, \mathcal{R}]$ is set to $0$ for every $q \in Q, \mathcal{R} \subsetneq \Sigma$ and $c \in [0, k-1]$ other than $T[q_0, 0, 0, \emptyset]$, which is set to 1.
We now use $T$ to construct $U$.
%
% For length $1$ paths, the value of $T[q, 1, c, \mathcal{R}]$ is set to $1$ for every $q \in Q, \mathcal{R} \subseteq \Sigma$ such that $\mathcal{R} = \{\tx\}$ for some $\tx \in \Sigma$ such that $q\in \Delta(q_0, x)$. Recall that we assume to have at least a binary alphabet and thus a single letter does not build an arch.
% The remaining entries are set to $0$.
% In the general case, we use the following technical lemma, rewording Lemma \ref{lem:number_of_paths} in terms of the table $T$.
% The following lemma includes the computation of $U$ as an extension of $T$ and the runtime of the computation.

\begin{corollary}
    \label{col:generally_computing_the_path_table}
    We have
    $U[q,\ell]= \sum_{\tx \in \Sigma,q'\in\Delta(q,\tx)} 
    		U[q', \ell - 1] + T[q', \ell - 1, k - 1, \Sigma \setminus \{\tx\}]$.
    % Further, the values of $T[q, \ell, c, \mathcal{R}]$ and $U[q, \ell]$ can be computed for every $q \in Q, \ell \in [0, m], c \in [0, k - 1]$ and $\mathcal{R} \subsetneq \Sigma$ in $O(m n^2 k \sigma 2^{\sigma})$ time.
\end{corollary}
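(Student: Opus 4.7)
The plan is to establish the corollary by partitioning the set of $k$-universal $\ell$-length paths ending in $q$ according to the last transition, and then verifying, using Lemma~\ref{lem:making_a_path_longer}, that each such path is counted exactly once on the right-hand side.

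First I would fix $q\in Q$ and let $\mathcal{U}[q,\ell]$ denote the set of $\ell$-length paths from $q_0$ to $q$ whose labels have universality index at least $k$, so that $U[q,\ell]=|\mathcal{U}[q,\ell]|$. For $\ell\ge 1$, every $\pi\in\mathcal{U}[q,\ell]$ ends with a transition $q'\xrightarrow{\tx} q$ for a uniquely determined pair $(\tx,q')$ with $q'\in\Delta(q,\tx)$; let $\pi_0$ be its $(\ell-1)$-length prefix, labelled by $w_0=w_\pi[1,\ell-1]$, and note that $w_\pi=w_0\tx$. Grouping the paths in $\mathcal{U}[q,\ell]$ by the pair $(\tx,q')$ gives
\[
U[q,\ell]=\sum_{\tx\in\Sigma}\sum_{q'\in\Delta(q,\tx)}|\mathcal{U}_{\tx,q'}|,
\]
where $\mathcal{U}_{\tx,q'}$ is the set of those $\pi\in\mathcal{U}[q,\ell]$ whose last transition is $q'\xrightarrow{\tx} q$. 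It remains to argue $|\mathcal{U}_{\tx,q'}|=U[q',\ell-1]+T[q',\ell-1,k-1,\Sigma\setminus\{\tx\}]$.

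For a given $(\tx,q')$, I would split $\mathcal{U}_{\tx,q'}$ according to $\iota(w_0)$. Appending a single letter changes the universality index by at most one: by Lemma~\ref{lem:making_a_path_longer}, $\iota(w_0\tx)\in\{\iota(w_0),\iota(w_0)+1\}$. Since $\iota(w_\pi)\ge k$, we necessarily have $\iota(w_0)\ge k-1$. This yields exactly two sub-cases: either $\iota(w_0)\ge k$, in which case $\pi_0$ already lies in $\mathcal{U}[q',\ell-1]$; or $\iota(w_0)=k-1$ and appending $\tx$ must complete a new arch. By Lemma~\ref{lem:making_a_path_longer} the latter happens exactly when $\letters(\r(w_0))\cup\{\tx\}=\Sigma$, i.e.\ $\letters(\r(w_0))=\Sigma\setminus\{\tx\}$ (using $\letters(\r(w_0))\subsetneq\Sigma$), so $\pi_0$ is counted in $T[q',\ell-1,k-1,\Sigma\setminus\{\tx\}]$. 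Conversely, any prefix $\pi_0$ in either of these two sets produces, when extended by the transition $q'\xrightarrow{\tx} q$, a path in $\mathcal{U}_{\tx,q'}$: for the first by monotonicity of $\iota$, and for the second by the first bullet of Lemma~\ref{lem:making_a_path_longer}.

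The two sub-cases are disjoint because they are distinguished by whether $\iota(w_0)\ge k$ or $\iota(w_0)=k-1$, so $|\mathcal{U}_{\tx,q'}|=U[q',\ell-1]+T[q',\ell-1,k-1,\Sigma\setminus\{\tx\}]$, and summing over $(\tx,q')$ yields the claimed identity. The main (very mild) subtlety is verifying that the two cases really cover all possibilities and are disjoint; this hinges entirely on the bound $\iota(w_0\tx)\le\iota(w_0)+1$, which is immediate from the case analysis in Lemma~\ref{lem:making_a_path_longer}. No other technical difficulty arises.
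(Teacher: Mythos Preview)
Your proposal is correct and follows essentially the same approach as the paper. The paper's own proof is a one-line appeal to Lemma~\ref{lem:number_of_paths}, whereas you spell out explicitly the partition by last transition and the two-case analysis via Lemma~\ref{lem:making_a_path_longer}; the underlying argument is the same.
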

\begin{proof}
    The correctness of this construction follows from Lemma \ref{lem:number_of_paths}.
%    Assuming that the value of $T[q', \ell - 1, c', \mathcal{R}']$ has been precomputed for every $q' \in Q, c' \in [1, k-1]$ and $\mathcal{R}' \subseteq \mathcal{R}$, the value of $T[q, \ell, c, \mathcal{R}]$ can be computed in at most $O(n \sigma)$ time.
%    Similarly, the value of $T[q, \ell, k, \emptyset]$ can be computed in at most $O(n \sigma)$ time, assuming the values $T[q', \ell - 1, k - 1, \Sigma \setminus \{\tx\}]$ have been precomputed for every $q' \in Q$ and $\tx \in \Sigma$.
%    As there are $n$ values of $q \in Q$, $2^{\sigma}$ values of $\mathcal{R} \subset \Sigma$, and $k$ values of $c \in [0, k-1]$,  the value of $T[q, \ell, c, \mathcal{R}]$ can be computed for every $q \in Q$, $c \in [0, k-1]$ , and $\mathcal{R} \subset \Sigma$ in $O(n^2 k \sigma 2^{\sigma})$ time.
%    As there are $m + 1$ values of $\ell \in [0, m]$, the total time of constructing $T$ is $O(m n^2 k \sigma 2^{\sigma})$.
%	
%	Analogously, the value of $U[q, \ell]$ can be computed in $O(n \sigma)$ time, assuming the values of $U[q', \ell - 1]$ and $T[q', \ell - 1, k - 1, \Sigma \setminus \{\tx\}]$ have be precomputed from every $q' \in Q$ and $\tx \in \Sigma$.
%    Hence the table $U$ can be constructed from the table $T$ in $O(m n^2 \sigma)$ time.
\end{proof}

% Note that Lemma \ref{lem:generally_computing_the_path_table} is consistent with the above explanation of length 1 paths.
Note that the total number of $k$-universal accepting paths of the automaton $\mathcal{A}$ is given by $\sum_{q \in F} U[q, m]$.
Using the tables $T$ and $U$, the total number of $k$-universal paths and words of length $m$ can be computed in 
% $O(m^2 n^2 k \sigma 2^{\sigma})$.
$O^*(m^2 n^2 k 2^{\sigma})$.

\begin{theorem}
    \label{thm:counting_exactly_ell_k_subsequence_universal_words}
    The number of $k$-universal accepting paths of length $m$ of  an NFA $\mathcal{A}$, for $k\leq m/\sigma$, can be computed in 
    $O^*(m^2 n^2 k 2^{\sigma})$
    % $O(m^2 n^2 k \sigma 2^{\sigma})$
    time. For  $k> m/\sigma$, this number is $0$.
\end{theorem}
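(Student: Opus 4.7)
The plan is to combine the recurrences established in Lemma \ref{lem:number_of_paths} and Corollary \ref{col:generally_computing_the_path_table} into a straightforward dynamic programming algorithm that builds the tables $T$ and $U$ in order of increasing length, and then to read the answer off the final row of $U$.

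First, I would dispose of the trivial case $k > m/\sigma$. By definition, any $k$-universal word must contain every letter of $\Sigma$ at least $k$ times, hence has length at least $k\sigma > m$; so there can be no $k$-universal accepting path of length exactly $m$, and the answer is $0$.

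Assuming $k \le m/\sigma$, I would compute $T$ as follows. Initialise the base case by setting $T[q_0, 0, 0, \emptyset] = 1$ and all other entries $T[q, 0, c, \mathcal R] = 0$. Then for $\ell = 1, 2, \ldots, m$, for each state $q \in Q$, each $c \in [0, k-1]$ and each $\mathcal R \subsetneq \Sigma$, evaluate $T[q, \ell, c, \mathcal R]$ using the recurrence from Lemma \ref{lem:number_of_paths}, which requires iterating over all pairs $(\tx, q')$ with $q' \in \Delta(q, \tx)$. Correctness is immediate from Lemma \ref{lem:number_of_paths}. Once $T$ is filled, compute $U$ in the same manner: set $U[q, 0] = 0$ for all $q$, and for $\ell \ge 1$ apply the recurrence from Corollary \ref{col:generally_computing_the_path_table}. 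Finally, return $\sum_{q \in F} U[q, m]$, whose correctness follows since $U[q, m]$ counts exactly the $m$-length paths ending at $q$ whose label has universality index at least $k$.

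The main bookkeeping step is the complexity analysis, which must correctly account for the cost of arithmetic on large counters. The table $T$ has $n \cdot (m+1) \cdot k \cdot 2^{\sigma}$ entries, and filling one entry requires, in the worst case, summing $O(\sigma n)$ values (iterating over all $\tx \in \Sigma$ and all $q' \in \Delta(q, \tx)$). Each counter is bounded by the number of paths of length at most $m$ in $\mathcal A$, i.e. by $(n\sigma)^{m+1}$, so each addition costs $O^*(m)$ time on our RAM model. Hence building $T$ takes time $O^*(n \cdot m \cdot k \cdot 2^\sigma \cdot \sigma n \cdot m) = O^*(m^2 n^2 k\, 2^\sigma)$, absorbing the factor $\sigma$ into the $O^*$ notation since $\sigma \le 2^\sigma$. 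Building $U$ is cheaper, since $U$ has only $n(m+1)$ entries, each of which requires an $O(\sigma n)$-sized sum of $O^*(m)$-cost additions; this is dominated by the cost of constructing $T$. The final summation over $F$ adds only $O^*(n m)$. Totalling everything gives the claimed bound $O^*(m^2 n^2 k\, 2^\sigma)$, and no step poses a serious obstacle beyond careful accounting.
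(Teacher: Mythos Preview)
Your proposal is correct and follows essentially the same approach as the paper: both build $T$ via the recurrence of Lemma~\ref{lem:number_of_paths}, then $U$ via Corollary~\ref{col:generally_computing_the_path_table}, and read off $\sum_{q\in F}U[q,m]$, with the same accounting for $O^*(m)$-cost arithmetic on large counters. Your explicit treatment of the trivial case $k>m/\sigma$ and of the base case $U[q,0]=0$ are not spelled out in the paper's proof, but are harmless additions.
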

\begin{proof}
	Note that, as explained before, all arithmetic operations require $O(m)$ time in our computational model (we operate with numbers of value at most $(n\sigma)^m$).
	    Using the table $U$, the number of $k$-universal words of length $m$ can be counted by computing the sum $\sum_{q \in F} U[q, m]$, which takes $O^*(m n)$ time.
	    % $O(m n)$ time.
	    To construct the table $U$, it is nessesary first to construct the table $T$.
	    Assuming that the value of $T[q', \ell - 1, c', \mathcal{R}']$ has been precomputed for every $q' \in Q, c' \in [1, k-1]$ and $\mathcal{R}' \subsetneq \Sigma$, the value of $T[q, \ell, c, \mathcal{R}]$ can be computed in $O^*(m n)$ time.
	    % $O(m n \sigma)$ time.
	    % , with the factor $m$ corrpesponding to the cost of arithmetic operations.
    % Similarly, the value of $T[q, \ell, k, \emptyset]$ can be computed in at most $O(n \sigma)$ time, assuming the values $T[q', \ell - 1, k - 1, \Sigma \setminus \{\tx\}]$ have been precomputed for every $q' \in Q$ and $\tx \in \Sigma$.
    As there are $n$ values of $q \in Q$, $2^{\sigma}$ values of $\mathcal{R} \subset \Sigma$, and $k$ values of $c \in [0, k-1]$,  the value of $T[q, \ell, c, \mathcal{R}]$ can be computed for every $q \in Q$, $c \in [0, k-1]$ , and $\mathcal{R} \subset \Sigma$ in $O^*(m n^2 k 2^{\sigma})$ time.
    % $O(m n^2 k \sigma 2^{\sigma})$ time.
    As there are $m + 1$ values of $\ell \in [0, m]$, the total time of constructing $T$ is $O^*(m^2 n^2 k 2^{\sigma})$.
    % $O(m^2 n^2 k \sigma 2^{\sigma})$.
    
	Analogously, the value of $U[q, \ell]$ can be computed in $O^*(m n)$ time,
	% $O(m n \sigma)$ time,
	assuming the values of $U[q', \ell - 1]$ and $T[q', \ell - 1, k - 1, \Sigma \setminus \{\tx\}]$ have be precomputed from every $q' \in Q$ and $\tx \in \Sigma$.
    Hence $U$ can be constructed from the table $T$ in $O^*(m^2 n^2)$ time. 
    % $O(m^2 n^2 \sigma)$ time. 
    By extension the number of $k$-universal paths of length $m$ accepted by $\mathcal{A}$ computed in $O^*(m^2 n^2 k 2^{\sigma})$ time.
    % $O(m^2 n^2 k \sigma 2^{\sigma})$ time.
\end{proof}

\begin{corollary}
    \label{col:counting_up_to_ell_k_subsequence_univeral_paths}
    The number of  $k$-universal accepting paths of length at most $m$ of an NFA $\mathcal{A}$, for $k\leq m/\sigma$, can be computed in 
    % $O(m^2 n^2 k \sigma 2^{\sigma})$
    $O^*(m^2 n^2 k 2^{\sigma})$
    time. For  $k> m/\sigma$, this number is $0$.
\end{corollary}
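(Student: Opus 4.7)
The plan is to observe that the tables $T$ and $U$ constructed in the proof of Theorem~\ref{thm:counting_exactly_ell_k_subsequence_universal_words} already contain all the information required. Indeed, for every length $\ell \in [0,m]$ and every final state $q \in F$, the entry $U[q,\ell]$ is the number of $k$-universal accepting paths of length exactly $\ell$ ending in $q$. Therefore, the number of $k$-universal accepting paths of length at most $m$ is simply
\[
\sum_{\ell = 0}^{m} \sum_{q \in F} U[q,\ell].
\]

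First, I would build $T$ and $U$ exactly as in the proof of Theorem~\ref{thm:counting_exactly_ell_k_subsequence_universal_words} for all lengths $\ell \in [0,m]$; this costs $O^*(m^2 n^2 k\, 2^{\sigma})$ time. Second, I would compute the double sum above. It has at most $(m+1)|F| \leq (m+1)n$ terms, and each added value is bounded by the total number of paths of length at most $m$ in $\mathcal{A}$, which is at most $(n\sigma)^{m+1}$. In our computational model each such addition takes $O^*(m)$ time, so the summation phase runs in $O^*(m^2 n)$ time, which is dominated by the cost of producing $T$ and $U$. Hence the overall complexity remains $O^*(m^2 n^2 k\, 2^{\sigma})$.

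For the boundary case $k > m/\sigma$, I would invoke the standard combinatorial lower bound: any $k$-universal word over $\Sigma$ must contain every letter of $\Sigma$ at least $k$ times, and therefore has length at least $k\sigma > m$. Consequently no accepting path of length at most $m$ can be $k$-universal, and the answer is $0$, which is returned in constant time after reading $k$, $m$, and $\sigma$.

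There is no real obstacle here: the statement is a direct corollary of Theorem~\ref{thm:counting_exactly_ell_k_subsequence_universal_words}, obtained by reusing the same dynamic programming tables and summing across lengths. The only point that needs a brief justification is that the extra summation does not inflate the complexity, which is immediate from the bound on the numbers involved and the $O^*(\cdot)$ convention used throughout the paper.
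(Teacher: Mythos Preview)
Your proposal is correct and follows essentially the same approach as the paper: sum the already-computed values $U[q,\ell]$ over all $q\in F$ and $\ell\in[0,m]$, and note that this extra summation is dominated by the cost of building $T$ and $U$. Your treatment is in fact slightly more careful than the paper's, since you account explicitly for the $O^*(m)$ cost of each addition and you justify the $k>m/\sigma$ case, which the paper leaves implicit.
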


\begin{proof}
	Observe that the number of $k$-universal paths of length at most $\ell$ is equal to $\sum_{q \in F} \sum_{\ell \in [1, m]} U[q, \ell]$.
	As this summation can be computed in % $O(m n \sigma)$ 
	$O^*(m n)$ time once $U$ has been constructed, the total complexity follows from the cost of constructing tables $T$ and~$U$.
\end{proof}

Since in a DFA each word is associated with exactly one path, the following holds.

\begin{corollary}
    \label{col:counting_up_to_ell_k_subsequence_univeral_words}
    The number of $k$-universal words of length either exactly or at most $m$ accepted by a DFA $\mathcal{A}$, for $k\leq m/\sigma$, can be computed in 
    % $O(m^2 n^2 k \sigma 2^{\sigma})$ 
    $O^*(m^2 n^2 k 2^{\sigma})$ 
    time. For  $k> m/\sigma$, this number is $0$.
\end{corollary}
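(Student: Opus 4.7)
The plan is to obtain this corollary as an almost immediate consequence of Theorem~\ref{thm:counting_exactly_ell_k_subsequence_universal_words} and Corollary~\ref{col:counting_up_to_ell_k_subsequence_univeral_paths}, by exploiting the defining property of a DFA: for every word $w \in L(\mathcal{A})$ there is exactly one accepting path $\pi_w$ labelled by $w$, because the transition function $\delta$ is single-valued. Consequently, the map $w \mapsto \pi_w$ is a bijection between the accepted words and the accepting paths, and it restricts to a bijection between the $k$-universal accepted words and the $k$-universal accepting paths (since $k$-universality is a property of the label $w_\pi$ only). This is the key observation, and once it is in place there is no remaining combinatorial content.

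First I would argue the length-exactly-$m$ case: by the bijection above, the set $\Univ_{\mathcal{A}_m, k}$ is in one-to-one correspondence with the set of $k$-universal accepting paths of length exactly $m$ in $\mathcal{A}$, so their cardinalities coincide. Then I invoke Theorem~\ref{thm:counting_exactly_ell_k_subsequence_universal_words} to conclude that this cardinality can be computed in $O^*(m^2 n^2 k 2^{\sigma})$ time when $k \leq m/\sigma$. The length-at-most-$m$ case is handled identically, using Corollary~\ref{col:counting_up_to_ell_k_subsequence_univeral_paths} in place of the theorem; no additional table construction or arithmetic is required, since the tables $T$ and $U$ from the NFA construction apply verbatim to DFAs (which are a special case of NFAs).

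For the regime $k > m/\sigma$, I would simply note that any $k$-universal word $w$ must contain every one of the $\sigma$ letters at least $k$ times, hence $|w| \geq k\sigma > m$, so no word of length at most $m$ can be $k$-universal; the count is therefore $0$, as claimed. I do not foresee any genuine obstacle, as the corollary is essentially a restatement of the previous two results for the deterministic setting; the only thing to be careful about is to make the bijection between words and accepting paths explicit, so that the transfer of the complexity bound is fully justified.
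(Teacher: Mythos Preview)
Your proposal is correct and matches the paper's own approach: the paper simply remarks that in a DFA each word is associated with exactly one path, so the corollary follows directly from Theorem~\ref{thm:counting_exactly_ell_k_subsequence_universal_words} and Corollary~\ref{col:counting_up_to_ell_k_subsequence_univeral_paths}. Your write-up is in fact more detailed than the paper's, spelling out the bijection and the $k>m/\sigma$ case explicitly.
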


% \begin{corollary}
%    \label{col:counting_up_to_ell_k_subsequence_univeral_words}
%    The number of  $k$-universal words of length at most $m$ accepted by the automaton $\mathcal{A}$ can be computed in $O(m n^2 k \sigma 2^{\sigma})$ time.
% \end{corollary}

We may further generalise this to the problem of finding the number of perfect $k$-universal words and paths by discarding any $k$-universal paths with a non-empty rest.

\begin{corollary}
	\label{col:counting_perfect_universal_paths}
	The number of perfect $k$-universal accepting paths of length either exactly or at most $m$ of an NFA $\mathcal{A}$, for $k\leq m/\sigma$, can be computed in 
	% $O(m^2 n^2 k \sigma 2^{\sigma})$
	$O^*(m^2 n^2 k 2^{\sigma})$
	time.
\end{corollary}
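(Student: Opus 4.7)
The plan is to piggy-back on Theorem~\ref{thm:counting_exactly_ell_k_subsequence_universal_words}, reusing the table $T$ and substituting only the aggregation step: in place of the table $U$ from Corollary~\ref{col:generally_computing_the_path_table}, I will build a table $P[q,\ell]$ counting $\ell$-length paths from $q_0$ to $q$ whose labels have \emph{exactly} $k$ arches and \emph{empty} rest. Since, by definition, a word is perfect $k$-universal precisely when it meets both conditions, summing $P[q,m]$ over $q\in F$ will yield the count for length exactly $m$, and additionally summing over $\ell\in[1,m]$ will yield the count for length at most $m$.

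To derive the recurrence for $P$, I invoke Lemma~\ref{lem:making_a_path_longer}: an $\ell$-length path whose label has $k$ arches and empty rest is obtained by extending an $(\ell-1)$-length path whose label has exactly $k-1$ arches and rest-alphabet $\Sigma\setminus\{\tx\}$ via a transition labelled $\tx$, and the letter $\tx$ (being the final letter of the label) is uniquely determined by the extended path. Conversely, by Lemma~\ref{lem:making_a_path_longer} every such extension produces exactly the desired configuration (one new arch completes, emptying the rest). Hence the natural recurrence
\[
P[q,\ell] \;=\; \sum_{\tx\in\Sigma}\;\sum_{q'\in\Delta(q,\tx)} T[q',\,\ell-1,\,k-1,\,\Sigma\setminus\{\tx\}]
\]
counts perfect $k$-universal paths exactly once, with no overcounting.

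For the complexity, constructing $T$ dominates at $O^*(m^2 n^2 k\, 2^{\sigma})$ by Theorem~\ref{thm:counting_exactly_ell_k_subsequence_universal_words}; once $T$ is available, filling in all entries of $P$ and performing the final summations over $F$ and $[1,m]$ costs only $O^*(m^2 n^2)$ in total, using the same arithmetic-cost bookkeeping as in Theorem~\ref{thm:counting_exactly_ell_k_subsequence_universal_words}. The overall running time is therefore $O^*(m^2 n^2 k\, 2^\sigma)$, matching the non-perfect case. I do not anticipate any real obstacle here: the combinatorial content (tracking arch-count and rest-alphabet) is already fully encoded in the coordinates $(c,\mathcal{R})$ of $T$, so the only non-routine step is the verification that the extension map above is a bijection onto the perfect $k$-universal paths of length $\ell$ ending at $q$, which is immediate from Lemma~\ref{lem:making_a_path_longer} and the uniqueness of the final letter.
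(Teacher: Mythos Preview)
Your proposal is correct and essentially follows the approach the paper indicates: the paper states (without a written proof) that one generalises the previous results ``by discarding any $k$-universal paths with a non-empty rest,'' and your table $P[q,\ell]$ does exactly that by reading off, via Lemma~\ref{lem:making_a_path_longer}, the entries $T[q',\ell-1,k-1,\Sigma\setminus\{\tx\}]$ already available in the path table. The bijection argument, the recurrence, and the complexity accounting all match the paper's framework.
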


\begin{corollary}
	\label{col:counting_perfect_universal_words}
	The number of perfect $k$-universal words of length either exactly or at most $m$  accepted by a DFA $\mathcal{A}$, for $k\leq m/\sigma$, can be computed in 
	% $O(m^2 n^2 k \sigma 2^{\sigma})$
	$O^*(m^2 n^2 k 2^{\sigma})$
	time.
\end{corollary}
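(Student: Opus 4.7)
The plan is to specialise the dynamic-programming machinery developed in Theorem \ref{thm:counting_exactly_ell_k_subsequence_universal_words} to isolate paths whose labels are \emph{perfect} $k$-universal, i.e., words $w$ with $\iota(w) = k$ and $\r(w) = \varepsilon$. The key observation is that this condition (exactly $k$ arches and empty rest) corresponds precisely to a boundary of the recurrence defining $T$: by Lemma \ref{lem:making_a_path_longer}, the only way to obtain a word with empty rest after $k$ complete arches is to extend a word with $k-1$ complete arches and rest-alphabet exactly $\Sigma \setminus \{\tx\}$ by a transition labelled $\tx$.

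Concretely, I would first extend the index range of $c$ in the path table $T$ from $[0, k-1]$ to $[0, k]$, reading $T[q, \ell, k, \emptyset]$ as the number of $\ell$-length paths from $q_0$ to $q$ whose labels are perfect $k$-universal. By the observation above, this gives the recurrence
$$T[q, \ell, k, \emptyset] = \sum_{\tx \in \Sigma,\, q' \in \Delta(q, \tx)} T[q', \ell - 1, k - 1, \Sigma \setminus \{\tx\}],$$
while all cells $T[q, \ell, k, \mathcal{R}]$ with $\mathcal{R} \neq \emptyset$ remain $0$, since a perfect $k$-universal word cannot be extended by any letter without ceasing to be perfect $k$-universal. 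For $c \leq k-1$, the recurrence of Lemma \ref{lem:number_of_paths} is reused verbatim, and the extended table $T$ is then filled by the same bottom-up sweep over $\ell$ as before.

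Since $\mathcal{A}$ is a DFA, each accepted word corresponds to a unique accepting path, so the number of perfect $k$-universal words of length exactly $m$ accepted by $\mathcal{A}$ equals $\sum_{q \in F} T[q, m, k, \emptyset]$, and the number of such words of length at most $m$ equals $\sum_{q \in F} \sum_{\ell=1}^{m} T[q, \ell, k, \emptyset]$. Both sums can be evaluated in $O^*(m n)$ time once $T$ has been constructed. Extending the range of $c$ by one multiplies the work by a constant factor, so the construction still runs in $O^*(m^2 n^2 k 2^\sigma)$ time, matching the bound from Theorem \ref{thm:counting_exactly_ell_k_subsequence_universal_words}. I do not anticipate any real obstacle here: the main conceptual point is simply recognising that perfect universality is a natural boundary cell of the already-constructed table rather than a new combinatorial regime requiring separate analysis. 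The only minor bookkeeping is confirming that when $k > m/\sigma$ the output is trivially $0$, which follows because any perfect $k$-universal word has length at least $k\sigma$.
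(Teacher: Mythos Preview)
Your proposal is correct and essentially matches the paper's approach: the paper does not give a detailed argument for this corollary, but merely remarks that one obtains it by ``discarding any $k$-universal paths with a non-empty rest'', which is precisely what your observation that perfect $k$-universality corresponds to the boundary cell $T[q,\ell,k,\emptyset]$ makes explicit. One minor wording issue: the cells $T[q,\ell,k,\mathcal{R}]$ with $\mathcal{R}\neq\emptyset$ are not literally zero (there do exist words with $\iota(w)=k$ and non-empty rest), but you are right that they are irrelevant for the count you want and need not be computed, since $T[q,\ell,k,\emptyset]$ depends only on entries with $c=k-1$.
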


% \begin{corollary}
%	\label{col:counting_perfect_universal_paths}
%	The number of perfect $k$-universal accepting paths of length at most $m$ of a NFA $\mathcal{A}$ can be comptued in $O(m n^2 k \sigma 2^{\sigma})$ time.
%\end{corollary}
%
%\begin{corollary}
%	\label{col:counting_perfect_universal_words}
%	The number of perfect $k$-universal words of length at most $m$  accepted by a deterministic finite automaton $\mathcal{A}$ can be comptued in $O(m n^2 k \sigma 2^{\sigma})$ time.
%\end{corollary}

We now generalise these tools to the problem of counting the total number of $k$-universal paths accepted by a finite automaton $\mathcal{A}$.
The primary challenge of counting the total number of such paths comes from determining if there exists either a finite or an infinite number of $k$-universal paths accepted by $\mathcal{A}$.
By the pumping lemma \cite{HopcroftU79}, we have that an automaton $\mathcal{A}$ accepts an infinite number of $k$-universal words (paths) if and only if $\mathcal{A}$ accepts some $k$-universal word (path) of length at least $n + 1$. Clearly, if $k>n$ we have that $\mathcal{A}$ either accepts an infinite number of $k$-universal words (paths) or none (and this can be tested as in Lemma \ref{lem:ESUPFA}, in time that does not depend on $k$). Therefore, using the upper bound on the maximum length of the shortest $k$-universal word accepted by the automaton $\mathcal{A}$ from Lemma \ref{lem:ksu_min_length}, 
% we now have an upper bound on the maximum length of the shortest $k$-universal word accepted by the automaton $\mathcal{A}$.
combined with Corollary \ref{col:counting_up_to_ell_k_subsequence_univeral_words}, we now count the total number of $k$-subsequence universal words accepted by $\mathcal{A}$.\looseness=-1

\begin{theorem}
\label{thm:counting_everything}
    The total number of $k$-universal words (resp., paths) accepted by a DFA (resp., NFA) $\mathcal{A}$ can be determined in 
    $O^*(n^4 k^2 2^{\sigma})$ time (resp.,  $O^*(n^4 k^3 2^{\sigma})$ time),
    % $O(n^4 k^2 \sigma^2 2^{\sigma})$ time (resp.,  $O(n^4 k^3 \sigma^3 2^{\sigma})$ time),
   for $k\leq n$. For $k> n$, this number is either $0$ or $\infty$, and can be determined in
     $O^*(n^3 2^\sigma)$ time. 
     % $O(n^3 \sigma^2 2^\sigma)$ time. 
\end{theorem}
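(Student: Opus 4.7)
The plan is to handle the two regimes $k>n$ and $k\leq n$ separately, reducing each to previously established tools.

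For $k>n$: any $k$-universal word has length at least $k\sigma>n$, so by the pumping lemma every such accepted word can be pumped, producing infinitely many $k$-universal accepted words (the pumped word contains the original as a subsequence and is thus $k$-universal). Hence the count is either $0$ or $\infty$, which I decide by invoking Lemma \ref{lem:ESUPFA} in $O^*(n^3 2^\sigma)$ time: if the lemma reports ``universality arbitrarily large'' the answer is $\infty$; otherwise it returns the largest universality index $\ell^\ast$ of an accepted word, and the answer is $\infty$ iff $\ell^\ast\geq k$ (and $0$ otherwise).

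For $k\leq n$, the same pumping argument yields the dichotomy that the total count is infinite iff $\mathcal A$ accepts some $k$-universal word of length at least $n$. The algorithm (i) computes, via Corollary \ref{col:counting_up_to_ell_k_subsequence_univeral_paths} (for NFA paths) or Corollary \ref{col:counting_up_to_ell_k_subsequence_univeral_words} (for DFA words) with $m=n-1$, the count $C$ of $k$-universal accepted paths/words of length at most $n-1$; and (ii) builds the DP tables $T$ and $U$ of Section \ref{sec:counting} up to length $m^\ast=\Theta(kn\sigma)$ and outputs $\infty$ if some $U[q,\ell]$ with $q\in F$ and $\ell\in[n,m^\ast]$ is positive, and outputs $C$ otherwise.

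The main obstacle is proving that the length bound $m^\ast=\Theta(kn\sigma)$ is sufficient for the existence test in (ii). The plan is to adapt the arch-respecting shortening of Lemma \ref{lem:ksu_min_length}: given any $k$-universal accepted word $w$ with $|w|\geq n$, as long as $|w|\geq 2n$ and some interval between two consecutive new-letter positions of the arch factorisation of $w$ has length at least $n$, that interval contains a cycle of length at most $n$ whose letters are redundant for the arch structure, and removing one such cycle preserves acceptance, $k$-universality, and the lower bound $|w|\geq n$. The shortening process therefore terminates with either every interval shorter than $n$ (and hence total length at most $(k+1)n\sigma$) or $|w|<2n$, in both cases giving a witness of length in $[n,O(kn\sigma)]$. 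The stated running times then follow: the construction of $T$ and $U$ up to $m^\ast$ costs $O^*((m^\ast)^2 n^2 k 2^\sigma)=O^*(n^4 k^3 2^\sigma)$ for NFAs by Corollary \ref{col:counting_up_to_ell_k_subsequence_univeral_paths}, while the tighter accounting afforded by the unique-path structure of DFAs yields $O^*(n^4 k^2 2^\sigma)$ for DFAs.
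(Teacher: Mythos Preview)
Your approach is essentially the paper's: split on the pumping threshold, invoke a Lemma~\ref{lem:ksu_min_length}-style arch-preserving shortening to cap the witness length at $O(kn\sigma)$, run the DP tables up to that cap to detect infiniteness and up to the threshold to obtain the finite count, and handle $k>n$ via Lemma~\ref{lem:ESUPFA}. You are in fact more careful than the paper, which for the length bound simply writes ``following the same arguments as in Lemma~\ref{lem:ksu_min_length}'' without addressing preservation of the lower bound $|w|\geq n$; one small clarification you should make is that your ``new-letter positions of the arch factorisation'' must refer to the first $k$ arches only (with everything beyond treated as a single interval), since otherwise the interval count is $\iota(w)\sigma$ rather than $k\sigma$ and your $(k+1)n\sigma$ bound would not follow. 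Your final sentence about the DFA bound is a hand-wave, but so is the paper's own proof, which likewise never explains where the extra factor of $k$ is saved for DFAs.
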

\begin{proof}
We only show this for NFAs, as the argument for DFAs is similar (and uses the previous results corresponding to this class of automata). 

    Note first that if an automaton accepts some $k$-universal paths $w$ of length at least $n + 1$, then it must accept an infinite number of such words, as the path induced by $w$ in $\mathcal{A}$ must visit some state twice and thus contain a cycle.
    Therefore, if $\mathcal{A}$ accepts only a finite number of $k$-universal paths, the total number of paths accepted by $\mathcal{A}$ can be computed in % $O(n^4 k \sigma 2^{\sigma})$ 
    $O^*(n^4 k 2^{\sigma})$ time via Theorem  \ref{thm:counting_exactly_ell_k_subsequence_universal_words} and Corollary \ref{col:counting_up_to_ell_k_subsequence_univeral_paths}.
%    \label{col:counting_up_to_ell_k_subsequence_univeral_words}

    % To determine if $\mathcal{A}$ accepts some word of length at least $n + 1$, note that given some $k$-subsequence word $w$ of length at most $n$ accepted by $\mathcal{A}$ such that the path labelled $w$ contains at least one state $q$ in a cycle $C$, a new word $w'$ of length at most $2 n$ can be formed by following the cycle $C$ from state $q$ to itself.
    % Therefore, by counting the number of words of length between $n + 1$ and $2 n$, it is possible to check if some $k$-subsequence universal word of length at most $n$ can be extended an infinite number of times by inserting, as a midfix, the labelling of some cycle $C$.
    %
    %, an infinite number of $k$-subsequence universal words are accepted by $\mathcal{A}$ if and only if there exists some $k$-subsequence universal word of length at least $n + 1$.
    Following the same arguments as in \Cref{lem:ksu_min_length}, a $k$-universal path of length of at least $n + 1$ exists if and only if there exists some $k$-universal word of length between $n + 1$ and $k n \sigma$, hence it is sufficient to check if $\mathcal{A}$ accepts some word of length between $n + 1$ and $k n \sigma$, which can be achieved in $O^*(n^4 k^3 2^{\sigma})$ time. 
    % $O(n^4 k^3 \sigma^3 2^{\sigma})$ time. 
    If no $k$-universal word accepted by $\mathcal{A}$ of length at least $n + 1$ exists, then total number of $k$-universal words accepted by $\mathcal{A}$ is determined by counting the number of $k$-universal words accepted by $\mathcal{A}$ of length at most $n$.
\end{proof}

We now extend our results of counting to the problem of ranking $k$-universal words.
Note, that these results can be generalised to ranking the $k$-universal accepted paths, but one needs to define an ordering on the transitions from each state in the automaton. To keep the presentation simple, we will therefore only discuss here about DFAs and words.
The main idea is to count the number of $k$-universal words with a prefix strictly smaller than the prefix of $w$ of the same length; again, each arithmetic operation takes $O(m)$ time in our computational model.
This section is laid out as follows.
First, we show how to compute the rank of $w$ efficiently within the set $\Univ_{\mathcal{A}_{\leq m}, k}$ in 
$O^*(m^2 n^2 k 2^{\sigma})$ time.
% $O(m^2 n^2 k \sigma 2^{\sigma})$ time.
% We build on this by showing that the rank of $w$ within $\Univ_{\mathcal{A}_{\leq n}, k} $ with the same complexity as finding the $n$-length rank of $w$, improving on the direct approach of simply finding the $i$-length rank of $w$ for every $i \in [1, n]$.
Secondly, we show that the rank of $w$ can be computed within the set $\Univ_{L(\mathcal{A}), k}$ in 
$O^*(n^4 k^3 2^{\sigma})$ time.
% $O(n^4 k^3 \sigma^3 2^{\sigma})$ time.
As noted above, when discussing counting, these problems only make sense for $k\leq m/\sigma$. 
%  the set of all $k$-universal words accepted by $\mathcal{A}$.
This follows from the same arguments used to count the total number of $k$-universal words accepted by $\mathcal{A}$ laid out in Theorem \ref{thm:counting_everything}. \looseness=-1

% We define three related variations of the ranking problem.

% \begin{definition}
%    The \emph{$n$-length rank} of a word $w$ within the set $\Univ_{\mathcal{A}^n, k} $ % $\mathcal{U}(\mathcal{A})$
%    is the number of words in  $\Univ_{\mathcal{A}^n, k}$ that are lexicographically smaller than $w$.
%\end{definition}

%\begin{definition}
%    The \emph{at-most-$n$-length rank} of a word $w$  within the set $\Univ_{\mathcal{A}^n, k}$ is the number of words in 
%    $\Univ_{\mathcal{A}^{ \leq n}, k}$
    % $\mathcal{U}(\mathcal{A}) \cap \left( \bigcup_{i \in [1, n]} \Sigma^{i} \right)$
%     that are lexicographically smaller than $w$.
%\end{definition}

%\begin{definition}
    %The \emph{any-length rank} of a word $w$ within the set $\Univ_{\mathcal{A}, k}$ %$\mathcal{U}(\mathcal{A})$
    %is the number of words in $\Univ_{\mathcal{A}, k}$ % $\mathcal{U}(\mathcal{A})$
    %that are lexicographically smaller than $w$.
%\end{definition}

The primary tool used in this section is a generalisation of the path table: the \emph{fixed prefix path table of length $m$} is an $n \times (m + 1)\times k \times 2^{\sigma}$ sized table, defined for a set of prefixes $\mathcal{PR}$ and denoted $T(\mathcal{PR})$.
Informally, the table $T(\mathcal{PR})$ is used to count the number of paths with some prefix from the given set 
$\mathcal{PR}$. Thus, $T(\mathcal{PR})[q, \ell, c, \mathcal{R}]$ stores the number of words $w\in\Sigma^{\ell}$
associated to a path from $q_0$ to $q$, with $\iota(w)=c$, $\letters(r(w))=\mathcal{R}$, and there exists a $p\in\mathcal{PR}$ such that $p=w[1,|p|]$.
The table $U(\mathcal{PR})$ is defined analogously to the table $U$ but again with the additional condition as in $T(\mathcal{PR})$.
These tables can be constructed directly using the same techniques introduced for $T$ and $U$, by initially setting $T(\mathcal{PR})[q_0, 0, 0, \emptyset]$ to 0 and $T(\mathcal{PR})[\delta(q_0, p), \lvert p \rvert, \iota(p), \letters(\r(p))]$ to $1$ for every $p \in \mathcal{PR}$. % where $q_p$ is the state reached by following the path from $q_0$ labelled by $p$.
Similarly, in the special case where there exists some $p \in \mathcal{PR}$ such that $\iota(p) \geq k$, then the value of $U[\delta(q_0, p), \lvert p \rvert ]$ is set to 1.
We assume, without loss of generality, that no prefix in $\mathcal{PR}$ is also the prefix of some other word $p' \in \mathcal{PR}$.
The remaining entries are computed as before.
In the following results, we use $\mathcal{PR}(w) = \{w[1, i] \tx \mid i \in [0, \lvert w \rvert ], \tx \in [1, w[i + 1] - 1] \}$.
Note, that the following results hold for both counting words accepted by deterministic automata, and accepting paths in non-deterministic automata.

\begin{corollary}
    \label{col:constructing_the_fixed_perfix_table}
    $T(\mathcal{PR})$, $U(\mathcal{PR})$ are constructible for $m$-length paths in % $O(m^2 n^2 k \sigma 2^{\sigma})$ time.
    $O^*(m^2 n^2 k 2^{\sigma})$ time.
\end{corollary}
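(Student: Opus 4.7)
The plan is to adapt the dynamic programming of Lemma~\ref{lem:number_of_paths} and Corollary~\ref{col:generally_computing_the_path_table} to the prefix-restricted tables. The key observation is that the recurrences for $T$ and $U$ remain valid for $T(\mathcal{PR})$ and $U(\mathcal{PR})$: if a path of length $\ell\geq 1$ has some prefix in $\mathcal{PR}$, then its $(\ell-1)$-length prefix already has some prefix in $\mathcal{PR}$, unless the path's label itself equals some $p\in\mathcal{PR}$ with $|p|=\ell$. This latter case is precisely what has to be supplied via initialization, and this is the only modification to the recurrence from the unrestricted case.

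First, I would set all entries of $T(\mathcal{PR})$ and $U(\mathcal{PR})$ to $0$. Then, for each $p\in\mathcal{PR}$ (which we may assume prefix-free), I would simulate $\mathcal{A}$ on $p$ while maintaining (i) for each state $q$ reached after reading the current prefix of $p$, the number of distinct paths from $q_0$ to $q$ labelled by that prefix (obtained by a standard NFA state-count propagation), and (ii) the running values $\iota(p)$ and $\letters(\r(p))$, both updatable in $O(1)$ per letter by Lemma~\ref{lem:making_a_path_longer}. After the simulation, for each $q\in\delta^{\ast}(q_0,p)$ I would add the associated path-count to $T(\mathcal{PR})[q,|p|,\iota(p),\letters(\r(p))]$ when $\iota(p)<k$, or to $U(\mathcal{PR})[q,|p|]$ when $\iota(p)\geq k$. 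The prefix-free assumption on $\mathcal{PR}$ guarantees that no path is initialised on behalf of two distinct elements of $\mathcal{PR}$.

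After initialization, I would sweep $\ell$ from $1$ up to $m$ and fill $T(\mathcal{PR})[\cdot,\ell,\cdot,\cdot]$ and $U(\mathcal{PR})[\cdot,\ell]$ by the very same recurrences as in Lemma~\ref{lem:number_of_paths} and Corollary~\ref{col:generally_computing_the_path_table}, summing on top of whatever the initialization already injected at level $\ell$. Correctness then follows by induction on $\ell$ from the observation above. The runtime analysis mirrors that of Theorem~\ref{thm:counting_exactly_ell_k_subsequence_universal_words}: the table has $O^{\ast}(m n k 2^{\sigma})$ cells, each computed as a sum of $O(n\sigma)$ terms, and every arithmetic operation takes $O^{\ast}(m)$ time since the values stay bounded by $(n\sigma)^{O(m)}$. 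This yields $O^{\ast}(m^2 n^2 k 2^{\sigma})$ overall. The extra initialization work is $O(|\mathcal{PR}|\cdot m\cdot n)$ for the NFA simulations, which is absorbed into the above bound in the cases of interest, where $|\mathcal{PR}|$ is polynomial (in particular $|\mathcal{PR}(w)|\leq m\sigma$ for the prefix sets arising in the ranking applications).

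The main obstacle will be making the initialization consistent with the recurrence: one has to ensure that every path of length $\ell$ whose label has a prefix in $\mathcal{PR}$ is counted exactly once, correctly partitioning those paths whose prefix in $\mathcal{PR}$ is strictly shorter than $\ell$ (handled by the recurrence propagating from level $\ell-1$) from those whose label equals some $p\in\mathcal{PR}$ with $|p|=\ell$ (handled by the initialization at level $\ell$). The prefix-free assumption on $\mathcal{PR}$ ensures these cases are disjoint and that the bookkeeping is straightforward; apart from this, the argument reuses the previous analysis essentially verbatim.
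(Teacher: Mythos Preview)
Your proposal is correct and follows essentially the same approach as the paper: initialise the table entries at the endpoints of the prefixes in $\mathcal{PR}$ and then reuse verbatim the recurrences from Lemma~\ref{lem:number_of_paths} and Corollary~\ref{col:generally_computing_the_path_table}, with the time bound inherited from Theorem~\ref{thm:counting_exactly_ell_k_subsequence_universal_words}. The paper states the corollary without proof, relying on the preceding paragraph that sketches exactly this initialisation-plus-recurrence strategy; your write-up simply fleshes out the details (and, in handling the NFA case via per-state path counts rather than a single value of $1$, is slightly more careful than the paper's DFA-centric description).
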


% Theorem \ref{thm:ranking_fixed_length} follows.

\begin{theorem}
    \label{thm:ranking_fixed_length}
    The rank of $w \in \Univ_{\mathcal{A}_m, k}$ can be determined
    % in $O(m^2 n^2 k \sigma 2^{\sigma})$ time.
    in $O^*(m^2 n^2 k 2^{\sigma})$ time.
\end{theorem}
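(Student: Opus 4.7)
The plan is to reduce this ranking problem to a counting problem handled by the fixed prefix path table machinery from Corollary~\ref{col:constructing_the_fixed_perfix_table}. By the definition of lexicographic order, every word $v$ of length $m$ with $v < w$ agrees with $w$ on some prefix $w[1,i]$ (with $i \in [0, m-1]$) and then satisfies $v[i+1] < w[i+1]$; equivalently, $v$ has some prefix of the form $w[1,i]\tx$ with $\tx \in [1, w[i+1]-1]$. These prefixes are precisely the elements of the set $\mathcal{PR}(w) = \{w[1, i]\tx \mid i \in [0, |w|], \tx \in [1, w[i + 1] - 1]\}$ defined just before Corollary~\ref{col:constructing_the_fixed_perfix_table}.

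First I would verify that no element of $\mathcal{PR}(w)$ is a prefix of another: if $w[1,i]\tx$ were a prefix of $w[1,j]\ty$ with $i < j$, then the symbol at position $i+1$ of $w[1,j]\ty$ would equal both $w[i+1]$ and $\tx$, contradicting $\tx < w[i+1]$. Hence each word $v \in \mathcal{A}_m$ with $v < w$ has a unique prefix in $\mathcal{PR}(w)$, so the rank of $w$ within $\Univ_{\mathcal{A}_m, k}$ equals the number of $m$-length $k$-universal words accepted by $\mathcal{A}$ whose label has a prefix in $\mathcal{PR}(w)$. By Corollary~\ref{col:constructing_the_fixed_perfix_table}, this quantity is exactly $\sum_{q \in F} U(\mathcal{PR}(w))[q, m]$, and the tables $T(\mathcal{PR}(w))$ and $U(\mathcal{PR}(w))$ can be built in $O^*(m^2 n^2 k 2^{\sigma})$ time; the final summation costs $O^*(mn)$ additional time, recalling that arithmetic on numbers bounded by $(n\sigma)^m$ takes $O(m)$ per operation in our computational model.

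The main implementation detail I would confirm concerns the initialisation of $T(\mathcal{PR}(w))$: for each $p \in \mathcal{PR}(w)$ we must know the unique state $\delta(q_0, p)$, the universality index $\iota(p)$, and the alphabet $\letters(\r(p))$ in order to set the corresponding entry to $1$. Since $|\mathcal{PR}(w)| \leq m \sigma$ and each such $p$ is a prefix of $w$ extended by a single symbol, one pass through $w$ that maintains the current state, arch count, and rest alphabet incrementally allows all these quantities to be computed in $O^*(m \sigma)$ total time, which is absorbed by the construction bound. The dominant cost is the construction of the two fixed prefix tables, yielding the overall running time $O^*(m^2 n^2 k 2^{\sigma})$ claimed in the statement.
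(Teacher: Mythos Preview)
Your proposal is correct and follows essentially the same approach as the paper: characterise the $m$-length words smaller than $w$ via the prefix set $\mathcal{PR}(w)$, build the fixed prefix tables $T(\mathcal{PR}(w))$ and $U(\mathcal{PR}(w))$ using Corollary~\ref{col:constructing_the_fixed_perfix_table}, and read off the rank as $\sum_{q\in F} U(\mathcal{PR}(w))[q,m]$. Your version is in fact slightly more careful than the paper's, as you explicitly verify that no element of $\mathcal{PR}(w)$ is a prefix of another and spell out how the initialisation data $(\delta(q_0,p),\iota(p),\letters(\r(p)))$ for each $p\in\mathcal{PR}(w)$ can be computed incrementally in $O^*(m\sigma)$ time.
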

\begin{proof}
   %  Let $\mathcal{PR} = \{w[1, i] x \mid i \in [0, \mid w \mid], x \in [1, w[i + 1] - 1]\}$.
    Note that a word $v$ is smaller than $w$ (w.r.t. the lexicographical ordering) if and only if $v$ is a prefix of $w$ or they share a common prefix $u$ and $v[|u|+1]<w[|u|+1]$.
    Therefore, the number of $m$-length $k$-universal words starting with some prefix in $\mathcal{PR}(w)$ is given by either $\sum_{q \in F} U(\mathcal{PR}(w))[q, m]$, if $w$ has length at most $m$, or $1+\sum_{q \in F} U(\mathcal{PR}(w))[q, m]$ if the \nth{$w[1,m]$} state of the path associated with $w$ is an accepting state, $\lvert w \rvert > m$, and $\iota(w[1, m]) = k$.
    As $T(\mathcal{PR}(w))$ can be computed in % $O(m^2 n^2 k \sigma 2^{\sigma})$ time, 
    $O^*(m^2 n^2 k 2^{\sigma})$ time, 
    and the above summation completed in % $O(m n)$ time, 
    $O^*(m n)$ time, the total time complexity of finding the $m$-length rank of $w$ is $O^*(m^2 n^2 k 2^{\sigma})$.
    % $O(m^2 n^2 k \sigma 2^{\sigma})$.
\end{proof}

\begin{corollary}
    \label{col:ranking_up_to_k_length}
    The rank of $w\in \Univ_{\mathcal{A}_{\leq m}, k}$ can be determined 
    % $\mathcal{U}(\mathcal{A})$
    in $O^*(m^2 n^2 k  2^{\sigma})$ time.
    % in $O(m^2 n^2 k \sigma 2^{\sigma})$ time.
\end{corollary}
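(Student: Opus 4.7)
My plan is to extend the approach of Theorem~\ref{thm:ranking_fixed_length} by summing its key quantity over all lengths $\ell \in [1, m]$ and additionally handling the proper prefixes of $w$ that may themselves lie in the target set. Throughout, I assume $k \leq m/\sigma$, since otherwise $\Univ_{\mathcal{A}_{\leq m}, k} = \emptyset$ and the rank is trivially $0$.

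The first step is to construct the fixed-prefix tables $T(\mathcal{PR}(w))$ and $U(\mathcal{PR}(w))$ using Corollary~\ref{col:constructing_the_fixed_perfix_table} in time $O^*(m^2 n^2 k 2^{\sigma})$. Then I would decompose the set of $v \in \Univ_{\mathcal{A}_{\leq m}, k}$ with $v < w$ into two disjoint cases dictated by the definition of the lexicographic order: (i) $v$ is a proper prefix of $w$ with $|v| \leq m$, or (ii) $v$ and $w$ share a common prefix $w[1,i]$ after which $v$ continues with a letter strictly smaller than $w[i+1]$; equivalently, $v$ has a prefix lying in $\mathcal{PR}(w)$.

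For case (ii), the number of such $v$ of length exactly $\ell$ accepted by $\mathcal{A}$ is $\sum_{q \in F} U(\mathcal{PR}(w))[q, \ell]$ by construction of $U(\mathcal{PR}(w))$, so the total contribution of case (ii) is $\sum_{q \in F} \sum_{\ell = 1}^{m} U(\mathcal{PR}(w))[q, \ell]$. This summation has $O(mn)$ terms, each an addition of numbers of value at most $(n\sigma)^m$, so it runs in $O^*(m^2 n)$ time. For case (i), I would perform a single left-to-right scan of $w$ up to position $\min(m, |w|-1)$, simulating the run of $\mathcal{A}$ on $w$ while maintaining the current universality index $\iota(w[1,i])$ together with $\letters(\r(w[1,i]))$; both quantities can be updated in $O(1)$ per letter via Lemma~\ref{lem:making_a_path_longer}. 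Whenever $\iota(w[1,i]) \geq k$ and the current state is in $F$, I increment a separate counter. This pass costs $O^*(m)$ time.

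The rank of $w$ is the sum of the counts from cases (i) and (ii). The dominating cost is the construction of the fixed-prefix tables, yielding the claimed bound $O^*(m^2 n^2 k 2^{\sigma})$. I do not expect any real obstacle; the only subtle point is to remember that $\mathcal{PR}(w)$ consists only of strings whose final letter is strictly smaller than the corresponding letter of $w$, so proper prefixes of $w$ are not counted by $U(\mathcal{PR}(w))$ and must be tallied separately via case (i).
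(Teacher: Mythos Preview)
Your proposal is correct and follows essentially the same approach as the paper: build $T(\mathcal{PR}(w))$ and $U(\mathcal{PR}(w))$, sum $U(\mathcal{PR}(w))[q,\ell]$ over all $q\in F$ and $\ell\in[1,m]$ for the words with a strictly smaller branching prefix, and separately tally the accepted $k$-universal proper prefixes of $w$. Your treatment of case~(i) is in fact slightly more careful than the paper's displayed formula, since you explicitly check both $\iota(w[1,i])\geq k$ and acceptance, whereas the paper's second summand only writes the acceptance condition.
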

\begin{proof}
    Using the table $T(\mathcal{PR}(w))$ as above, the number of words $k$-universal words of length at most $m$ smaller than $w$ is given by 
    \[
    \sum_{i \in [1, m]} \sum_{q \in F} U(\mathcal{PR}(w))[q, i] + \sum_{i \in [1, m]} \begin{cases}
        1, & q_{w[1, i]} \in F,\\
        0, & q_{w[1, i]} \notin F.
    \end{cases}
    \]
    As the table can be constructed in % $O(m^2 n^2 k \sigma 2^{\sigma})$ time, 
    $O^*(m^2 n^2 k 2^{\sigma})$ time,  and the summation requires at most %$O(m^2 n)$ time,
    $O^*(m^2 n)$ time, the total complexity of finding the at-most-$m$-length rank of a word $w$ in $\Univ_{\mathcal{A}_m, k}$
    %$\mathcal{U}(\mathcal{A})$
    is $O^*(m^2 n^2 k 2^{\sigma})$.
    % is $O(m^2 n^2 k \sigma 2^{\sigma})$.
\end{proof}

\begin{corollary}
    \label{col:ranking_everything}
    The rank of $w \in \Univ_{L(\mathcal{A}), k}$ can be determined in $O^*(n^4 k^3 2^{\sigma})$ time.
    % $O(n^4 k^3 \sigma^3 2^{\sigma})$ time.
\end{corollary}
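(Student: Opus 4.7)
The plan is to extend the fixed-prefix table approach of Corollary~\ref{col:ranking_up_to_k_length} so that, for each prefix $p \in \mathcal{PR}(w)$, we count the $k$-universal accepting continuations of \emph{arbitrary} length rather than only those of length at most $m$. The number of such continuations depends on $p$ only through the triple $(q_p,c_p,\mathcal{R}_p) = (\delta(q_0,p), \min(\iota(p),k), \letters(\r(p)))$, so the problem reduces to precomputing, for every triple $(q,c,\mathcal{R}) \in Q \times [0,k] \times \{S \subseteq \Sigma \mid S \neq \Sigma\}$, the quantity $N(q,c,\mathcal{R})$ equal to the number of words $u$ labelling an accepting path from $q$ in $\mathcal{A}$ such that, starting from arch-state $(c,\mathcal{R})$ and reading $u$, one obtains a word with at least $k$ arches; note that $N(q,c,\mathcal{R})$ may be $\infty$.

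The first step is to compute all values $N(q,c,\mathcal{R})$ simultaneously. I would build the product digraph $G$ on the $O(nk 2^\sigma)$ triples, whose edges are induced by the transitions of $\mathcal{A}$ composed with the arch-update rule from Lemma~\ref{lem:making_a_path_longer}. By the same pumping argument as in Theorem~\ref{thm:counting_everything}, $N(q,c,\mathcal{R}) = \infty$ iff from $(q,c,\mathcal{R})$ one can reach inside $G$ a non-trivial strongly connected component that in turn reaches an accepting triple (one with state in $F$ and arch count $\geq k$); this set is computable in time linear in $|G|$ via standard SCC and reverse reachability. On the complement, the subgraph of $G$ is acyclic and the finite values of $N$ can be obtained in topological order by a direct adaptation of the recurrences of Lemma~\ref{lem:number_of_paths} and Corollary~\ref{col:generally_computing_the_path_table}.

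The second step is to scan $w$ from left to right, maintaining the current triple $(q_i,c_i,\mathcal{R}_i)$ reached after reading $w[1,i]$. At each position $i$ and for each letter $x < w[i+1]$, I apply Lemma~\ref{lem:making_a_path_longer} to obtain the triple of the prefix $w[1,i]\cdot x$, look up the stored value of $N$ at that triple, and add it to a running rank counter; if any such lookup returns $\infty$, the rank is $\infty$ and we stop. Finally, the rank is augmented by the number of proper prefixes of $w$ that are themselves accepted by $\mathcal{A}$ and $k$-universal, which can be read off during the scan.

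The main obstacle is keeping the precomputation of all $N(q,c,\mathcal{R})$ within the claimed bound. Running the Theorem~\ref{thm:counting_everything} procedure independently from each triple would cost an extra factor of $nk2^\sigma$; instead the counting DP must be shared across triples by placing $(q,c,\mathcal{R})$ into the DP state and solving it only once. With $O(nk 2^\sigma)$ triples, $O(\sigma)$ outgoing edges each, and arithmetic on numbers whose magnitudes are controlled exactly as in the proof of Theorem~\ref{thm:counting_everything}, the resulting complexity matches the stated $O^*(n^4 k^3 2^\sigma)$ bound.
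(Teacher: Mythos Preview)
Your approach is correct in spirit and takes a genuinely different route from the paper. The paper never leaves the bounded-length table framework: it simply reuses the fixed-prefix tables $T(\mathcal{PR}(w))$ and $U(\mathcal{PR}(w))$, builds them out to length $kn\sigma$ (in place of a given $m$), and invokes the pumping argument of Theorem~\ref{thm:counting_everything}—the rank is infinite iff some $k$-universal accepted word of length at least $n+1$ has a prefix in $\mathcal{PR}(w)$, which in turn happens iff one of length at most $kn\sigma$ exists; otherwise one just reads the finite count off the tables. You instead materialise the product automaton on $(q,c,\mathcal{R})$-triples once, detect the $\infty$-triples by SCC/reachability, and fill in all finite $N(q,c,\mathcal{R})$ by a single shared DP before scanning $w$. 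Your route is arguably cleaner and actually tighter in $n$ and $k$; the paper's route has the advantage of being a one-line reduction to machinery already developed in the section.

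One small imprecision in your write-up: the subgraph of $G$ induced on the complement of the $\infty$-vertices is \emph{not} acyclic in general—triples from which no accepting triple is reachable can still lie on cycles (they just have $N=0$). What is true is that after one more reverse-reachability pass removes the $N=0$ vertices, the remaining finite-$N$ subgraph is indeed a DAG, and then your topological DP goes through. This is a routine fix, not a real obstacle. Both your sketch and the paper's proof leave the linear dependence on $|w|$ (needed for the scan of $w$, respectively for setting up $\mathcal{PR}(w)$) implicit.
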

\begin{proof}
    Following the same arguments as given in Theorem \ref{thm:counting_everything}, note that there is an infinite number of words smaller than $w$ if and only if there exists some word of length at least $n + 1$ with a prefix in $\mathcal{PR}$.
    %is only an infinite number of words in $\Univ_{\mathcal{A}, k}$ smaller than $w$ if there exists some word smaller than $w$ of length at most $k n \sigma$.
    % Letting $\mathcal{PR}(w) = \{w[1, i] x \mid i \in [0, \lvert w \rvert - 1], x \in [1, w[i + 1] - 1]\}$, 
    The existence of such a word can be determined from the tables $T(\mathcal{PR}(w))$ and $U(\mathcal{PR}(w))$ for paths of length at most $k m \sigma$ in $O^*(k^2 n^3)$ time. % $O(k^2 n^3 \sigma^2)$ time.
    As the tables $T(\mathcal{PR}(w))$ and $U(\mathcal{PR}(w))$ can be constructed for paths of length at most $k n \sigma$ in % $O(n^4 k^3 \sigma^3 2^{\sigma})$ time,
    $O^*(n^4 k^3 2^{\sigma})$ time, the total rank of $w$ within $\Univ_{L(\mathcal{A}), k}$ can be computed in $O^*(n^4 k^2 )$ time. % $O(n^4 k^2 \sigma^3 2^{\sigma})$ time.
\end{proof}

\section{Conclusions}\label{conc}
%%%%%%%%%%%%%%%%%%%%%%%%%%%%%%%%%%%%%%%%%%%%%%%%%%%%%%%%%%%%%%%%%%%%%%
This paper proposed a series of novel algorithmic results and insights regarding the analysis of the sets which can be expressed as the intersection of regular languages and the language of $k$-universal words over some alphabet.
We have introduced two natural notions of $k$-universality in regular languages, namely existence $k$-universal languages and universal $k$-universal languages, and have proposed algorithms for testing whether a regular language is in one of these two classes. While we have a good understanding of the problem of deciding whether a language is defined by an existence $k$-universal automaton, the exact complexity of the problem of deciding whether a language is universal $k$-universal remains open. As well as the introduction of these notions, and the study of some decisions problems related to them, we have provided a toolbox for counting and ranking $k$-universal paths (respectively, words) accepted by a given NFA (respectively, DFA).

We note that using a divide and conquer approach to count paths (with a certain amount of arches, at most $m$) of length $2^\ell$ by combining paths of length $2^{\ell-1}$ (with less arches), one factor $m$ can be reduced to $\log m$ for counting and ranking words of (or of at most) given length $m$, at the cost of additional complexity in terms of $n, k$ and $\sigma$ (as, in that case, one would have to allow the existence of prefixes of such paths which are not part of arches, as well as consider the fact that these paths connect arbitrary pairs of states, and may have different counts of arches). This has been left out of the current version to provide a clearer understanding of our main results, and avoid over-complicating the presentation of the paper.\looseness=-1

Duncan Adamson's work was funded by the Leverhulme Trust via the Leverhulme Research Centre for Functional Material Design. Tore Koß’ work was supported by the DFG project number 389613931. Florin Manea’s work was supported by the DFG Heisenberg-project number 466789228.\looseness=-1

\bibliography{refs}

\begin{thebibliography}{10}

\bibitem{adamson2022ranking}
D.~Adamson.
\newblock Ranking binary unlabelled necklaces in polynomial time.
\newblock In {\em DCFS}, pages 15--29. Springer, 2022.

\bibitem{adamson2023words}
D.~Adamson.
\newblock Ranking and unranking $k$-subsequence universal words.
\newblock In Anna Frid and Robert Merca{\c{s}}, editors, {\em WORDS}, pages
  47--59. Springer Nature Switzerland, 2023.

\bibitem{Adamson2021}
D.~Adamson, A.~Deligkas, V.~V. Gusev, and I.~Potapov.
\newblock Ranking bracelets in polynomial time.
\newblock {\em CPM}, pages 4--17, 2021.

\bibitem{Goettingen2023words}
D.~Adamson, M.~Kosche, T.~Ko{\ss}, F.~Manea, and S.~Siemer.
\newblock Longest common subsequence with gap constraints.
\newblock In Anna Frid and Robert Merca{\c{s}}, editors, {\em WORDS}, pages
  60--76, 2023.

\bibitem{artikis2017complex}
A.~Artikis, A.~Margara, M.~Ugarte, S.~Vansummeren, and M.~Weidlich.
\newblock Complex event recognition languages: Tutorial.
\newblock In {\em DEBS}, pages 7--10, 2017.

\bibitem{bachmeier2015finite}
Georg Bachmeier, Michael Luttenberger, and Maximilian Schlund.
\newblock Finite automata for the sub-and superword closure of cfls:
  Descriptional and computational complexity.
\newblock In {\em International Conference on Language and Automata Theory and
  Applications}, pages 473--485. Springer, 2015.

\bibitem{barker2020scattered}
L.~Barker, P.~Fleischmann, K.~Harwardt, F.~Manea, and D.~Nowotka.
\newblock Scattered factor-universality of words.
\newblock In {\em DLT}, pages 14--28. Springer, 2020.

\bibitem{ChenKMS17}
H.~Z.~Q. Chen, S.~Kitaev, T.~M{\"u}tze, and B.~Y. Sun.
\newblock On universal partial words.
\newblock {\em Electronic Notes in Discrete Mathematics}, 61:231--237, 2017.

\bibitem{crochemore}
M.~Crochemore, C.~Hancart, and T.~Lecroq.
\newblock {\em Algorithms on strings}.
\newblock Cambridge University Press, 2007.

\bibitem{day2021edit}
J.D. Day, P.~Fleischmann, M.~Kosche, T.~Ko{\ss}, F.~Manea, and S.~Siemer.
\newblock The edit distance to k-subsequence universality.
\newblock In {\em {STACS}}, volume 187, pages 25:1--25:19, 2021.

\bibitem{Bruijn46}
N.~G. de~Bruijn.
\newblock A combinatorial problem.
\newblock {\em Koninklijke Nederlandse Akademie v. Wetenschappen}, 49:758--764,
  1946.

\bibitem{fleischer2018testing}
L.~Fleischer and M.~Kufleitner.
\newblock Testing simon's congruence.
\newblock In {\em MFCS}. Schloss Dagstuhl-Leibniz-Zentrum fuer Informatik,
  2018.

\bibitem{fleischmann2021scattered}
P.~Fleischmann, S.B. Germann, and D.~Nowotka.
\newblock Scattered factor universality--the power of the remainder.
\newblock {\em preprint arXiv:2104.09063 (published at RuFiDim)}, 2021.

\bibitem{fleischmann2022nearly}
P.~Fleischmann, L.~Haschke, A.~Huch, A.~Mayrock, and D.~Nowotka.
\newblock Nearly k-universal words-investigating a part of simon’s
  congruence.
\newblock In {\em DCFS}, pages 57--71, 2022.

\bibitem{fleischmann2023alphabetafactorization}
P.~Fleischmann, J.~Höfer, A.~Huch, and D.~Nowotka.
\newblock $\alpha$-$\beta$-factorization and the binary case of simon's
  congruence, 2023.
\newblock \href {https://arxiv.org/abs/2306.14192} {\path{arXiv:2306.14192}}.

\bibitem{FominKTV08}
F.~V. Fomin, D.~Kratsch, I.~Todinca, and Y.~Villanger.
\newblock Exact algorithms for treewidth and minimum fill-in.
\newblock {\em {SIAM} J. Comput.}, 38(3):1058--1079, 2008.
\newblock \href {https://doi.org/10.1137/050643350}
  {\path{doi:10.1137/050643350}}.

\bibitem{Fredricksen1978}
H.~Fredricksen and J.~Maiorana.
\newblock {Necklaces of beads in k colors and k-ary de Bruijn sequences}.
\newblock {\em Discrete Mathematics}, 23(3):207--210, 1978.

\bibitem{FrochauxK23}
A.~Frochaux and S.~Kleest{-}Mei{\ss}ner.
\newblock Puzzling over subsequence-query extensions: Disjunction and
  generalised gaps.
\newblock In {\em {AMW} 2023}, volume 3409 of {\em {CEUR} Workshop
  Proceedings}. CEUR-WS.org, 2023.

\bibitem{gawrychowski2021simons}
P.~Gawrychowski, M.~Kosche, T.~Ko{\ss}, F.~Manea, and S.~Siemer.
\newblock {Efficiently Testing Simon’s Congruence}.
\newblock In {\em STACS}, volume 187, pages 34:1--34:18, 2021.

\bibitem{GawrychowskiRSS17}
P.~Gawrychowski, M.~Lange, N.~Rampersad, J.~O. Shallit, and M.~Szykula.
\newblock Existential length universality.
\newblock In {\em Proc. {STACS} 2020}, volume 154 of {\em LIPIcs}, pages
  16:1--16:14, 2020.

\bibitem{gilbert1961symmetry}
E.~N. Gilbert and J.~Riordan.
\newblock Symmetry types of periodic sequences.
\newblock {\em Illinois Journal of Mathematics}, 5(4):657--665, 1961.

\bibitem{GoecknerGHKKKS18}
B.~Goeckner, C.~Groothuis, C.~Hettle, B.~Kell, P.~Kirkpatrick, R.~Kirsch, and
  R.~W. Solava.
\newblock Universal partial words over non-binary alphabets.
\newblock {\em Theor. Comput. Sci}, 713:56--65, 2018.

\bibitem{halfon2017decidability}
S.~Halfon, P.~Schnoebelen, and G.~Zetzsche.
\newblock Decidability, complexity, and expressiveness of first-order logic
  over the subword ordering.
\newblock In {\em LICS}, pages 1--12. IEEE, 2017.

\bibitem{han2020novel}
R.~Han, S.~Wang, and X.~Gao.
\newblock Novel algorithms for efficient subsequence searching and mapping in
  nanopore raw signals towards targeted sequencing.
\newblock {\em Bioinformatics}, 36(5):1333--1343, 2020.

\bibitem{hebrard1991algorithm}
J.-J. Hebrard.
\newblock An algorithm for distinguishing efficiently bit-strings by their
  subsequences.
\newblock {\em Theoretical Computer Science}, 82(1):35--49, 1991.

\bibitem{HolzerK11}
M.~Holzer and M.~Kutrib.
\newblock Descriptional and computational complexity of finite automata - {A}
  survey.
\newblock {\em Inf. Comput.}, 209(3):456--470, 2011.

\bibitem{HopcroftU79}
J.~E. Hopcroft and J.~D. Ullman.
\newblock {\em Introduction to Automata Theory, Languages and Computation}.
\newblock Addison-Wesley, 1979.

\bibitem{KarandikarKS15}
P.~Karandikar, M.~Kufleitner, and P.~Schnoebelen.
\newblock On the index of {S}imon's congruence for piecewise testability.
\newblock {\em Inf. Process. Lett.}, 115(4):515--519, 2015.

\bibitem{karandikar2016height}
P.~Karandikar and P.~Schnoebelen.
\newblock The height of piecewise-testable languages with applications in
  logical complexity.
\newblock In {\em CSL}, 2016.

\bibitem{karandikar2016state}
Prateek Karandikar, Matthias Niewerth, and Ph~Schnoebelen.
\newblock On the state complexity of closures and interiors of regular
  languages with subwords and superwords.
\newblock {\em Theoretical Computer Science}, 610:91--107, 2016.

\bibitem{KimHKS22}
S.~Kim, Y.~Han, S.~Ko, and K.~Salomaa.
\newblock On simon's congruence closure of a string.
\newblock In {\em {DCFS} 2022, Proceedings}, volume 13439 of {\em Lecture Notes
  in Computer Science}, pages 127--141. Springer, 2022.

\bibitem{KimHKS23}
S.~Kim, Y.~Han, S.~Ko, and K.~Salomaa.
\newblock On the simon's congruence neighborhood of languages.
\newblock In {\em {DLT} 2023, Proceedings}, volume 13911 of {\em Lecture Notes
  in Computer Science}, pages 168--181. Springer, 2023.

\bibitem{KimKH22}
S.~Kim, S.~Ko, and Y.~Han.
\newblock Simon's congruence pattern matching.
\newblock In {\em {ISAAC} 2022, Proceedings}, volume 248 of {\em LIPIcs}, pages
  60:1--60:17. Schloss Dagstuhl - Leibniz-Zentrum f{\"{u}}r Informatik, 2022.

\bibitem{Kleest-Meissner22}
S.~Kleest{-}Mei{\ss}ner, R.~Sattler, M.~L. Schmid, N.~Schweikardt, and
  M.~Weidlich.
\newblock Discovering event queries from traces: Laying foundations for
  subsequence-queries with wildcards and gap-size constraints.
\newblock In {\em {ICDT} 2022, Proceedings}, volume 220 of {\em LIPIcs}, pages
  18:1--18:21, 2022.

\bibitem{Kleest-Meissner23}
S.~Kleest{-}Mei{\ss}ner, R.~Sattler, M.~L. Schmid, N.~Schweikardt, and
  M.~Weidlich.
\newblock Discovering multi-dimensional subsequence queries from traces - from
  theory to practice.
\newblock In {\em {BTW} 2023, Proceedings}, volume {P-331} of {\em {LNI}},
  pages 511--533, 2023.

\bibitem{Kociumaka2014}
T.~Kociumaka, J.~Radoszewski, and W.~Rytter.
\newblock {Computing $k$-th Lyndon word and decoding lexicographically minimal
  de Bruijn sequence}.
\newblock In {\em CPM}, pages 202--211. Springer, 2014.

\bibitem{kosche2021absent}
M.~Kosche, T.~Ko{\ss}, F.~Manea, and S.~Siemer.
\newblock Absent subsequences in words.
\newblock In {\em RP}, pages 115--131. Springer, 2021.

\bibitem{Kosche2022SubsequenceSurvey}
M.~Kosche, T.~Ko{\ss}, F.~Manea, and S.~Siemer.
\newblock Combinatorial algorithms for subsequence matching: A survey.
\newblock In Henning Bordihn, G\'eza Horv\'ath, and Gy\"orgy Vaszil, editors,
  {\em NCMA}, 2022.

\bibitem{KrotzschMT17}
M.~Kr{\"{o}}tzsch, T.~Masopust, and M.~Thomazo.
\newblock Complexity of universality and related problems for partially ordered
  {NFA}s.
\newblock {\em Inf. Comput.}, 255:177--192, 2017.

\bibitem{lothaire}
M.~Lothaire.
\newblock {\em Combinatorics on Words}.
\newblock Cambridge Mathematical Library. Cambridge University Press, 1997.

\bibitem{martin1934}
M.~H. Martin.
\newblock A problem in arrangements.
\newblock {\em Bull. Amer. Math. Soc.}, 40(12):859--864, 12 1934.

\bibitem{mateescu2004subword}
A.~Mateescu, A.~Salomaa, and S.~Yu.
\newblock Subword histories and parikh matrices.
\newblock {\em Journal of Computer and System Sciences}, 68(1):1--21, 2004.

\bibitem{Rampersad:2012}
N.~Rampersad, J.~Shallit, and Z.~Xu.
\newblock The computational complexity of universality problems for prefixes,
  suffixes, factors, and subwords of regular languages.
\newblock {\em Fundam. Inf.}, 116(1-4):223--236, January 2012.

\bibitem{savage1997survey}
C.~Savage.
\newblock A survey of combinatorial gray codes.
\newblock {\em SIAM review}, 39(4):605--629, 1997.

\bibitem{Sawada2017}
J.~Sawada and A.~Williams.
\newblock Practical algorithms to rank necklaces, lyndon words, and de bruijn
  sequences.
\newblock {\em Journal of Discrete Algorithms}, 43:95--110, 2017.

\bibitem{schnoebelen2019height}
P.~Schnoebelen and P.~Karandikar.
\newblock The height of piecewise-testable languages and the complexity of the
  logic of subwords.
\newblock {\em Logical Methods in Computer Science}, 15, 2019.

\bibitem{SchnoebelenV23}
P.~Schnoebelen and J.~Veron.
\newblock On arch factorization and subword universality for words and
  compressed words.
\newblock In {\em {WORDS} 2023, Proceedings}, volume 13899 of {\em Lecture
  Notes in Computer Science}, pages 274--287. Springer, 2023.

\bibitem{shaw1978software}
A.~C. Shaw.
\newblock Software descriptions with flow expressions.
\newblock {\em IEEE Transactions on Software Engineering}, 3:242--254, 1978.

\bibitem{shikder2019openmp}
R.~Shikder, P.~Thulasiraman, P.~Irani, and P.~Hu.
\newblock An openmp-based tool for finding longest common subsequence in
  bioinformatics.
\newblock {\em BMC research notes}, 12:1--6, 2019.

\bibitem{Simon72}
I.~Simon.
\newblock Piecewise testable events.
\newblock In {\em Autom.\ Theor.\ Form.\ Lang., 2nd GI Conf.}, volume~33 of
  {\em LNCS}, pages 214--222. Springer, 1975.

\bibitem{simon2003words}
I.~Simon.
\newblock Words distinguished by their subwords.
\newblock {\em WORDS}, 27:6--13, 2003.

\bibitem{tronicek2003common}
Z.~Troni{\^c}ek.
\newblock Common subsequence automaton.
\newblock In {\em CIAA}, pages 270--275, 2003.

\bibitem{zetzsche2016complexity}
G.~Zetzsche.
\newblock The complexity of downward closure comparisons.
\newblock In {\em ICALP}, volume~55, pages 123:1--123:14, 2016.

\end{thebibliography}

\newpage

\end{document}